\documentclass[conference]{IEEEtran}
\IEEEoverridecommandlockouts
\usepackage{cite}
\usepackage{amsmath,amssymb,amsfonts}
\usepackage{amsthm}
\usepackage{textcomp}
\usepackage{xcolor}
\usepackage{multirow}
\usepackage{algorithm}
\usepackage{algpseudocode}
\usepackage{graphicx}
\usepackage{caption}
\usepackage{subcaption}
\usepackage{booktabs}
\usepackage{mathrsfs}
\newtheorem{lemma}{Lemma}
\newtheorem{proposition}[lemma]{Proposition}
\newtheorem{corollary}[lemma]{Corollary}
\newtheorem{definition}[lemma]{Definition}
\newtheorem{theorem}[lemma]{Theorem}
\definecolor{mplgreen}{rgb}{0.12, 0.50, 0.12}
\definecolor{mplblue}{rgb}{0.15, 0.42, 0.65}
\usepackage[colorlinks=true, urlcolor=blue, linkcolor=blue, citecolor=red]{hyperref}
\def\BibTeX{{\rm B\kern-.05em{\sc i\kern-.025em b}\kern-.08em
    T\kern-.1667em\lower.7ex\hbox{E}\kern-.125emX}}
\newenvironment{CompactItemize}{
\begin{list}{\tiny$\bullet$}{%
\setlength{\leftmargin}{10pt}
\setlength{\itemindent}{0pt}
\setlength{\topsep}{1pt}
\setlength{\itemsep}{1pt}
}}
{\end{list}}

\begin{document}

\title{QASP: Query-Adaptive Robust Vector Search Policy}

\author{\IEEEauthorblockN{Hakan Ferhatosmanoglu$^*$}
\IEEEauthorblockA{
\textit{Amazon}\\
London, UK \\
hakanf@amazon.com}
\and
\IEEEauthorblockN{Kushal Kumar$^*$}
\IEEEauthorblockA{
\textit{Amazon}\\
New York, USA \\
kushlku@amazon.com}
\and
\IEEEauthorblockN{Tal Wagner}
\IEEEauthorblockA{
\textit{Amazon and Tel-Aviv University}\\
Tel Aviv, Israel \\
talw@amazon.com}
\and
\IEEEauthorblockN{Andy Warfield}
\IEEEauthorblockA{
\textit{Amazon}\\
Vancouver, Canada \\
warfield@amazon.com}
\thanks{$^*$Equal contribution.}
}

\maketitle

\begin{abstract}
A fundamental challenge of vector search is achieving consistently high recall while minimizing computational costs. Fixed search parameters cause significant performance variance across queries, and conventional evaluation on average recall masks these per-query disparities. We introduce QASP (Query-Adaptive robust vector Search Policy), which predicts the complete recall progression curve per query via a single upfront supervised regression, from which a search policy is derived for any recall target; this avoids iterative model invocations during search or separate predictors per target. By predicting normalized recall values with scale-invariant features and pre-search inference, QASP generalizes across recall targets, index configurations, and datasets. Its fine-grained progress predictions further enable a lightweight reactive complement that adjusts search depth based on predicted-versus-observed deviations without additional inference. We prove that QASP requires a finite training sample independent of dataset size and dimensionality, that its loss exceeds the irreducible lower bound of any fixed policy by a vanishing margin, and that its data access savings over fixed probing grow exponentially in intrinsic dimensionality. Experimentally, QASP achieves significantly lower recall variance and deviation from target, higher query satisfaction rate, and scales to large data and hierarchical indices without retraining, achieving 99\% recall with 80\% less data access.
\end{abstract}

\begin{IEEEkeywords}
Vector Search Policy, Approximate Nearest Neighbor Search, Proactive Policy Learning
\end{IEEEkeywords}

\section{Introduction}
Given a query vector $q \in \mathbb{R}^d$ and a dataset $X \subset \mathbb{R}^d$, vector (similarity) search aims to find the $k$ vectors $\{x^q_1, \dots, x^q_k\} \subset X$ that minimize $\text{dist}(q, x^q_{\cdot})$, where $\text{dist}(\cdot,\cdot)$ is a notion of distance.  
The fraction of $k$ closest vectors found is the \emph{recall} for that query. A vector search policy determines the search parameters to achieve high recall with low cost. 

Practitioners typically rely on heuristic guidelines to identify search parameters based on average performance, generating rules such as accessing a fixed proportion $\alpha \in [0.05, 0.1]$ of the dataset or $\lfloor \sqrt{L} \rfloor$ out of $L$ partitions \cite{lancedb_guide, faiss_guidelines, opensearch_guide}. Query-agnostic settings fail to capture heterogeneous query difficulty and the resulting variation in computational needs \cite{laet, quake, 11112943, 11113056}. This causes over-reading for easy queries and under-reading for difficult ones. Conventional evaluation methods compound this problem by focusing on average recall, which masks per-query disparities. Figure \ref{fig:motivation} illustrates how query-agnostic policies access significantly more data than the query-adaptive approach. 

Our goal is to design a proactive search policy that consistently achieves high recall with minimal performance variance across queries regardless of difficulty. We present QASP, \textbf{Q}uery-\textbf{A}daptive robust vector \textbf{S}earch \textbf{P}olicy, which addresses this through supervised regression that predicts the complete recall progression curve, from which a policy is derived for any recall target. 
A single upfront inference produces the fine-grained predictions, departing from recent approaches that invoke models iteratively during search or design separate predictors per recall target. QASP's formulation predicts normalized recall values, and combined with scale-invariant features and pre-search inference, admits richer architectures and target-agnostic deployment. QASP’s fine-grained progress predictions also enable a lightweight reactive complement that treats each progress estimate as a testable hypothesis and adjusts search depth based on deviations between predicted and observed discovery rates, without additional model inference.

QASP is designed for partitioning-based indices, e.g., Inverted File (IVF), clustering, quantization, multi-dimensional trees, due to their discrete, independent units that enable highly efficient upfront recall estimation with minimal training data, and enable proactive parameter selection and resource allocation. Partitioning-based methods are widely preferred in production systems due to their memory efficiency, parallelization across computational and storage units, compatibility with distributed architectures, and ability to achieve sublinear scaling \cite{sunscaling}.

\begin{figure}[t]
    \centering
    \includegraphics[width=0.9\linewidth]{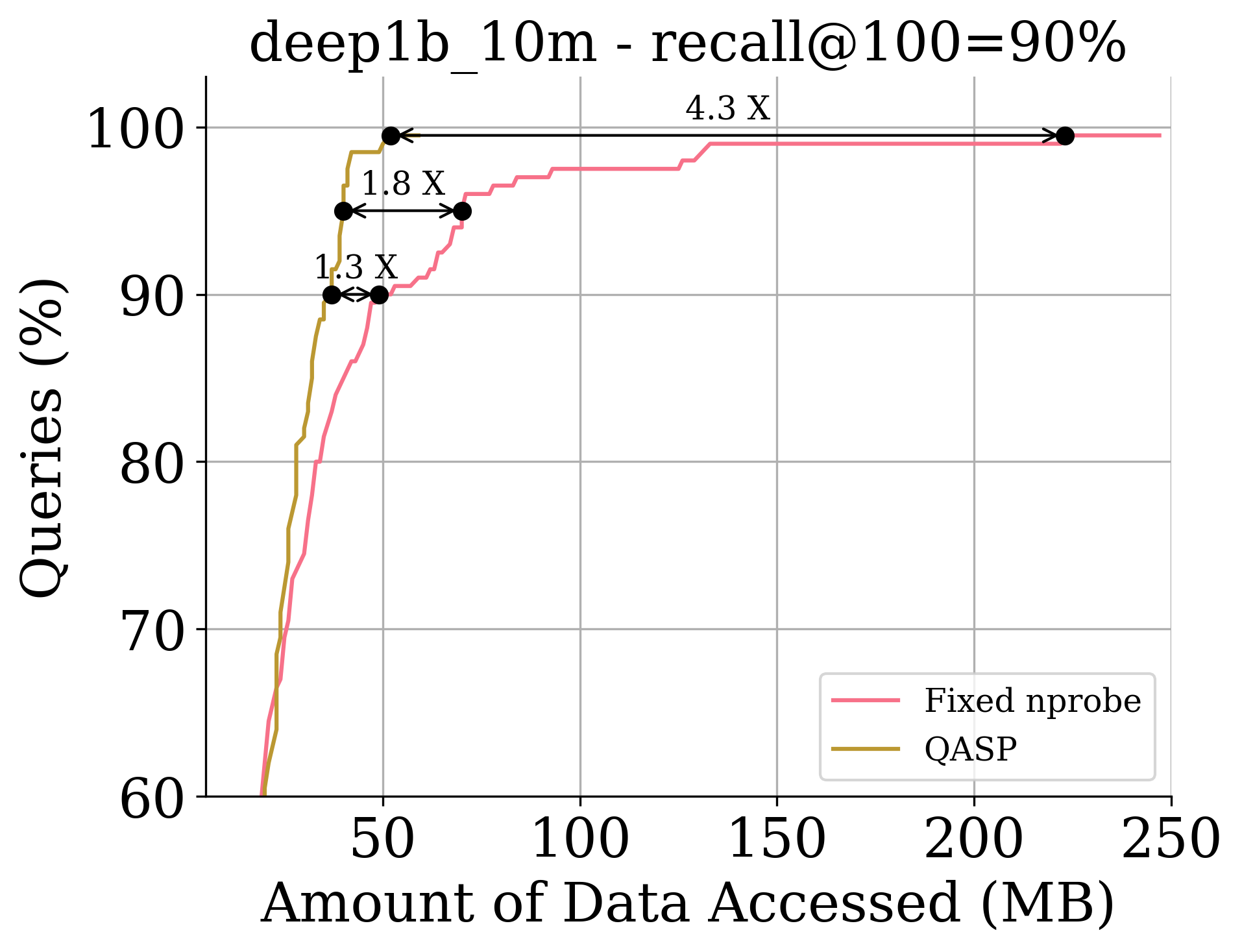}
    \caption{Comparing fixed and adaptive policies: Query satisfaction rate (\% of queries achieving 90\% recall) versus data accessed (Deep1B 10M dataset). For fixed nprobe, we gradually increase probes and plot query satisfaction and data accessed. QASP adaptively sets parameters for the same query satisfaction. Fixed policy consistently over-reads, with the gap widening as more queries reach recall target.}
    \label{fig:motivation}
\vspace{-10pt}
\end{figure}

We establish theoretical guarantees for QASP. We prove that a finite sample of queries suffices for training, independent of dataset size and dimensionality. We show a fundamental lower bound on the population loss of any fixed probe policy, determined by the variance in query difficulty, and show that QASP's loss exceeds that of the optimal fixed policy by only a negligible margin that vanishes with increased data, establishing both generalization and competitive performance guarantees. Beyond loss, we show that QASP achieves strictly lower expected data access than fixed probing, with the savings growing exponentially in the data's intrinsic  dimensionality as query difficulty heterogeneity increases.
 
To our knowledge, this is also the first work to explore transfer learning and domain adaptation in vector search, where prior solutions are predominantly coupled to specific index configurations and recall targets. We design normalized scale-invariant features that integrate query difficulty signals with index characteristics, and express both features and predictions as normalized ratios, enabling the learned function to better generalize across index configurations, recall targets, and datasets. Through ablation studies and feature importance analysis, we identify the most influential features for recall prediction.  

Experimental evaluation confirms that QASP consistently outperforms baselines and achieves around $57.7\%$ lower recall variance, $33.6\%$ lower deviation from recall target, and $7.3\%$ higher query satisfaction rate with similar or lower data access. QASP transfers effectively across datasets and index configurations with zero-shot or minimal fine-tuning, and scales to larger data and to hierarchical indices achieving 99\% recall with 80\% less data access.

To summarize, this paper makes following contributions:
\begin{CompactItemize}
\item We introduce QASP, a learning-based query-adaptive search that predicts fine-grained recall progress, from which search policies are derived for any recall target. QASP enables search optimization beyond early termination, cross-domain transfer, and a lightweight reactive complement that can monitor predictions during search.

\item 
We establish theoretical bounds on sample size to train QASP to near optimality. We show that fixed policies leave an inherent sub-optimality gap, and prove that QASP is guaranteed to perform within this gap up to a negligible margin. We further prove that QASP achieves strictly lower expected data access, with savings growing exponentially in intrinsic dimensionality.

\item We conduct experiments using both standard and query variability-aware evaluation metrics. QASP achieves significant performance improvements over baselines, with further gains on large datasets, at high recall targets, and for hard queries, where traditional approaches struggle with both consistency and cost.

\end{CompactItemize}

\section{Query-Adaptive Robust Vector Search Policy}
We formalize a learning framework to design a robust search policy based on predicting fine-grained recall contributions of each index partition to optimize per-query performance. Ideal ML-enhanced search should support low inference overhead. Partitioning-based data organization offers a well-suited setting that makes this feasible by decomposing search into units whose recall contributions can be predicted separately and upfront. We present our framework concretely for clustering/IVF (Inverted File) indices \cite{ferhatosmanoglu2001clusterindex, vqindex2002, jegou2010product, chen2021spann, 
manohar2024parlayann}, though the approach extends to any partitioning-based organization, including hierarchical clustering as we study in Section~\ref{sec:other_indices}. 
Partitioning-based indices are the core component of large-scale vector databases \cite{chen2021spann, vexless, jayaram2019diskann, oakley2025squash, manohar2024parlayann}.

An IVF-type index $I(D)$ built over a dataset $D \subset \mathbb{R}^d$ is meant to improve the computational cost of fully exhaustive search. To this end, it partitions $D$ into $L$ non-overlapping partitions (clusters) $C_1,\ldots,C_L$ with corresponding centroids $\mu_1,\ldots,\mu_L$. Given a query $q\in\mathbb{R}^d$, exhaustive search is performed only on the $l$ top-ranked clusters, where $l \ll L$, and the clusters are ranked by their centroid distance from the query. This raises the question of how to set $l$, which is called the number of \emph{probes}. Commonly, $l$ is set to a fixed value for all queries (see Section~\ref{sec:related}), which we refer to as \emph{fixed probe policy}. Our goal is to leverage supervised learning in order to predict the optimal $l$ per query.

To start, we formally define query-dependent probe policies.
\begin{definition}
    A \emph{probe policy} is a map $g:\mathbb{R}^d\times[0,1]\rightarrow\mathbb{N}$, which maps a query $q\in\mathbb{R}^d$ and a recall target $r\in[0,1]$ to a number $l=g(q,r)$ of top-ranked clusters to probe for $q$.   
    The policy is \emph{valid} for an index $I(D)$ with $L$ clusters if $1\leq g(q,r)\leq L$ for all $q\in\mathbb{R}^d$ and $r>0$ (thus, the policy instructs us to probe at least one cluster and no more than all clusters per query).
\end{definition}

Given a query set $Q\subset\mathbb{R}^d$ and recall target $r^*\in[0,1]$, we say that a policy $g$ is \emph{optimal} for $Q$ and $r^*$ if it probes exactly the minimal number of clusters necessary to achieve recall $r^*$ on each query $q \in Q$. Formally, let $R_{I(D)}(q,l)$ denote the recall attained for a query $q$ when probing its $l$ top-ranked clusters in $I(D)$. Then, the optimal number of probes $\ell^*$ can be defined as follows:
 \begin{equation}\label{eq:opt_probes}
 \ell^*(q,r) := \min\{l:R_{I(D)}(q,l)\geq r\} . 
 \end{equation}
An optimal policy $g^*$ for $Q$ and $r^*$ satisfies $g^*(q,r^*) = \ell^*(q,r^*)$ for all $q\in Q$. 
This does not specify the behavior of $g$ on queries $q\in\mathbb{R}^d\setminus Q$, i.e., it does not specify how to generalize beyond $Q$. 

\subsection{Ordinal Regression}
Our goal is thus to find a policy which is optimal (or close to it) on $Q$, while also generalizing to unseen queries $q\in\mathbb{R}^d\setminus Q$ and to recall values beyond $r^*$. 
This can be naturally cast as a supervised learning problem, where we train a machine learning model to predict the behavior of the index $I(D)$ on the workload $Q$, using its observed performance on $Q$ as well as the ground-truth nearest neighbors of $Q$ in $D$ as labels to train on. 
Supervised learning is useful here for two key properties: on the one hand, ML models can be efficiently trained to find high-quality solutions in the large function spaces, like that of probe policies; on the other hand, if trained with suitable features, ML models are known to generalize well to unseen inputs.

We now formalize an appropriate supervised learning problem. Let $\Theta$ be the parameter space for a parameterized class $\mathcal{G}=\{g_\theta:\theta\in\Theta\}$ of probe policies. To find an optimal probe policy in $\mathcal{G}$ for queries $Q$ and recall target $r^*$, consider the optimization problem:
\begin{equation}\label{eq:learn_g}
  \theta^* = \mathrm{arg}\min_{\theta\in\Theta}\frac1{L^2|Q|}\sum_{q\in Q}\left(\ell^*(q,r^*) - g_{\theta}(q,r^*)\right)^2 . 
\end{equation}
This is an ordinal regression problem, since its goal is to learn a function $g_{\theta^*}$ whose output is an ordinal value $l\in\{1,\ldots,L\}$. The normalization by $L^2|Q|$ ensures that the minimized loss is in $[0,1]$ for convenience and does not alter the minimization problem.  

\subsection{Recall-based Regression}\label{sec:recallbased}
Since ordinal regression is notoriously difficult to optimize \cite{pedregosa2017consistency}, we develop an alternative recall-based approach for learning the probe-policy. To this end, we learn a \emph{recall predictor} as an intermediate step. It maps a query $q$ and number of probes $l$ to the recall it expects to achieve for that query by probing the index its $l$ top-ranked clusters. A recall predictor gives rise to a probe policy by using the minimal number probes predicted to attain the recall target. 
\begin{definition}
    Given an index $I(D)$ with $L$ clusters, a \emph{recall predictor} is a function $f:\mathbb{R}^d\times\mathbb{N}\rightarrow[0,1]$ that maps a query and probe count to a predicted recall value.
\end{definition}
\begin{definition}
    A recall predictor $f$ induces a probe policy $g^{[f]}$ defined as
    \begin{equation}\label{eq:inducedpolicy}
    g^{[f]}(q,r) := \min\{l:f(q,l)\geq r\}.
    \end{equation}
\end{definition}

Let $\widetilde\Theta$ parameterize a class of recall predictors, $\mathcal{F}=\{f_{\theta}:\theta\in\widetilde\Theta\}$. The induced class of parameterized probe policies is $\mathcal{G}=\{g_{\theta}=g^{[f_{\theta}]}:\theta\in\widetilde\Theta\}$. 
We learn a recall predictor by optimizing an L2 loss:
\begin{equation}\label{eq:learn_f}
  \theta^* = \mathrm{arg}\min_{\theta\in\Theta}\frac1{L|Q|}\sum_{q\in Q}\sum_{l=1}^L\left(f_{\theta}(q,l)-R_{I(D)}(q,l)\right)^2 . 
\end{equation}

Learning a recall predictor as a surrogate for learning a probe policy is justified by the following consistency property.
\begin{proposition}
    Suppose $\theta^*\in\widetilde\Theta$ is such that $f_{\theta^*}$ attains zero loss in eq.~(\ref{eq:learn_f}). Then $g_{\theta^*}=g^{[f_{\theta^*}]}$ attains zero loss in eq.~(\ref{eq:learn_g}).
\end{proposition}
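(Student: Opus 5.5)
The plan is to argue pointwise. Zero loss in eq.~(\ref{eq:learn_f}) is a sum of nonnegative squared terms, so every term must vanish. Hence $f_{\theta^*}(q,l)=R_{I(D)}(q,l)$ for every $q\in Q$ and every $l\in\{1,\ldots,L\}$. Since the loss in eq.~(\ref{eq:learn_g}) is also a sum of nonnegative squares, it suffices to show $g^{[f_{\theta^*}]}(q,r^*)=\ell^*(q,r^*)$ for each $q\in Q$.

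Fix $q\in Q$. Compare the two sets whose minima define these quantities:
\[
S_f=\{l: f_{\theta^*}(q,l)\ge r^*\}, \qquad S_R=\{l: R_{I(D)}(q,l)\ge r^*\}.
\]
These are exactly the sets in eq.~(\ref{eq:inducedpolicy}) and eq.~(\ref{eq:opt_probes}). The two functions agree on $\{1,\ldots,L\}$, so $S_f\cap\{1,\ldots,L\}=S_R\cap\{1,\ldots,L\}$.

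The main obstacle is a domain subtlety. The minima in the definitions range over all of $\mathbb{N}$, while the loss only constrains $f_{\theta^*}$ on $\{1,\ldots,L\}$. The case $l=0$ or $l>L$ could in principle make $\min S_f$ differ from $\min S_R$. I would handle this in two steps.

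\emph{Upper end.} Probing all $L$ clusters is exhaustive search, so $R_{I(D)}(q,L)=1\ge r^*$. Thus $L\in S_R$, and therefore $L\in S_f$. Both minima are then attained within $\{1,\ldots,L\}$, provided no element below $1$ is counted.

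\emph{Lower end.} I would read $\mathbb{N}$ as starting at $1$, consistent with the validity requirement $g(q,r)\ge 1$. Alternatively, one can note that probes are only meaningful for $l\in\{1,\ldots,L\}$ and restrict the minimum to that range, as is implicit in the definitions. With this convention both minima are taken over the same finite set, so they coincide: $g_{\theta^*}(q,r^*)=\ell^*(q,r^*)$.

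Summing the zero squared differences over $q\in Q$ gives zero loss in eq.~(\ref{eq:learn_g}). The same argument in fact works for every target $r\in[0,1]$, not only $r^*$.
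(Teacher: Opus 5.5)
Your proof is correct and follows the same route as the paper's: zero loss in eq.~(\ref{eq:learn_f}) forces $f_{\theta^*}(q,l)=R_{I(D)}(q,l)$ termwise, and substituting this into eqs.~(\ref{eq:opt_probes}) and~(\ref{eq:inducedpolicy}) makes the two minima coincide. Your extra care goes beyond the paper's one-line argument in three places: restricting to $q\in Q$, and, for the range of the minimum, using $R_{I(D)}(q,L)=1$ (so both minima are attained in $\{1,\ldots,L\}$) and the convention that $\mathbb{N}$ starts at $1$.
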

\begin{proof}
    Zero loss in eq.~(\ref{eq:learn_f}) implies that $f_{\theta^*}(q,l)=R_{I(D)}(q,l)$ for all $q,l$. Plugging this in eqs.~(\ref{eq:opt_probes})~and~(\ref{eq:inducedpolicy}) yields $\ell^*(q,r)=g^{[f_{\theta^*}]}(q,r)$ for all $q,r$. This yields zero loss in eq.~(\ref{eq:learn_g}).
\end{proof}

Recall-based regression (eq.~(\ref{eq:learn_f})) offers numerous advantages over direct ordinal regression for the probe policy (eq.~(\ref{eq:learn_g})). First, it is more tractable to optimize \cite{pedregosa2017consistency}. Second, since its target predictions are values in a fixed range $[0,1]$, independent of the number of clusters $L$, the same induced policy is transferrable across indices with different numbers of clusters. Third, the learning problem in eq.~(\ref{eq:learn_f}), unlike that in eq.~(\ref{eq:learn_g}), is independent of a recall target $r^*$, and thus generalizes better across different recall values. Fourth, the loss function in eq.~(\ref{eq:learn_f}) directly involves $L$ numerical values obtained from the index $I(D)$ per query (the values $\{R_{I(D)}(q,l):l=1,\ldots,L\}$), compared to just one such value in eq.~(\ref{eq:learn_g}) (the value $\ell^*(q,r^*)$). This allows the learning model to use more supervised labels per query from the index, increasing ground-truth utilization and learning effectiveness. Directly predicting nprobe couples the model to a specific recall target and index size, and provides $L\times$ fewer supervision signals per query. We confirm empirically (in Section~\ref{sec:baselines}) that single-nprobe prediction results in significantly higher variance and more data access than QASP (Table~\ref{tab:baseline}).
Fifth, recall-based regression inherently produces a recall prediction per cluster, and these values can be leveraged at search time for further optimizations without model inference, such as reactive early/late termination if rate of newly discovered neighbors per probe subceeds/exceeds (respectively) the initial recall predictions (see Section~\ref{sec:reactive_complement}).

\subsection{Model Training}\label{sec:model_training}
QASP employs supervised learning to model the relationship between query characteristics and retrieval performance. We share details on different parts of the training pipeline.
%QASP recall-based regression employs supervised learning to model the relationship between query characteristics and retrieval performance. While training data and model architecture can be highly expressive, efficiency is central to ensuring learned models remain compatible with large vector search systems. We discuss efficiency with respect to the different components of the training pipeline.
\subsubsection{Training Data Generation}
We obtain training signals by performing nearest neighbor search under configurations that approximate exhaustive retrieval for queries sampled from the dataset (or query workload if available). 
Recall measurements are obtained for every searched partition and serve as prediction targets. Query features are obtained as a by-product of this one time search with no additional overhead (see Section \ref{sec:features}). The resulting dataset is split into validation and testing. Training data generation involves the same operations as any search parameter optimization; ground-truth neighbors and recall at varying probe depths are needed even to select a fixed nprobe. QASP fits a lightweight model on the collected data.

\subsubsection{Model Selection}\label{sec:model_variants}
We build three variants of pre-trained QASP models. First, \emph{QASP-DL} is a lightweight deep learning architecture for tabular learning \cite{gorishniy2022embeddings} with batch normalization and dropout. Second, \emph{QASP-GBDT} is a gradient boosted decision tree model which is a de-facto architecture for tabular learning tasks due to their ability to handle decision manifolds effectively \cite{shwartz2022tabular}. Third, \emph{QASP-LITE} is a polynomial regression model. All variants are trained to optimize mean squared error loss and differ in expressive power, inference latency, ease of integration and adaptability to unseen domains.

\subsection{Model Invocation}\label{sec:model_inference}
Training QASP models on the recall regression task serves as a surrogate for the search policy during model invocation Section~\ref{sec:recallbased}. Beyond the naive application of QASP to the same data distribution as it is trained on, QASP can be effectively employed at search-time.

\subsubsection{Model Inference}\label{sec:batch_inference}
QASP model inference is performed proactively on the query workload prior to vector search execution. This design allows rich architecture such as deep learning unlike prior work \cite{chatzakis2025darth} due to workload-level optimizations. Batching for large-scale inference allows easy parallelism and GPU acceleration can yield significant scaling benefits. By processing queries in batches, the approach amortizes computational overhead and sustains high throughput, even as the index partitioning scales to orders of magnitude beyond the size observed during training.

\subsubsection{Domain Adaptation}\label{sec:generalization}
Recall prediction serves as a foundational training objective for vector search, generalizing effectively across datasets, query distributions, and index configurations.  We design features that are normalized and scale-invariant (see Section \ref{sec:features}), which capture fundamental search difficulty signals that transfer across heterogeneous scales and indexing schemes. This enables QASP to adapt to new domains via both zero-shot transfer, where the pre-trained model generalizes to unseen datasets and index configurations without adjustment, and few-shot transfer, with limited training on batch normalization \cite{li2016revisitingbatchnormalizationpractical} and final projection layers using very few queries from the target domain.

\subsubsection{Scaling to Hierarchical Indices}\label{sec:other_indices}
At invocation time, QASP can be scaled to large indices without additional training by extending the same feature transformations on hierarchical partitioning. QASP models can effectively be trained once on a smaller flat index and applied across all hierarchy levels. At inference, model features at each level are computed using coresets, with second-level centroids already available from index construction providing a lightweight approximation. This combination enables fast, scalable model invocation across large indices.

\subsection{Lightweight Reactive Complement}
\label{sec:reactive_complement}
QASP's fine-grained recall predictions $R_1,\ldots,R_L$ enable real-time validation as search progresses: Each prediction $R_i$ represents cumulative recall after searching $i$ partitions, making the incremental contribution $\widetilde{DR}_i = R_i - R_{i-1}$ a testable hypothesis about partition $i$'s value. For example, when QASP predicts that partition $i$ will increase recall from 0.89 to 0.94, it asserts that this partition contains approximately $0.05 k$ true nearest neighbors, which can be immediately verified during search. The actual discovery rate $DR_i$ is observed as partition $i$'s contribution to the running result set.

The goal is to detect whether predicted rates $\widetilde{DR}_i$ under-predict or over-predict observed rates $DR_i$ as search progresses. We adapt a lightweight statistical process control solution using EWMA smoothing: $SDR_i = \alpha \cdot DR_i + (1-\alpha) \cdot SDR_{i-1}$ to prioritize recent discovery rates. A Shewhart-type run rule detects consistent deviations $\Delta_i = (SDR_i - \widetilde{DR}_i) /\widetilde{DR}_i$ over consecutive steps. Consistently negative values indicate early termination opportunities, while consistently positive values suggest extending search beyond initial predictions. See Section \ref{sec:experiments_reactive} for pseudocode.

\section{Model Features}\label{sec:features}

Model features are data signals that are informative for recall prediction task and accessible at inference with no additional computation. We define a set of features, with scale-invariant transformations, designed to capture diverse query and index characteristics.

\subsection{Feature Types}\label{sec:feature_types}
\emph{Dataset Features.}
\noindent \textit{Data Size}: ${1} / {\sqrt[d]{n}}$, where $n=|D|$, is the dataset size and $d$ is the vector dimensionality. This quantity captures the expected separation (min distance) between $n$ points packed in a unit sphere.

\emph{Index Features.}
\noindent \emph{Rank}: $1 - \exp(-l/L)$, where $l$ is centroid rank and $L$ is total number of centroids. Normalization by $L$ ensures rank $\in (0, 1-1/e]$ for all indices. Exponential decay captures diminishing returns of accessing further clusters.

\noindent \emph{Cumulative Cluster Size}: ${\sum_{i=1}^{l} |C_{\pi_q(i)}|} / {|D|}$. This represents fraction of vectors read up to a given centroid rank $l$. This feature is especially useful for imbalanced indices.

\noindent \emph{Cluster Coefficient of Variance}: $\sigma(d(v_l,\mu_l)) / \mu(v_l,\mu_l)$, where \\
$\sigma(d(v_l,\mu_l))$ is the standard deviation and $\mu(d(v_l,\mu_l))$ is the mean distance between data vectors $v_l \in C_l$ and centroid $\mu_l$.

\noindent \emph{Maximum Cluster Distance}: $\max(v_l,\mu_l) / d(q, \mu_{\pi_q(1)})$, where\\
$\max(v_l,\mu_l)$ is the max distance between data vectors $v_l \in C_l$ and centroid $\mu_l$ and $d(q, \mu_{\pi_q(1)})$ is the distance between query and the centroid of nearest cluster. We also consider other statistical summaries (like $min, p25, p75$) over the distance distribution. Normalization with $(q, \mu_{\pi_q(1)})$ is performed at search time.

\emph{Query Features.}
\noindent \emph{Relative Distance}: $\tanh(\delta(q,l)-1)$, where $\delta(q,l) = d(q,\mu_l)/d(q, \mu_{\pi_q(1)})$ and $d(q, \mu_{\pi_q(1)})$ is the distance from query $q$ to its nearest centroid. Query-to-centroid distances are normalized relative to the nearest centroid distance. 

Hyperbolic tangent compresses distances into bounded ranges and aims to capture diminishing returns of accessing distant clusters.

\noindent \emph{Local Relative Contrast}: ${\mu(d(q, \mu_{\pi_q(\cdot)}))} / {d(q, \mu_{\pi_q(1)})}$,  where \\
$d(q, \mu_{\pi_q(1)})$ is the distance from query $q$ to its nearest centroid and the numerator represents the average distance to all other centroids. This feature is an adaptation of the local relative contrast defined by \cite{aumuller2021role} using centroid vectors.

\noindent  \emph{Local Intrinsic Dimensionality}: $-k / \sum_{i=1}^{L} [\ln d(q,\mu_{\pi_q(i)}) - $ \\ 
$\ln d(q,\mu_{\pi_q(L)})]$, where $d(q,\mu_{\pi_q(i)})$ is the distance from query $q$ to the $i$th centroid and $d(q,\mu_{\pi_q(L)})$ is the distance between $q$ and the farthest centroid. It quantifies the difficulty of the query \cite{aumuller2021role} and we adapt it using centroid vectors.

\subsection{Feature Transformation}\label{sec:feature_transformation}

The foregoing feature design incorporates various transformation (e.g., normalization with respect to $L$ or $|D|$, using exponential decay and hyperbolic tangent) to bound feature ranges within fixed intervals, thereby providing a scale-invariant feature representation across varying search configurations. By expressing features as normalized ratios, they remain meaningful across different vector space distributions, dataset scales or index configurations. Another transformation we consider is \emph{relative difference} or ``relative jumps'' for any feature, call it $\varphi(q,l)$, varying with $l$:
\[
  \Delta(\varphi;q,l) = \frac{\varphi(q,l) - \varphi(q,l-1)}{\max_{1 < j < l}(\varphi(q,j) - \varphi(q,j-1)} .
\]
This allows QASP to rely not only on the raw feature value but also on how much the feature value changes as more clusters are probed in order to detect diminishing returns. We pad the output of this transformation with $0$ wherever undefined. We conduct feature analysis and ablations to find the  feature set (see Section \ref{sec:feature_studies}).
\section{Theoretical Guarantees for QASP}\label{sec:learning_theory}
In this section we analyze QASP's optimization and connect the structure of the vector index to the provable advantage of QASP. We first show that every fixed policy suffers irreducible loss determined by the variance of query difficulty, and that QASP can be trained nearly optimally with a finite number of samples independent of dataset size and never outperformed by any fixed policy except by a vanishing margin. We then derive a dominance condition proving that QASP's data access savings grow exponentially in intrinsic dimensionality. 

Let $\mathscr Q$ be a query distribution over $\mathbb{R}^d$. Suppose that the given query set $Q$ is a sample from $\mathscr Q$. Let $r^*\in[0,1]$ be the recall target. We define the population loss of a probe policy $g$ on $\mathscr Q$ and $r^*$ as
\begin{equation}\label{eq:populationloss}
  \mathcal L(g) := \mathbb{E}_{q\sim\mathscr Q}\left[\frac{1}{L^2}(\ell^*(q,r^*) - g(q,r^*))^2\right] ,
\end{equation}
where $\ell^*(q,r^*)$ is the minimum number of clusters in $I(D)$ that need to be probed to attain recall $r^*$ for $q$, as per eq.~(\ref{eq:opt_probes}). For a finite sample $Q$ from $\mathscr Q$, we define the corresponding empirical loss $\hat{\mathcal L}(g)$ as follows. Observe that the probe policy learning problem in eq.~(\ref{eq:learn_g}) minimizes this empirical loss on $Q$.
\[
  \hat{\mathcal L}(g) := \frac1{L^2|Q|}\sum_{q\in Q}(\ell^*(q,r^*) - g(q,r^*))^2 .
\]

\subsection{Sub-Optimality of Fixed Probe Policies}\label{sec:fixed_suboptimal}
We first provide a suboptimality bound for the population loss of fixed policies. A fixed policy $g$ maps all queries to a fixed number of probes. We denote by $g^{(l)}$ the fixed $l$-probe policy, defined as
\begin{equation}\label{eq:fixedpolicy}
  \forall q\in\mathbb{R}^d, \quad g^{(l)}(q,r^*) = l .
\end{equation}

We define a distribution over $\{1,\ldots,L\}$ from $\mathscr Q$. Let $\Lambda$ be a random variable in $\{1,\ldots,L\}$ drawn by sampling $q\sim\mathscr Q$ and setting
\begin{equation}\label{eq:lambdadef}
    \Lambda=\ell^*(q,r^*) .
\end{equation}

Thus, $\Lambda$ is the optimal number of clusters that needs to be scanned for a query drawn at random from $\mathscr Q$, under the recall target $r^*$. We can relate the sub-optimality of fixed probe policies to the distribution of $\Lambda$ as follows. 

\begin{theorem}\label{prp:lambdabound}
For every $l\in\{1,\ldots,L\}$, the population loss of the fixed-probe policy $g^{(l)}$ satisfies $\mathcal{L}(g^{(l)}) \geq \mathrm{Var}(\Lambda)/L^2$.
\end{theorem}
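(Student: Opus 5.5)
The plan is to rewrite the population loss of the fixed policy as a second moment of $\Lambda$ about the constant $l$. Then the claim follows from the bias--variance decomposition, i.e.\ from the fact that $\mathbb{E}[(\Lambda-c)^2]$ is minimized at $c=\mathbb{E}[\Lambda]$.

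First, substitute eq.~(\ref{eq:fixedpolicy}) into the population loss eq.~(\ref{eq:populationloss}). Since $g^{(l)}(q,r^*)=l$ for every $q$, and $\Lambda=\ell^*(q,r^*)$ with $q\sim\mathscr Q$ by eq.~(\ref{eq:lambdadef}), we get
\[
\mathcal L(g^{(l)})=\frac{1}{L^2}\,\mathbb{E}_{q\sim\mathscr Q}\big[(\ell^*(q,r^*)-l)^2\big]=\frac{1}{L^2}\,\mathbb{E}\big[(\Lambda-l)^2\big].
\]
This step needs $\ell^*(q,r^*)$ to be well defined for every $q$. That holds because $R_{I(D)}(q,L)=1\ge r^*$ (probing all clusters is exhaustive search), so the minimum in eq.~(\ref{eq:opt_probes}) is taken over a nonempty subset of $\{1,\ldots,L\}$. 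Hence $\Lambda$ is a bounded random variable and all of its moments exist.

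Second, write $\mu=\mathbb{E}[\Lambda]$ and expand $\Lambda-l=(\Lambda-\mu)+(\mu-l)$. The cross term $2(\mu-l)\,\mathbb{E}[\Lambda-\mu]$ vanishes, which gives
\[
\mathbb{E}[(\Lambda-l)^2]=\mathrm{Var}(\Lambda)+(\mu-l)^2\ge\mathrm{Var}(\Lambda).
\]
Dividing by $L^2$ yields $\mathcal L(g^{(l)})\ge\mathrm{Var}(\Lambda)/L^2$ for every $l$, as claimed.

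There is no real obstacle here. The only point needing care is the well-definedness and boundedness of $\Lambda$ noted above, together with the measurability of $q\mapsto\ell^*(q,r^*)$, which I would take for granted. It is also worth observing that the bound holds even for non-integer constants, so it is valid uniformly over all fixed policies. Equality would require $l=\mu$, which is possible only when $\mathbb{E}[\Lambda]$ is an integer. This makes $\mathrm{Var}(\Lambda)/L^2$ the irreducible floor determined by the heterogeneity of query difficulty.
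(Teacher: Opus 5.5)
Your proof is correct and follows essentially the same route as the paper. Both reduce $\mathcal{L}(g^{(l)})$ to $\frac{1}{L^2}\mathbb{E}[(\Lambda-l)^2]$ and bound it below by the variance: the paper invokes the fact that $\mathbb{E}[(\mu-X)^2]$ is minimized at $\mu=\mathbb{E}[X]$, while your bias--variance expansion is the direct proof of that fact, and your added remarks on the well-definedness of $\Lambda$ (via $R_{I(D)}(q,L)=1$) and on the equality case are accurate.
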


Thus, the single quantity $\frac{\mathrm{Var}(\Lambda)}{L^2}$ is a uniform lower bound on the population losses of every fixed probe policy simultaneously. 
We can evaluate this quantity --- either empirically from the given query workload, or analytically under a distributional query model --- to quantify the sub-optimality enforced by restricting the search procedure to fixed-probe policies.

\paragraph*{Example---Zipfian probes}
The $\rho$-Zipfian law, $\Pr[\ell]\propto 1/\ell^\rho$, is commonly used to model access frequency patterns \cite{rivest1976self}. For simplicity, we focus on the most standard case $\rho=1$. Thus, for a partition-based vector index with $L$ clusters and a query drawn at random from $\mathscr Q$, the probability that its optimal number of probes is $\ell$ is proportional to $1/\ell$. Hence, $\Lambda$ is distributed as $\Pr[\Lambda=\ell] = 1/(\ell\cdot H_L)$, where $H_L=\sum_{i=1}^L\frac1i=\Theta(\ln L)$ is the $L^{th}$ harmonic number. 

A direct calculation yields that
\[ 
  \mathrm{Var}(\Lambda) = 
  \frac{L(L+1)}{2H_L} - \frac{L^2}{H_L^2} = \Theta\left(\frac{L^2}{H_L}\right) = 
  \Theta\left(\frac{L^2}{\ln L}\right).
\]
Hence, by Theorem~\ref{prp:lambdabound}, the population loss (\ref{eq:populationloss}) of every fixed-probe policy is at least $\Omega(1/\ln L)>0$.

\begin{proof}[Proof of Theorem \ref{prp:lambdabound}.]
We recall the following basic fact:
\begin{lemma}\label{prp:minmean}
    For every real-valued random variable $X$ with $\mathbb{E}[X]<\infty$, it holds that $\arg\min_{\mu\in\mathbb{R}}\mathbb{E}\left[(\mu-X)^2\right]=\mathbb{E}[X]$. 
\end{lemma}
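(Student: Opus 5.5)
The statement to prove here is Lemma~\ref{prp:minmean}, the basic fact invoked inside the proof of Theorem~\ref{prp:lambdabound}. The plan is a direct bias--variance decomposition of the quadratic $\mu\mapsto\mathbb{E}[(\mu-X)^2]$.

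First we need the expectation $\mathbb{E}[(\mu-X)^2]$ to be well defined. The hypothesis $\mathbb{E}[X]<\infty$ should be read as $\mathbb{E}|X|<\infty$, so that $m:=\mathbb{E}[X]$ is a real number. If $\mathbb{E}[X^2]=\infty$, then $\mathbb{E}[(\mu-X)^2]=\infty$ for every $\mu$, and the claim holds only in the degenerate sense that every $\mu$ attains the (infinite) minimum. In our application this never happens: $X=\Lambda$ takes values in $\{1,\ldots,L\}$, so it is bounded. I will therefore assume $\mathbb{E}[X^2]<\infty$ and note the degenerate case in one sentence.

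Under this assumption, for any $\mu\in\mathbb{R}$ write $\mu-X=(\mu-m)+(m-X)$ and expand the square:
\[
\mathbb{E}[(\mu-X)^2]=(\mu-m)^2+2(\mu-m)\,\mathbb{E}[m-X]+\mathbb{E}[(m-X)^2].
\]
The cross term vanishes because $\mathbb{E}[m-X]=0$, which uses linearity of expectation and the finiteness of $m$. This leaves
\[
\mathbb{E}[(\mu-X)^2]=(\mu-m)^2+\mathrm{Var}(X).
\]
The right-hand side is minimized exactly at $\mu=m$, and the minimum is unique since $(\mu-m)^2>0$ for $\mu\neq m$. Hence the $\arg\min$ is the singleton $\{\mathbb{E}[X]\}$, which proves the lemma.

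For the surrounding theorem, apply the identity with $X=\Lambda$ and $\mu=l$:
\[
L^2\,\mathcal{L}(g^{(l)})=\mathbb{E}[(l-\Lambda)^2]=(l-\mathbb{E}\Lambda)^2+\mathrm{Var}(\Lambda)\ge\mathrm{Var}(\Lambda).
\]
The only delicate point is the integrability caveat above, which is settled by the boundedness of $\Lambda$.
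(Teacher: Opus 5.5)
Your proof is correct. The paper gives no proof of this lemma; it only ``recalls'' it as a basic fact inside the proof of Theorem~\ref{prp:lambdabound}. Your bias--variance identity $\mathbb{E}[(\mu-X)^2]=(\mu-\mathbb{E}X)^2+\mathrm{Var}(X)$ is exactly the standard argument the paper takes for granted.

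Your integrability remark is a small but genuine sharpening of the statement as written. If $\mathbb{E}|X|<\infty$ but $\mathbb{E}[X^2]=\infty$, then $\mathbb{E}[(\mu-X)^2]=\infty$ for every $\mu$, so the $\arg\min$ is all of $\mathbb{R}$ rather than $\{\mathbb{E}[X]\}$. Assuming $\mathbb{E}[X^2]<\infty$ is what makes the minimizer unique, and this holds automatically for the bounded $\Lambda\in\{1,\ldots,L\}$.

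Your application to the theorem is also slightly more informative than the paper's route. The paper lower-bounds by $\min_{\mu}$ and then identifies the minimizer. Your identity instead gives $L^2\mathcal{L}(g^{(l)})=(l-\mathbb{E}\Lambda)^2+\mathrm{Var}(\Lambda)$ exactly, i.e., the precise excess of each fixed policy over the bound.
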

Using this, every fixed probe policy $g^{(l)}$ satisfies,
\begin{align*}
\mathcal{L}(g^{(l)}) &= \mathbb{E}_{q\sim\mathscr Q}\left[\frac{1}{L^2}(g^{(l)}(q,r^*) - \ell^*(q,r^*))^2\right] & \text{eq.~(\ref{eq:populationloss})}\\
&= \frac{1}{L^2}\mathbb{E}_{q\sim\mathscr Q}\left[(l - \ell^*(q,r^*))^2\right] & \text{eq.~(\ref{eq:fixedpolicy})}\\
&\geq \frac{1}{L^2}\min_{\mu\in\mathbb{R}}\mathbb{E}_{q\sim\mathscr Q}\left[(\mu - \ell^*(q,r^*))^2\right] & \\
&= \frac{1}{L^2}\mathbb{E}_{q\sim\mathscr Q}\left[\left(\mathbb{E}_{q\sim\mathscr Q}[\ell^*(q,r^*)] - \ell^*(q,r^*)\right)^2\right]\\
&= \frac{1}{L^2}\mathbb{E}\left[\left(\mathbb{E}\left[\Lambda\right] - \Lambda\right)^2\right] = \frac{\mathrm{Var}(\Lambda)}{L^2}.  & \text{eq.~(\ref{eq:lambdadef})
}
\qedhere
\popQED
\end{align*}
\end{proof}

\subsection{QASP versus Fixed Probe Policies}\label{sec:qasppdim}
We show that a finite sample of queries $Q$ from $\mathscr Q$ suffices to train QASP to better loss than any fixed probe policy with high probability, up to a small gap that vanishes as the sample size grows. Importantly, the requisite sample size is independent of the dataset size $|D|$, and depends only on the number of clusters $L$ and on the structure of the QASP model.
We start with some learning-theoretic definitions. The pseudo-dimension characterizes the number of samples required to certify generalization \cite{pollard1984convergence,pollard1990empirical,blumer1989learnability,gupta2017pac,balcan2020data}. We define pseudo-dimension for probe policies as follows:

\begin{definition}[pseudo-dimension of probe policy class]
    Let $\mathcal G$ be a class of probe policies. 
    Let $Q=\{q_1,\ldots,q_M\}\subset\mathbb{R}^d$. 
    We say that $Q$ is \emph{pseudo-shattered} by $\mathcal G$ if there exist thresholds $\tau_1,\ldots,\tau_M\in\mathbb{R}$ such that for every $I\subset[M]$, there is $g\in\mathcal G$ such that $g(q_i,r^*)>\tau_i\Leftrightarrow i\in I$. 
    The \emph{pseudo-dimension} of $\mathcal G$ is denoted $\mathrm{pdim}(\mathcal G)$ and defined as the largest size of a set $Q\subset\mathbb{R}^d$ pseudo-shattered by $\mathcal G$.
\end{definition}

Function classes with finite pseudo-dimension are called \emph{learnable} because their population loss can be approximately optimized from a finite sample. 
Specifically, the pseudo-dimension governs the sample size required to guarantee that the empirical loss approximates the population loss for every probe policy in the class, a property known as \emph{uniform convergence} \cite{vapnik2015uniform}.
\begin{theorem}[\cite{anthony2009neural}, Theorem 19.2]\label{thm:uniform_convergence}
    Let $\varepsilon,\eta\in(0,1)$. Let $Q$ be a sample from $\mathscr Q$ of size $O((\mathrm{pdim}(\mathcal G)+\log(1/\eta))/\varepsilon^2)$. Then,
    \[
      \Pr\left[\forall g\in\mathcal G:\;|\mathcal L(g) - \hat{\mathcal L}(g)|<\varepsilon\right] \geq 1-\eta .
    \]
\end{theorem}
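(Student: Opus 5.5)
The plan is to reduce the statement to the standard uniform convergence theorem for $[0,1]$-valued function classes of bounded pseudo-dimension, applied not to $\mathcal G$ itself but to the induced \emph{loss class}. For each $g\in\mathcal G$ I define $h_g(q):=\frac{1}{L^2}(\ell^*(q,r^*)-g(q,r^*))^2$. For a valid policy, both $\ell^*$ and $g$ lie in $\{1,\ldots,L\}$, so $h_g(q)\in[0,1]$. Moreover $\mathcal L(g)=\mathbb{E}_{q\sim\mathscr Q}[h_g(q)]$ and $\hat{\mathcal L}(g)$ is its empirical mean on $Q$. The claim is therefore exactly uniform convergence of empirical means over $\mathcal H=\{h_g:g\in\mathcal G\}$.

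\textbf{Step 1: bound the pseudo-dimension of the loss class.} I need $\mathrm{pdim}(\mathcal H)=O(\mathrm{pdim}(\mathcal G))$. The squared loss is not monotone in $g$, so I will not transfer shattering directly. Instead, for a threshold $\tau\ge 0$ I use
\[
h_g(q)>\tau \iff g(q,r^*)>\ell^*(q)+L\sqrt{\tau}\ \text{ or }\ g(q,r^*)<\ell^*(q)-L\sqrt{\tau},
\]
where thresholds $\tau<0$ are trivial. Each disjunct is a threshold event for $g$ at a query-dependent level, since $\ell^*(q)$ is a fixed function of $q$. By Sauer's lemma for pseudo-dimension, the number of sign patterns that $\mathcal G$ realizes on $M$ points with fixed thresholds is at most $(eM/p)^{p}$, where $p=\mathrm{pdim}(\mathcal G)$. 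The patterns of $\mathcal H$ are Boolean combinations of two such families, so there are at most $(eM/p)^{2p}$ of them. Pseudo-shattering requires $2^M$ patterns, which forces $M=O(p)$.

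\textbf{Step 2: symmetrization and covering.} I will run the classical argument. A ghost sample and random signs bound $\Pr[\exists h:|\mathbb{E}h-\hat{\mathbb{E}}h|\ge\varepsilon]$ by a Rademacher-type deviation on $2M$ points. Haussler's packing bound then gives an $L_1$ covering number of $\mathcal H$ at scale $\varepsilon$ of at most $(C/\varepsilon)^{O(p)}$. Combined with Hoeffding and a union bound, this yields sample size $O((p\log(1/\varepsilon)+\log(1/\eta))/\varepsilon^2)$.

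\textbf{Step 3 (main obstacle): remove the $\log(1/\varepsilon)$ factor.} The stated bound has no such factor. I would replace the single-scale covering argument with Dudley chaining on the empirical $L_2$ metric. Using Haussler's bound $\log N(\delta)=O(p\log(1/\delta))$, the entropy integral $\int_0^1\sqrt{\log N(\delta)}\,d\delta=O(\sqrt p)$ converges. This gives expected uniform deviation $O(\sqrt{p/M})$. McDiarmid's inequality, using that each $h$ lies in $[0,1]$, then upgrades this to a high-probability statement with an additive $\sqrt{\log(1/\eta)/M}$ term. Solving $\sqrt{p/M}+\sqrt{\log(1/\eta)/M}\le\varepsilon$ gives the claimed $M=O((\mathrm{pdim}(\mathcal G)+\log(1/\eta))/\varepsilon^2)$.

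This last step is the cited Theorem 19.2 of \cite{anthony2009neural}, and I would invoke it directly once Step 1 has established the pseudo-dimension transfer.
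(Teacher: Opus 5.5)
Your argument is correct. It goes further than the paper, which gives no proof of this statement and simply imports it from \cite{anthony2009neural}.

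The textbook result concerns a class $F$ of $[0,1]$-valued functions under squared loss. The paper's use therefore amounts to taking $F=\{q\mapsto g(q,r^*)/L\}$ with labels $\ell^*(q,r^*)/L$; pseudo-dimension is unchanged by this rescaling. In the textbook, the passage to the loss class goes through covering numbers, using that the squared loss is $2$-Lipschitz on $[0,1]$, rather than through a pseudo-dimension transfer. Your Step 1 is a valid substitute for that Lipschitz step, only heavier. Its second disjunct is a non-strict threshold event. That is harmless, since pseudo-dimension defined with $>$ and with $\geq$ coincide (perturb the thresholds over the finitely many witnesses), and in any case $g$ is integer-valued here.

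The substantive difference is your Step 3. As far as I recall, the bound in Anthony--Bartlett's Theorem 19.2 is the single-scale covering bound of your Step 2, and it carries a $\mathrm{pdim}\cdot\ln(1/\varepsilon)$ term. The log-free sample size in the statement is therefore really justified by your chaining-plus-McDiarmid argument, not by the citation. Your closing sentence has this backwards: invoking Theorem 19.2 directly would give back Step 2's bound. Keep Step 3 as the actual proof and drop Step 2, which it supersedes.

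One small fix in Step 3: Haussler's bound controls $L_1$ packing numbers. For chaining in the empirical $L_2$ metric, use $N_2(\delta)\le N_1(\delta^2)$, which holds for $[0,1]$-valued classes. This still gives $\log N_2(\delta)=O(p\log(1/\delta))$ and a convergent entropy integral.
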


The pseudo-dimension can be bounded using the dimension
of the parameter space and the complexity of the
learned model, as per the following theorem, which is a specialization of Theorem 8 from \cite{bartlett2003vapnik} (based on  \cite{goldberg1995bounding,karpinski1997polynomial}) to probe policies.

\begin{theorem}\label{thm:goldbergjerrum}
Let $\mathcal G=\{g_\theta:\theta\in\Theta\}$ be a class of probe policies parameterized by an $m$-dimensional parameter space $\Theta$. Let $F(q)$ be any set of numerical features representing a query $q\in\mathbb{R}^d$. 
Suppose that for every $\theta\in\Theta$ and $q\in\mathbb{R}^d$, it is possible to compute $g_\theta(q,r^*)$ from $\theta$ and $F(q)$ with an algorithm that can perform arithmetic operations $\{+,-,\times,\div\}$ and numerical comparisons $\{=,\neq,>,\geq,<,\leq\}$. Suppose the algorithm has running time $t$. Then, $\mathrm{pdim}(\mathcal G)=O(mt)$. 
Furthermore, if the algorithm is also allowed to perform exponentiation ($x\mapsto e^x$), then $\mathrm{pdim}(\mathcal G)=O((mt)^2)$. 
\end{theorem}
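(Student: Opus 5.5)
The approach is a reduction to the Goldberg--Jerrum / Bartlett--Maiorov bound. The statement is a specialization of Theorem~8 of \cite{bartlett2003vapnik}, so the plan is to recast the probe policy class $\mathcal G$ as a class of real-valued functions computed by a bounded-time arithmetic algorithm whose only inputs are parameters and features.

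\emph{Step 1: define the input space.} I would take the input space to be the feature space rather than $\mathbb{R}^d$. I will define $h_\theta(x):=g_\theta(q,r^*)$ for $x=F(q)$. This is well defined by the hypothesis that $g_\theta(q,r^*)$ is computable from $\theta$ and $F(q)$ alone, so two queries with equal features receive equal outputs.

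\emph{Step 2: transfer pseudo-shattering.} Suppose $q_1,\ldots,q_M$ are pseudo-shattered by $\mathcal G$ with thresholds $\tau_i$.
\begin{itemize}
\item The feature vectors $F(q_i)$ must be pairwise distinct. If $F(q_i)=F(q_j)$ with $i\neq j$, then $g_\theta(q_i,r^*)=g_\theta(q_j,r^*)$ for every $\theta$, so the two outputs can only be separated from different thresholds in limited ways. Taking $\tau_i\le\tau_j$ without loss of generality, the labeling with $i\in I$ and $j\notin I$ would require $g>\tau_i$ and $g\le\tau_j$ simultaneously. Combined with the labeling $j\in I$, $i\notin I$, this forces $\tau_j<g\le\tau_i$, a contradiction.
\item Consequently the points $F(q_1),\ldots,F(q_M)$ are pseudo-shattered by $\{h_\theta\}$ with the same thresholds.
\end{itemize}
It follows that $\mathrm{pdim}(\mathcal G)\le\mathrm{pdim}(\{h_\theta\})$.

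\emph{Step 3: invoke the classical bound.} Pseudo-shattering by $\{h_\theta\}$ is equivalent to VC-shattering by the binary class $\{(x,\tau)\mapsto \mathbf 1[h_\theta(x)-\tau>0]\}$. This class is computed by an algorithm running in time $t+O(1)$: run the given algorithm, then perform one subtraction and one comparison. Theorem~8 of \cite{bartlett2003vapnik}, following \cite{goldberg1995bounding}, bounds the VC dimension of such a class by $O(mt)$ when only $\{+,-,\times,\div\}$ and comparisons are used. When exponentiation is also allowed, the Pfaffian/Khovanskii extension of \cite{karpinski1997polynomial}, as stated in the same theorem, gives $O((mt)^2)$.

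The main obstacle is bookkeeping in Step 3 rather than any conceptual difficulty. One must check that the cited theorem counts the input dimension and the threshold correctly, so that the added input $\tau$ and the $O(1)$ extra operations do not change the asymptotics. One must also check that division, including the possibility of division by zero, is covered by the formulation in \cite{bartlett2003vapnik}. I would handle division by treating the algorithm as a computation tree whose branch conditions are rational functions, clearing denominators to obtain polynomials of degree at most $2^t$. This keeps the Warren/Milnor sign-pattern counting inside the stated $O(mt)$ bound.
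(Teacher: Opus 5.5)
Your proposal is correct and follows the paper's route. The paper gives no separate argument: it presents the statement as a direct specialization of Theorem~8 of \cite{bartlett2003vapnik}, which packages the Goldberg--Jerrum bound together with the Karpinski--Macintyre exponential extension. Your Steps 1--3 are exactly the bookkeeping that specialization requires: passing to the feature space, showing a pseudo-shattered set has distinct feature vectors, and reducing pseudo-dimension to the VC dimension of the thresholded class at the cost of $O(1)$ extra operations.

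Two small slips. The cited chapter is by Bartlett and Maass, not Bartlett and Maiorov. In Step 2, the labeling $j\in I$, $i\notin I$ already contradicts $\tau_i\le\tau_j$ on its own, so the other labeling is not needed.
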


We show that our implicit ML-based search policy, where probe decisions are induced through recall predictions can be analyzed through bounded pseudo-dimension theory. We prove that QASP's compositionally-defined policy class has finite pseudo-dimension despite complex interactions between neural network recall predictors, query features, and the minimization operation that determines optimal probe counts.

\begin{lemma}\label{prp:qasp_pdim}
    The pseudo-dimension $\mathrm{pdim}(\mathcal G_{\mathrm{QASP}})$ of all three QASP variants is finite. Furthermore, it depends only on $L$, and is independent of the dataset size $n$ and the dimensionality $d$. 
\end{lemma}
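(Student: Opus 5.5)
The plan is to apply Theorem~\ref{thm:goldbergjerrum} to each of the three QASP variants. For each variant, we exhibit an algorithm that computes the induced policy $g_\theta(q,r^*)=g^{[f_\theta]}(q,r^*)$ from the parameter vector $\theta$ and the feature set $F(q)$. We then check that the number of parameters $m$ and the running time $t$ depend only on $L$ and on the fixed architecture, and not on $n$ or $d$. The key observation is that the features of Section~\ref{sec:features} form the input $F(q)$. For every rank $l\in\{1,\ldots,L\}$ they are a fixed-length vector of scalars: rank, cumulative size, relative distance, contrast, LID, cluster statistics, relative jumps, and the data-size feature $1/\sqrt[d]{n}$. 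Theorem~\ref{thm:goldbergjerrum} lets $F(q)$ be any numerical representation. Computing these features from the raw $q$ (which would cost $O(nd)$ or $O(Ld)$ work) is therefore \emph{not} charged to the algorithm. Only the map from $(\theta,F(q))$ to $g_\theta(q,r^*)$ is charged.

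The generic algorithm is the same for all variants. For $l=1,\ldots,L$ in order, evaluate $f_\theta(q,l)$ on the feature vector $F_l(q)$ and compare the result with $r^*$. Output the first $l$ for which $f_\theta(q,l)\ge r^*$, or output $L$ if there is none, so that the policy is valid. This uses at most $L$ evaluations of $f_\theta$ plus $L$ comparisons, so $t=O(L\cdot t_f)$, where $t_f$ is the cost of one predictor evaluation. It remains to bound $m$ and $t_f$ for each variant.

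\emph{QASP-LITE:} a polynomial of fixed degree in the fixed number $p$ of features. Then $m$ is the number of monomials, a constant, and $t_f=O(m)$ using only $+$ and $\times$. This gives $\mathrm{pdim}=O(m\cdot Lm)$.

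\emph{QASP-GBDT:} an ensemble of $T$ trees of depth at most $h$. The parameters are the split thresholds and leaf values, plus feature-index choices, which are discrete and fixed by the architecture class. Each tree is evaluated with $h$ comparisons, and the outputs are summed. So $m=O(T2^h)$ and $t_f=O(Th)$.

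\emph{QASP-DL:} a feedforward network with fixed widths and depth, batch normalization and dropout. At inference, dropout is the identity and batch normalization is an affine map using stored statistics, with a division and a square root. The square root can be absorbed into the parameters, or handled as a stored value. If the activations are piecewise linear (ReLU), only comparisons and arithmetic are needed. If they are sigmoid or GELU-like, exponentiation is needed, which invokes the $O((mt)^2)$ branch of the theorem. Either way, $m$ equals the weight count and $t_f=O(m)$.

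In all three cases $\mathrm{pdim}(\mathcal G_{\mathrm{QASP}})$ is polynomial in $L$ and in architecture constants, and $n,d$ do not appear.

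The main obstacle is justifying that $n$ and $d$ truly do not enter. This requires arguing carefully that the feature computation lies outside the model of computation in Theorem~\ref{thm:goldbergjerrum}, because it does not depend on $\theta$ and can be treated as a fixed preprocessing of $q$. The data-size feature $1/\sqrt[d]{n}$ likewise becomes a single input scalar. A secondary subtlety is handling non-arithmetic operations in the DL variant: square roots in normalization, and $\tanh$ or $\exp$ in the features and activations. I would first try to absorb any $\theta$-independent nonlinearities into $F(q)$. Any remaining $\theta$-dependent transcendental operations would be covered by the exponentiation clause, at the cost of squaring the bound.
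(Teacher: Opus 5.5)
Your proposal is correct and follows the paper's route---apply Theorem~\ref{thm:goldbergjerrum} with $F(q)$ treated as a $\theta$-independent input, noting that all three predictors and the induced minimization need only arithmetic, comparisons, and (for sigmoid-type DL) exponentiation---with only a bookkeeping difference: the paper puts the $L$-dependence into an $O(L)$ parameter count, whereas you keep $m$ architecture-constant and charge $L$ to the running time $t=O(L\,t_f)$, which better fits a single regressor shared across ranks and gives the explicit bound on $t$ that the paper leaves implicit. One small correction: the GBDT split-feature indices are learned rather than fixed by the class, but you can encode each as a real parameter decoded by $O(p)$ comparisons (or take a union over the finitely many index assignments), and the bound stays free of $n$ and $d$.
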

\begin{proof}
    Invoke Theorem~\ref{thm:goldbergjerrum}. 
    By Section~\ref{sec:features}, the number of features $|F(q)|$ per query $q$ is $O(L)$, hence so is the number of trainable parameters (i.e., the dimension of the parameter space) per QASP model. 
    By Section~\ref{sec:model_variants}, all three QASP architectures can be computed from $F(q)$ using only arithmetic operations and numerical comparisons, and (in the case of QASP-DL) exponentiations (in order to compute sigmoid activations). Therefore, the proposition follows from Theorem~\ref{thm:goldbergjerrum}.
\end{proof}

\begin{theorem}\label{cor:qasp_vs_fixed}
    Let $\varepsilon,\eta\in(0,1)$. Suppose the sample $Q$ has size $O(\mathrm{pdim}(\mathcal G_{\mathrm{QASP}})/\varepsilon^2+\log(1/\eta))$. Let $g_{\mathrm{QASP}}$ be the policy that minimizes eq.~(\ref{eq:learn_g}). 
    Then, with probability $1-\eta$, it holds that
    \begin{equation}\label{eq:qasp_vs_fixed} 
    \forall\;l\in[L], \quad \mathcal{L}(g_{\mathrm{QASP}}) \leq \mathcal{L}(g^{(l)}) + 2\varepsilon .
    \end{equation}
    Note that both the failure probability $\eta$ and the performance gap $2\varepsilon$ vanish as we increase the sample size $|Q|$.
\end{theorem}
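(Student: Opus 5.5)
The plan is the standard uniform-convergence argument: the empirical risk minimizer is near-optimal in population, provided the fixed policies are competitors inside the same class. There are three steps.

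\emph{Step 1: embed the fixed policies in $\mathcal G_{\mathrm{QASP}}$.} For each $l\in[L]$, take a recall predictor $f$ that outputs $0$ for probe counts $l'<l$ and $1$ for $l'\geq l$. Then $g^{[f]}(q,r^*)=l$ for every $q$ (assuming $r^*>0$), so $g^{(l)}=g^{[f]}$. I would argue that all three variants can realize such an $f$ using only the rank feature, which depends on $l'$ alone:
\begin{itemize}
\item QASP-GBDT can use a threshold split on rank.
\item QASP-LITE and QASP-DL can approximate a step in rank. Alternatively, one only needs some $f_\theta$ whose predictions cross $r^*$ exactly between $l-1$ and $l$, for instance a bias term plus a monotone function of rank.
\end{itemize}
This yields $\{g^{(l)}:l\in[L]\}\subseteq\mathcal G_{\mathrm{QASP}}$. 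I expect this to be the main obstacle, since it depends on architecture details. If exact containment fails for some variant, the fallback is to state the theorem for the class $\mathcal G_{\mathrm{QASP}}\cup\{g^{(l)}\}$. Adding $L$ constant functions raises the pseudo-dimension by at most $O(\log L)$, and the finite-class union bound costs only a $\log L$ term. Either route leaves the sample size essentially unchanged.

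\emph{Step 2: uniform convergence.} Apply Theorem~\ref{thm:uniform_convergence} to $\mathcal G_{\mathrm{QASP}}$. This is legitimate because the pseudo-dimension is finite by Lemma~\ref{prp:qasp_pdim}, and the loss $\frac{1}{L^2}(\ell^*-g)^2$ lies in $[0,1]$ for valid policies. With the stated sample size, the following holds with probability at least $1-\eta$ for all $g\in\mathcal G_{\mathrm{QASP}}$ simultaneously:
\[
|\mathcal L(g)-\hat{\mathcal L}(g)|<\varepsilon .
\]
I would note that the stated size $O(\mathrm{pdim}/\varepsilon^2+\log(1/\eta))$ is at most a constant factor off from the theorem's requirement, up to how $\log(1/\eta)$ is scaled by $\varepsilon^{-2}$. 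If needed, I would read the bound with the $\varepsilon^{-2}$ applied to both terms.

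\emph{Step 3: chain the inequalities.} On this event, fix $l\in[L]$. Since $g_{\mathrm{QASP}}$ minimizes the empirical loss of eq.~(\ref{eq:learn_g}) over $\mathcal G_{\mathrm{QASP}}$, and $g^{(l)}$ belongs to that class by Step 1, we get
\[
\mathcal L(g_{\mathrm{QASP}})<\hat{\mathcal L}(g_{\mathrm{QASP}})+\varepsilon\leq\hat{\mathcal L}(g^{(l)})+\varepsilon<\mathcal L(g^{(l)})+2\varepsilon .
\]
This gives eq.~(\ref{eq:qasp_vs_fixed}) for all $l$ at once, because the event from Step 2 is uniform over the class. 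Both $\eta$ and $2\varepsilon$ vanish as $|Q|$ grows.

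As a remark, combining this with Theorem~\ref{prp:lambdabound} places QASP within $2\varepsilon$ of the best fixed policy, whose loss is itself at least $\mathrm{Var}(\Lambda)/L^2$.
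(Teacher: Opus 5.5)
Your proposal is correct and follows the paper's proof: you use empirical optimality of $g_{\mathrm{QASP}}$ against each $g^{(l)}$, then uniform convergence over $\mathcal G_{\mathrm{QASP}}$ via Lemma~\ref{prp:qasp_pdim} and Theorem~\ref{thm:uniform_convergence}, then the two-sided $\varepsilon$ chain. Your explicit construction of $g^{(l)}$ as $g^{[f]}$ for a rank-step predictor, and your remark that the $\log(1/\eta)$ term in the stated sample size should also carry $\varepsilon^{-2}$, fill in points the paper only asserts or glosses over. Note that your union-class fallback works only if the empirical minimizer is also taken over the enlarged class.
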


\begin{proof}
With Lemma \ref{prp:qasp_pdim}, we can proof Theorem \ref{cor:qasp_vs_fixed}. Since $g_{\mathrm{QASP}}$ minimizes the empirical loss $\hat{\mathcal{L}}$, we have,\begin{equation}\label{eq:fixedpolicy_empiricalloss}
\forall\;l\in[L], \quad \hat{\mathcal{L}}(g_{\mathrm{QASP}}) \leq \hat{\mathcal{L}}(g^{(l)}) . 
\end{equation} \
By definition, the learned policy $g_{\mathrm{QASP}}$ is realizable in the QASP model, meaning $g_{\mathrm{QASP}}\in \mathcal G_{\mathrm{QASP}}$. 
Furthermore, all fixed-probe policies $g^{(l)}$ are realizable in the QASP model.
Therefore, by Lemma~\ref{prp:qasp_pdim} and Theorem~\ref{thm:uniform_convergence}, we have with probability $1-\eta$,
\[ |\mathcal L(g_{\mathrm{QASP}}) - \hat{\mathcal L}(g_{\mathrm{QASP}})|<\varepsilon \;\; \text{and} \;\; \forall\;l, |\mathcal L(g^{(l)}) - \hat{\mathcal L}(g^{(l)})|<\varepsilon . \]
Together with eq.~(\ref{eq:fixedpolicy_empiricalloss}), with probability $1-\eta$, eq.~(\ref{eq:qasp_vs_fixed}) follows.
\end{proof}

\noindent\emph{Remark.} 
The same conclusion as Theorem \ref{cor:qasp_vs_fixed} holds if we define the empirical loss with eq.~(\ref{eq:learn_f}) instead of eq.~(\ref{eq:learn_g}) and the appropriate analogous population loss (replacing $\Sigma_{q\in Q}$ with $\mathbb{E}_{q\sim\mathscr Q}$ in eq.~(\ref{eq:learn_f})). 
The formal proof goes by augmenting the feature set $F(q)$ with the additional query features $\{R_{I(D)}(q,l)\}_{l=1}^L$, which enables computing the contribution of $q$ to the loss from $F(q)$ while maintaining the size bound $|F(q)|=O(L)$. 
We omit further details.

In Section \ref{sec:fixed_suboptimal} we establish that fixed policies inherently leave room for loss improvement.

Proposition \ref{prp:qasp_pdim} allows us to complement this result and prove that even in the worst case, with high probability, QASP's loss is never \emph{worse} than any fixed policy, except by a negligible margin that tends to zero as the sample size grows.

\subsection{Cost Dominance of Adaptive Probing}

Section~\ref{sec:fixed_suboptimal} and Section~\ref{sec:qasppdim} establish that QASP generalizes from finite samples and that fixed policies are inherently sub-optimal in loss. We now ask a complementary question: \emph{how much data access does QASP save}, and under what conditions does it dominate fixed probing despite predictor imperfection? Fixed policies must provision for the tail of the query difficulty distribution, forcing every query, including easy ones, to pay the cost of the hardest queries. QASP instead pays what each query individually requires. We formalize this intuition and derive the precise conditions under which QASP dominates, connecting the geometry of the vector index to the quality of the learned predictor, with the gap growing exponentially in the intrinsic dimensionality of the data.

We introduce three quantities characterizing the index. Let $D_I = \max_j \max_{x \in C_j} \|x - \mu_j\|$ be the maximum cluster radius, $\sigma_I = \min_{j \neq j'} \|\mu_j - \mu_{j'}\|$ the minimum inter-centroid separation, and $\Delta$ the doubling dimension \footnote{The smallest integer such that every ball of radius $r$ is covered by $2^\Delta$ balls of radius $r/2$.} of the dataset union centroids. Write $r(q) = \|q - x^*(q)\|$ for the nearest-neighbor distance of query $q$, $F_r$ for its CDF over $\mathcal{Q}$, and $r_\delta = F_r^{-1}(1-\delta)$ for the $(1-\delta)$-quantile.

\begin{lemma}[Cell Intersection Bound]\label{lem:cell_bound}
$\ell^*(q, 1) \leq \bigl({4(r(q) + D_I)}/{\sigma_I}\bigr)^{\Delta}$.
\end{lemma}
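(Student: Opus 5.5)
The plan is to bound $\ell^*(q,1)$ by counting the clusters that could possibly contain a true $k$-nearest neighbor of $q$, and then to control that count with a packing argument in the doubling metric formed by the dataset together with the centroids. Two facts are needed:
\begin{itemize}
\item \emph{(Distance localization.)} Every true neighbor lies in a ball of radius roughly $r(q)$ around $q$. Any cluster containing such a neighbor therefore has its centroid within distance $r(q)+D_I$ of $q$, because each point of $C_j$ is within $D_I$ of $\mu_j$.
\item \emph{(Ranking.)} Clusters are probed in order of centroid distance. Every cluster whose centroid lies in $B(q, r(q)+D_I)$ precedes any cluster whose centroid lies outside that ball. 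Hence probing all centroids inside $B(q,R)$, with $R=r(q)+D_I$, already captures every true neighbor.
\end{itemize}
Together these give $\ell^*(q,1)\le |\{j:\mu_j\in B(q,R)\}|$.

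A caveat on the statement's $r(q)$: it is defined as the nearest-neighbor distance, so strictly this covers $k=1$ or treats $r(q)$ as the $k$-th neighbor distance. I will note this and run the argument with the radius of the $k$-NN ball. Also, if some cluster must be probed even though its centroid is farther than $R$, ranking makes that impossible, so the count above is an upper bound.

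The main step is to bound the number of $\sigma_I$-separated centroids inside a ball of radius $R$ using the doubling dimension $\Delta$.
\begin{itemize}
\item Iterate the doubling property $t$ times: $B(q,R)$ is covered by $2^{\Delta t}$ balls of radius $R/2^t$.
\item Choose $t=\lceil \log_2(4R/\sigma_I)\rceil$, which is at most $\log_2(4R/\sigma_I)+1$. Then each small ball has radius below $\sigma_I/2$, so its diameter is below $\sigma_I$.
\item Two distinct centroids are at distance at least $\sigma_I$, so each small ball contains at most one centroid.
\end{itemize}
The count is therefore at most $2^{\Delta t}$. The obstacle is the constant. The crude ceiling gives $(8R/\sigma_I)^\Delta$, not $(4R/\sigma_I)^\Delta$.

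To recover the factor $4$, I would tighten the stopping rule. One option is to stop at $R/2^t<\sigma_I$ and argue that a ball of radius $<\sigma_I$ can be split so it holds one centroid; another is a standard packing bound taking balls of radius $\sigma_I/2$ around centroids. Both give $2^{\Delta\lceil\log_2(2R/\sigma_I)\rceil}\le (4R/\sigma_I)^\Delta$.

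A second subtlety is that the covering balls may be centered at points not in the metric space, since $q$ itself may lie outside the dataset union centroids. I would handle this by centering at a point of the space within $r(q)$ of $q$, or by absorbing the shift into the radius. Which accounting gives exactly the constant $4$ is what I'd verify carefully. I'd also assume $R\ge\sigma_I/4$ or handle the trivial case where only one cluster is needed, so that the bound is at least $1$.
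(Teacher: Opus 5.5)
Your proof follows the paper's outline. Both arguments use the triangle inequality and the ranking to place every centroid probed no later than the cluster $j^*$ of $x^*(q)$ inside $B(q,R)$, where $R=r(q)+D_I$. Both then count the $\sigma_I$-separated centroids in that ball using the doubling dimension. Your $k$-NN caveat is compatible with the paper, which reads $\ell^*(q,1)$ as the probe count for the single nearest neighbor.

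You differ in the packing step. The paper cites a packing bound $(2R/\rho)^\Delta$, applies it to the enlarged ball $B(q,R+\sigma_I/2)$, and uses $2R/\sigma_I+1\le 4R/\sigma_I$ for $R\ge\sigma_I/2$. You derive the count from the definition of $\Delta$ by iterated covering, which is more self-contained. The cited constant is in fact optimistic in general: on the real line $\Delta=1$, yet $[-\rho,\rho]$ holds three $\rho$-separated points where $(2R/\rho)^\Delta=2$.

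With the right stopping rule, your covering gives the stated constant directly:
\begin{itemize}
\item Take the minimal $t$ with $R/2^t<\sigma_I/2$. Each final ball then holds at most one centroid, and $2^t\le 4R/\sigma_I$ when $R\ge\sigma_I/2$. So the count is $2^{\Delta t}\le(4R/\sigma_I)^\Delta$.
\item If $\sigma_I/4\le R<\sigma_I/2$, the ball contains only $\mu_{j^*}$. You therefore need only $R\ge\sigma_I/4$, a slightly weaker side condition than the paper's.
\end{itemize}
Three details in your draft need fixing:
\begin{itemize}
\item Your first choice $t=\lceil\log_2(4R/\sigma_I)\rceil$ is one halving too many, which is where the factor $8$ came from.
\item Your first proposed repair, stopping below radius $\sigma_I$ and ``splitting'', does not work. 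Such a ball can contain several $\sigma_I$-separated centroids.
\item The choice $t=\lceil\log_2(2R/\sigma_I)\rceil$ needs the strict inequality. Otherwise a closed ball of radius exactly $\sigma_I/2$ can contain two centroids at distance exactly $\sigma_I$.
\end{itemize}

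The paper does not resolve the centering issue either; it simply applies the packing bound to a ball centered at $q$. You should state that convention explicitly, for example by taking $\Delta$ to be the doubling dimension of a space that also contains the queries. Re-centering does not recover the constant: centering at $x^*(q)$ only gives $\bigl(4(2r(q)+D_I)/\sigma_I\bigr)^\Delta$, and centering at a centroid gives $(8R/\sigma_I)^\Delta$.
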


\begin{proof}
Let $j^*$ be the cluster containing $x^*(q)$. By the triangle inequality, $\|q - \mu_{j^*}\| \leq r(q) + D_I$, so $\mu_{j^*} \in B(q, r(q) + D_I)$. Every centroid ranked before $\mu_{j^*}$ also lies in this ball, so $\ell^*(q,1)$ is at most the number of centroids in $B(q, r(q) + D_I)$. Since centroids are pairwise separated by $\geq \sigma_I$, the balls $\{B(\mu_j, \sigma_I/2)\}$ are disjoint and contained in $B(q, r(q) + D_I + \sigma_I/2)$. By the doubling-dimension packing property~\cite{cover_trees}, a ball of radius $R$ contains at most $(2R/\rho)^\Delta$ points with pairwise distance $\geq \rho$. Setting $R = r(q) + D_I + \sigma_I/2$ and $\rho = \sigma_I$ gives $\ell^*(q,1) \leq (2(r(q) + D_I)/\sigma_I + 1)^\Delta \leq (4(r(q) + D_I)/\sigma_I)^\Delta$ when $r(q) + D_I \geq \sigma_I/2$.
\end{proof}

The ratio $(r(q) + D_I)/\sigma_I$ captures geometric query difficulty: queries far from their nearest neighbor or in poorly separated partitions require more probes. This bound is tight up to constants: by definition of doubling dimension, there exist point configurations where $\Theta((R/\rho)^\Delta)$ points with pairwise separation $\geq \rho$ fit in a ball of radius $R$~\cite{cover_trees}, so the upper bound is achievable. Note that $\ell^*(q, 1)$ bounds the probes needed to retrieve the single nearest neighbor; for $k$-NN recall at target $r^*$, the bound is conservative since achieving partial recall is strictly easier. 

\begin{proposition}[Fixed Probe Requirement]\label{prop:fixed_cost}
A fixed-$\ell$ policy achieves recall $\geq 1 - \delta$ when
\begin{equation}\label{eq:fixed_cost}
\ell \;\geq\; \ell_{\mathrm{fix}}(\delta) \;:=\; \left(\frac{4(r_\delta + D_I)}{\sigma_I}\right)^{\!\Delta}.
\end{equation}
\end{proposition}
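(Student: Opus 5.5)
The proof is a direct consequence of Lemma~\ref{lem:cell_bound} combined with the quantile definition of $r_\delta$ and monotonicity. The statement "achieves recall $\geq 1-\delta$" should be read distributionally: a query drawn from $\mathcal{Q}$ has its nearest neighbor retrieved with probability at least $1-\delta$. Averaged over queries, this gives expected (nearest-neighbor) recall at least $1-\delta$. I would state this interpretation explicitly at the outset, since the proposition is otherwise ambiguous.

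The argument has three steps.

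\emph{Step 1.} Observe that the map $r\mapsto \bigl(4(r+D_I)/\sigma_I\bigr)^{\Delta}$ is nondecreasing in $r$, because $\Delta\geq 0$ and $D_I,\sigma_I>0$.

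\emph{Step 2.} Consider the event $E=\{q: r(q)\leq r_\delta\}$. By definition $r_\delta=F_r^{-1}(1-\delta)$, so $\Pr_{q\sim\mathcal{Q}}[E]\geq 1-\delta$. If $F_r$ has atoms or flat regions, take the generalized inverse $F_r^{-1}(u)=\inf\{r:F_r(r)\geq u\}$; right-continuity of $F_r$ then gives $F_r(r_\delta)\geq 1-\delta$.

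\emph{Step 3.} For every $q\in E$, Lemma~\ref{lem:cell_bound} and Step~1 give
\[
\ell^*(q,1)\leq \bigl(4(r(q)+D_I)/\sigma_I\bigr)^{\Delta}\leq \ell_{\mathrm{fix}}(\delta)\leq \ell .
\]
Since $R_{I(D)}(q,l)$ is nondecreasing in $l$ (probing more clusters only adds candidates), probing $\ell$ clusters retrieves $x^*(q)$ for all $q\in E$. Consequently $\Pr_q[x^*(q)\text{ found}]\geq 1-\delta$, and the expected nearest-neighbor recall, which equals this probability, is at least $1-\delta$.

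For $k$-NN at target $r^*$, I would note that, in line with the remark after Lemma~\ref{lem:cell_bound}, the same argument applies with $r(q)$ replaced by the $k$-th neighbor distance. Alternatively, one can stay with the 1-NN reading, which is what the constant in (\ref{eq:fixed_cost}) reflects.

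The main obstacle is the side condition in Lemma~\ref{lem:cell_bound}, namely $r(q)+D_I\geq\sigma_I/2$. Queries with $r(q)$ small could violate it. I would handle this directly. When $r(q)+D_I<\sigma_I/2$, the ball $B(q,r(q)+D_I)$ contains $\mu_{j^*}$, and it contains at most one centroid, since any two centroids in it would be closer than $\sigma_I$. Hence $\ell^*(q,1)=1\leq\ell$ trivially, given that $\ell\geq1$ for a valid policy. A secondary subtlety is that $\ell_{\mathrm{fix}}(\delta)$ may be non-integer or exceed $L$. In that case, take $\ell=\min(L,\lceil\ell_{\mathrm{fix}}(\delta)\rceil)$; probing all $L$ clusters is exhaustive and gives full recall, so the claim still holds.
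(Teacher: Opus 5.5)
Your proof is correct and is essentially the paper's argument. The paper inverts Lemma~\ref{lem:cell_bound} to get that $r(q)\le \frac{\sigma_I}{4}\ell^{1/\Delta}-D_I$ forces $\ell^*(q,1)\le \ell$, then requires this threshold to be at least $r_\delta$, so a $(1-\delta)$-fraction of queries is covered; your added care with the side condition $r(q)+D_I\geq\sigma_I/2$ (where you correctly get $\ell^*(q,1)=1$), the generalized quantile inverse, and the 1-NN reading of recall fills in details the paper's two-line proof leaves implicit.
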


\begin{proof}
By Lemma~\ref{lem:cell_bound}, $r(q) \leq \frac{\sigma_I}{4}\,\ell^{1/\Delta} - D_I$ implies $\ell^*(q,1) \leq \ell$. Setting this threshold $\geq r_\delta$ gives~\eqref{eq:fixed_cost}.
\end{proof}

This is the cost every query pays under fixed probing --- including easy queries with $r(q) \ll r_\delta$ that would succeed with far fewer probes. We now bound what QASP pays instead.
Let $\hat{\ell}(q) = g_{\mathrm{QASP}}(q, r^*)$ be QASP's predicted probe count. Define the \emph{failure rate} $\varepsilon = \Pr[\hat{\ell}(q) < \ell^*(q, r^*)]$ and the \emph{overshoot} $\bar{\eta} = \mathbb{E}[\hat{\ell}(q) - \ell^*(q,r^*) \mid \hat{\ell}(q) \geq \ell^*(q,r^*)]$. The failure rate $\varepsilon$ is the probability QASP under-probes (missing the recall target); the overshoot $\bar{\eta}$ is the average wasted probes when it succeeds.

\begin{proposition}[QASP Expected Probes]\label{prop:qasp_cost}
$\mathbb{E}[\hat{\ell}(q)] \leq \mathbb{E}[\Lambda] + \bar{\eta} + \varepsilon \cdot L$.
\end{proposition}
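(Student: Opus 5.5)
We will prove the bound by conditioning on whether QASP over- or under-probes. Let $S$ be the event $\{\hat\ell(q)\ge \ell^*(q,r^*)\}$, so that $\Pr[S^c]=\varepsilon$ by definition of the failure rate. By the law of total expectation,
\[
\mathbb{E}[\hat\ell(q)] = \Pr[S]\,\mathbb{E}[\hat\ell(q)\mid S] + \varepsilon\,\mathbb{E}[\hat\ell(q)\mid S^c].
\]
We bound the two terms separately.

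\emph{The failure term.} Since the policy is valid, $\hat\ell(q)\le L$ always. Hence $\varepsilon\,\mathbb{E}[\hat\ell(q)\mid S^c]\le \varepsilon L$, which is the last summand of the claim.

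\emph{The success term.} On $S$ we write $\hat\ell(q)=\ell^*(q,r^*)+(\hat\ell(q)-\ell^*(q,r^*))$. Taking conditional expectation and using the definition of the overshoot gives
\[
\mathbb{E}[\hat\ell(q)\mid S]=\mathbb{E}[\ell^*(q,r^*)\mid S]+\bar\eta.
\]
Multiplying by $\Pr[S]\le 1$ yields
\[
\Pr[S]\,\mathbb{E}[\hat\ell(q)\mid S]\le \mathbb{E}[\ell^*(q,r^*)\mathbf{1}_S]+\bar\eta .
\]
Since $\ell^*\ge 1>0$, we have $\mathbb{E}[\ell^*(q,r^*)\mathbf{1}_S]\le \mathbb{E}[\ell^*(q,r^*)]=\mathbb{E}[\Lambda]$, using eq.~(\ref{eq:lambdadef}).

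\emph{Combining.} Adding the two bounds gives $\mathbb{E}[\hat\ell(q)]\le\mathbb{E}[\Lambda]+\bar\eta+\varepsilon L$, which is the claim.

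\emph{Main obstacle.} The only delicate point is bookkeeping in the success term. We must drop the factor $\Pr[S]$ in front of $\bar\eta$ using $\Pr[S]\le 1$ and $\bar\eta\ge 0$. We must also drop the indicator $\mathbf{1}_S$ in front of $\ell^*$ using nonnegativity of $\ell^*$, rather than attempting to relate the conditional mean $\mathbb{E}[\ell^*\mid S]$ to $\mathbb{E}[\Lambda]$ directly. That relation could fail, since easy queries may be overrepresented on $S$. A further technicality arises if $\Pr[S]=0$, where the conditional expectation defining $\bar\eta$ is undefined. In that case the success term is simply $0$ and the bound holds trivially from the failure term, interpreting $\bar\eta$ as $0$.
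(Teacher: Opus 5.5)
Your proof is correct and is essentially the paper's argument. You condition on success versus failure, bound the failure contribution by $\varepsilon L$, and control the success contribution via $\Pr[S]\,\mathbb{E}[\ell^*\mid S]=\mathbb{E}[\ell^*\mathbf{1}_S]\le\mathbb{E}[\Lambda]$ and $\Pr[S]\,\bar\eta\le\bar\eta$, which is exactly the paper's step $(1-\varepsilon)\mathbb{E}[\ell^*\mid\text{success}]\le\mathbb{E}[\Lambda]$ (your handling of the $\Pr[S]=0$ case is a harmless extra, and on the failure event $\hat\ell<\ell^*\le L$ holds automatically, so validity is not even needed there).
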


\begin{proof}
Condition on success ($\hat{\ell} \geq \ell^*$) and failure ($\hat{\ell} < \ell^*$):
$\mathbb{E}[\hat{\ell}] = (1-\varepsilon)(\mathbb{E}[\ell^* \mid \text{success}] + \bar{\eta}) + \varepsilon\,\mathbb{E}[\hat{\ell} \mid \text{failure}]$.
Since $(1-\varepsilon)\mathbb{E}[\ell^* \mid \text{success}] \leq \mathbb{E}[\Lambda]$, $(1-\varepsilon)\bar{\eta} \leq \bar{\eta}$, and $\mathbb{E}[\hat{\ell} \mid \text{failure}] \leq L$ (a worst-case bound; in practice failure queries cost far less than $L$ probes), the bound follows.
\end{proof}

\begin{corollary}[Dominance Condition]\label{cor:dominance}
At matched recall ($\varepsilon = \delta$), QASP has strictly lower expected data access than fixed probing whenever
\begin{equation}\label{eq:dominance}
\underbrace{\bar{\eta} + \varepsilon \cdot L}_{\text{predictor overhead}} \;<\; \underbrace{\left(\frac{4(r_\delta + D_I)}{\sigma_I}\right)^{\!\Delta} - \mathbb{E}[\Lambda]}_{\text{tail waste of fixed probing}}.
\end{equation}
\end{corollary}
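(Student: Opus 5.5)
The corollary follows by comparing the two cost bounds already in hand: an upper bound on QASP's expected probe count from Proposition~\ref{prop:qasp_cost}, and the probe count a fixed policy must use to reach recall $1-\delta$ from Proposition~\ref{prop:fixed_cost}.

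\emph{Fixed-policy side.} At matched recall, the fixed policy must guarantee recall $\geq 1-\delta$. By Proposition~\ref{prop:fixed_cost}, this requires provisioning $\ell_{\mathrm{fix}}(\delta)=\bigl(4(r_\delta+D_I)/\sigma_I\bigr)^{\Delta}$ probes, and every query pays this cost. Its expected data access is therefore the constant $\ell_{\mathrm{fix}}(\delta)$.

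\emph{QASP side.} With failure rate $\varepsilon=\delta$, QASP misses the target on the same fraction of queries as the fixed policy. This is what ``matched recall'' means here: both policies satisfy a $1-\delta$ fraction of the workload. Proposition~\ref{prop:qasp_cost} then gives
\[
\mathbb{E}[\hat\ell(q)] \leq \mathbb{E}[\Lambda]+\bar\eta+\varepsilon L .
\]

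\emph{Comparison.} We want $\mathbb{E}[\hat\ell(q)]<\ell_{\mathrm{fix}}(\delta)$. It suffices that
\[
\mathbb{E}[\Lambda]+\bar\eta+\varepsilon L<\ell_{\mathrm{fix}}(\delta).
\]
Rearranging gives exactly \eqref{eq:dominance}, with predictor overhead on the left and tail waste on the right. Since the cost of a partition-based index is the number of probed clusters, weighted by cluster sizes, ``data access'' is identified with the probe count. For unbalanced clusters one would note that this identification is exact for equal-size clusters and is otherwise a proxy. I would add this as a remark rather than a proof step.

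\emph{Main obstacle.} The difficulty is interpretive rather than technical: making ``matched recall'' precise. The fixed-policy guarantee in Proposition~\ref{prop:fixed_cost} is stated through the single-nearest-neighbor quantity $\ell^*(q,1)$ and holds on the $(1-\delta)$-fraction of queries with $r(q)\le r_\delta$. QASP's $\varepsilon$, by contrast, is defined with respect to $\ell^*(q,r^*)$.

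I would reconcile the two by reading both as ``fraction of queries meeting the target.'' For the fixed policy, $\Pr[r(q)\le r_\delta]=1-\delta$, together with Lemma~\ref{lem:cell_bound}, gives success on at least a $1-\delta$ fraction of queries. Partial recall targets are only easier, so $\ell^*(q,r^*)\le\ell^*(q,1)$. This makes setting $\varepsilon=\delta$ a fair comparison, after which the inequality chain above is immediate.
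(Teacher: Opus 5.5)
Your proposal is correct and follows the paper's argument, which likewise just combines Proposition~\ref{prop:fixed_cost} (every query pays $\ell_{\mathrm{fix}}(\delta)$) with Proposition~\ref{prop:qasp_cost} and rearranges $\mathbb{E}[\Lambda]+\bar\eta+\varepsilon L<\ell_{\mathrm{fix}}(\delta)$ into the stated condition. One wording point: Proposition~\ref{prop:fixed_cost} only says $\ell_{\mathrm{fix}}(\delta)$ probes \emph{suffice}, so ``fixed probing'' must be read as the fixed policy run at $\ell_{\mathrm{fix}}(\delta)$ (as the paper implicitly does), not as a probe count the fixed policy \emph{requires}.
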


\begin{figure}[t]
\centering
\includegraphics[width=0.9\columnwidth]{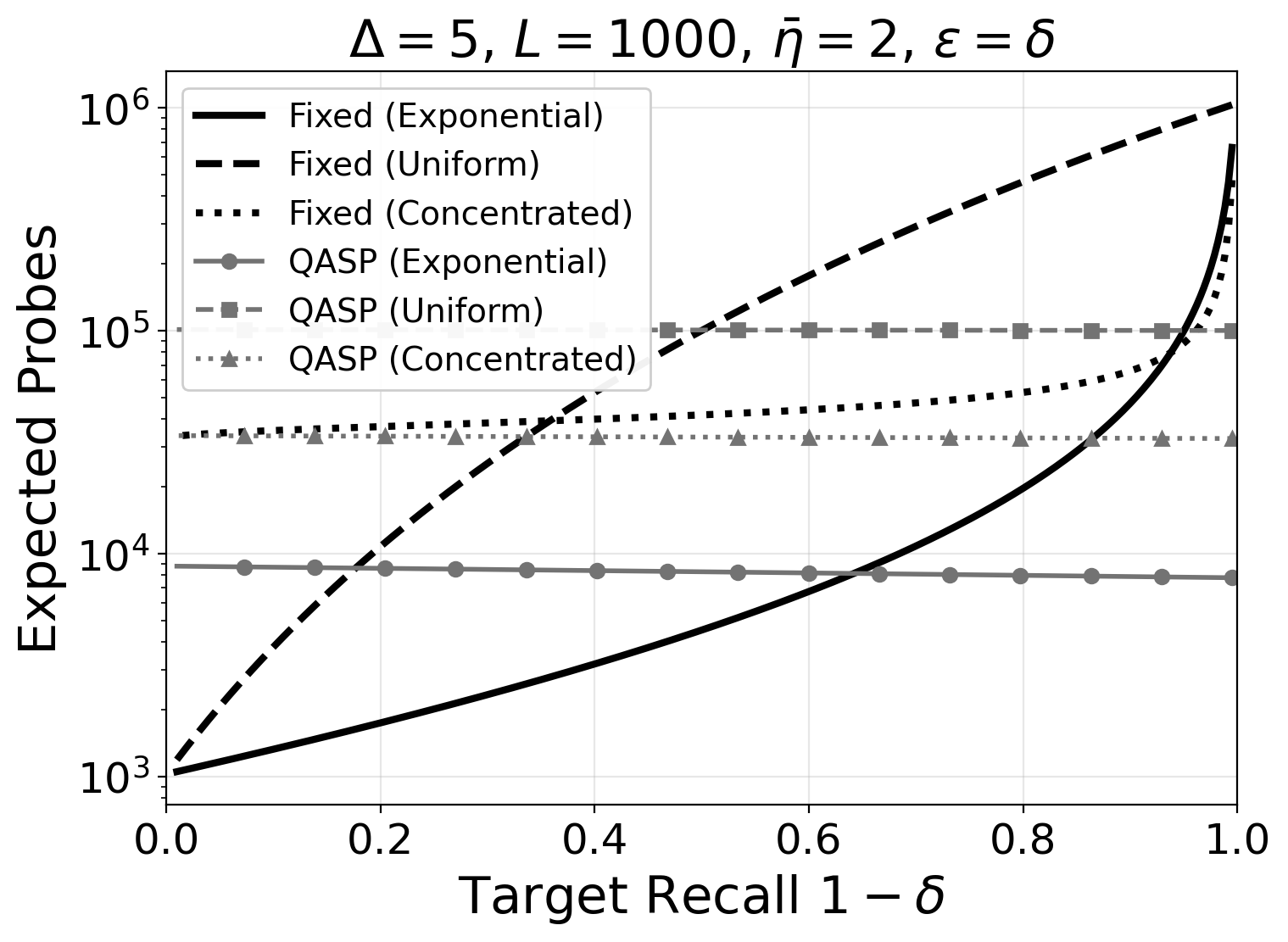}
\caption{Expected probes vs.\ recall target for fixed probing (black) and QASP (gray) under three NN distance distributions ($\Delta=5$, $L=1000$, $\bar{\eta}=2$, $\varepsilon = \delta$). The vertical gap between matched curves is the data access savings from adaptive probing. QASP benefits most when $F_r$ has a heavy tail (exponential, uniform) and at high recall targets, where fixed probing must provision for the worst-case quantile $r_\delta$.}
\label{fig:scan_cost}
\end{figure}

Combining Propositions~\ref{prop:fixed_cost} and~\ref{prop:qasp_cost} yields the main result.
The left side is the total cost of predictor imperfection: wasted probes from over-prediction ($\bar{\eta}$) plus the penalty from under-prediction failures ($\varepsilon L$). The right side is the \emph{tail
waste}: the gap between the fixed probe count (driven by the hardest queries via $r_\delta$) and the average difficulty $\mathbb{E}[\Lambda]$. This gap grows as the $\Delta$-th power of the ratio between worst-case and average query difficulty. 
QASP dominates whenever the predictor is ``good enough'' relative to this
heterogeneity; conversely, when $F_r$ is concentrated, all queries are equally hard and adaptive probing offers no advantage. Figure~\ref{fig:scan_cost} illustrates the dominance condition under three NN
distance distributions: the gap between fixed and QASP curves at any recall target represents data access savings, largest for heavy-tailed distributions at high recall and vanishing for concentrated distributions. Note that we do not model $\varepsilon L$ and $\bar{\eta}$ as functions of the recall target, which in practice influences predictor performance.

\section{Related Work}\label{sec:related}

QASP targets partitioning-based indexing, the most comprehensive class of vector search methods, which partition the space for pruning and scale to distributed architectures. These include clustering and IVF (Inverted File Index) \cite{ferhatosmanoglu2001clusterindex,vqindex2002,jegou2010product,chen2021spann,quake}, multi-dimensional space trees \cite{bentley1975multidimensional,omohundro1989five,zezula1998approximate,cover_trees}, and vector quantization \cite{ferhatosmanoglu2000vector, vqindex2002,jegou2010product} which uses clustering at its core. Partitioning-based indices achieve sublinear scaling \cite{sunscaling}, making them the preferred choice for large vector databases due to their parallelizability and resource efficiency. Partitioning also underpins distributed search architectures to distribute vectors across storage units or processors \cite{vexless, jayaram2019diskann, oakley2025squash, manohar2024parlayann}.

Vector search libraries predominantly relied on static empirical parameter settings and target average recall without adapting to query difficulty. These include defaulting to nprobe = 8 (number of clusters to search), recommending proportional scaling with nlist (total number of clusters) and experimental tuning \cite{milvus_guide}; setting nprobe to cover 5--10\% of the dataset \cite{lancedb_guide}; setting nlist based on dataset size ({$4 \cdot \sqrt{N}$ \text{ to } $16 \cdot \sqrt{N}$}) with limited nprobe guidance \cite{faiss_guidelines} and testing multiple values (1, 4, 16, 64, 256) \cite{faiss_autotune_example}. SPANN \cite{chen2021spann} and SQUASH \cite{oakley2025squash} use thresholds to decide which clusters to search.

Popular graph-based methods \cite{malkov2018efficient} constrain the candidate list size at query time, traditionally fixed for all queries. Recent methods \cite{laet,chatzakis2025darth,adaef2026} set this parameter per query through either statistical scoring or reactive early termination. Statistical scoring offers lightweight runtime adaptation but requires distributional assumptions that may not hold across diverse workloads. Reactive early termination provides flexible stopping criteria but periodically invokes models during traversal, introducing per-step inference overhead that limits model capacity. These methods target graph indices.  For IVF indices, PCE-Net~\cite{PceNet} predicts a single nprobe per query, coupling the prediction to a specific recall target and resulting in higher variance and data access (Section~\ref{sec:baselines}). QASP instead learns fine-grained recall contributions over partitions proactively through a single upfront inference. This enables richer architectures and generalization across recall targets, with significant improvements also demonstrated experimentally.

\begin{table*}[t]
\centering
\caption{Comparison of QASP-DL and Oracle Nprobe policy across three recall targets.
Oracle Nprobe achieves recall target by over-reading in the 50\% Easy split and under-reading in the 50\% Hard split, while QASP optimizes search on both splits.
We highlight $\bar{r}$ with large deviation from $r^*$ in red otherwise in teal, while best average QVE metrics are emboldened when $\bar{r} \approx r^*$. These results empirically validate Corollary \ref{cor:dominance} where $A\%$ improvements are better at higher recall targets.}\label{table:main}
\resizebox{\textwidth}{!}{%
\begin{tabular}{ll|ccccc|ccccc|ccccc}
\toprule
\multirow{2}{*}{\textbf{Dataset}} & \multirow{2}{*}{\textbf{Search Policy}}
& \multicolumn{5}{c}{\(\mathbf{r^* = 90\%}\)} 
& \multicolumn{5}{c}{\(\mathbf{r^* = 95\%}\)} 
& \multicolumn{5}{c}{\(\mathbf{r^* = 99\%}\)} \\
\cmidrule(lr){3-7} \cmidrule(lr){8-12} \cmidrule(lr){13-17}
& & \(\bar{r}\) & \(\sigma^2\downarrow\) & \(\delta\downarrow\) & \(S\%\uparrow\) & \(A\%\downarrow\)
            & \(\bar{r}\) & \(\sigma^2\downarrow\) & \(\delta\downarrow\) & \(S\%\uparrow\) & \(A\%\downarrow\)
            & \(\bar{r}\) & \(\sigma^2\downarrow\) & \(\delta\downarrow\) & \(S\%\uparrow\) & \(A\%\downarrow\) \\
\midrule
\multirow{1}{*}{{AVERAGE}} & Oracle Nprobe & \textcolor{teal}{89.87} & 142.71 & 9.39 & 72.61 & \textbf{3.92} & \textcolor{teal}{94.93} & 53.12 & 5.32 & 80.89 & 7.08 & \textcolor{teal}{98.93} & 4.49 & \textbf{1.21} & \textbf{95.85} & 18.64 \\ 
 & QASP-DL & \textcolor{teal}{90.97} & \textbf{60.37} & \textbf{6.23} & \textbf{79.95} & 3.98 & \textcolor{teal}{95.12} & \textbf{25.62} & \textbf{3.80} & \textbf{85.17} & \textbf{6.02} & \textcolor{teal}{98.60} & \textbf{4.19} & 1.26 & 95.07 & \textbf{12.23} \\
\multirow{1}{*}{50\% EASY}  & Oracle Nprobe & \textcolor{red}{96.57} & 25.79 & 7.89 & 94.51 & 4.11 & \textcolor{red}{98.63} & 6.35 & 4.23 & 97.37 & 7.35 & \textcolor{red}{99.66} & 0.50 & 0.85 & 99.71 & 19.06 \\ 
 & QASP-DL & \textcolor{teal}{91.61} & 63.97 & 6.68 & 81.06 & 1.83 & \textcolor{teal}{95.41} & 29.15 & 4.07 & 85.66 & 2.71 & \textcolor{teal}{98.74} & 4.23 & 1.19 & 95.54 & 6.56 \\ 
\multirow{1}{*}{50\% HARD} & Oracle Nprobe & \textcolor{red}{83.26} & 167.28 & 10.82 & 51.77 & 3.75 & \textcolor{red}{91.31} & 71.26 & 6.38 & 64.83 & 6.82 & \textcolor{red}{98.27} & 6.73 & 1.53 & 92.29 & 18.24 \\ 
 & QASP-DL & \textcolor{teal}{90.49} & 52.06 & 5.71 & 79.37 & 6.03 & \textcolor{teal}{94.84} & 20.25 & 3.50 & 84.83 & 9.17 & \textcolor{teal}{98.46} & 3.49 & 1.27 & 94.74 & 17.82 \\
    
\bottomrule
\end{tabular}}
\end{table*}

\section{Performance Evaluation}\label{sec:experiments}
\subsection{Experimental Setup}
All experiments are performed on \textit{Intel(R) Xeon(R) Platinum 8175M CPU @ 2.50GHz \& 96 CPU cores}. We use Python 3.10 with Pytorch 1.13 for deep-learning and scikit-learn 1.2.2 for gradient boosting, regression and kmeans++.
\subsubsection{Vector Datasets}\label{sec:datasets}
We build QASP on seven datasets from a range of vector spaces, distance measures and application domains. 
SIFT1M \cite{sift}, 1 million 128-dimensional SIFT image descriptors; MNIST \cite{mnist}, 60k 784-dimensional vectors trained on handwritten digits; GIST1M \cite{gist}, 1 million 960-dimensional global color GIST descriptors. DEEP1B\_10M, 10 million subset of Deep1B \cite{deep1b} with 96-dimensional deep learning embeddings; GLOVE-200 \cite{glove}, 1.2 million 200-dimensional word embeddings trained on Wikipedia; COCO-I2I \cite{mscoco}, 113k 512-dimensional vectors for image-to-image retrieval; COCO-T2I \cite{mscoco}, 113k 512-dimension vectors for text-to-image retrieval. Each dataset has a $Query\,Set$ with $100$ true neighbors from exact search.

\subsubsection{Training, Validation and Test Sets}\label{sec:train_set}
We use a random sample of $\min(1000, \lfloor|Query\,Set|/10\rfloor)$ queries for generating offline training and validation sets. Model features, as discussed in section \ref{sec:features}, are obtained per query and centroid rank by searching up to $\min(300, \lfloor0.3 L\rfloor)$ probes. Ground truth is used to obtain recall target labels. The offline dataset is split 80:20 for training and validation. For testing, $1000$ queries are sampled from the remaining query set and features are obtained on-the-fly during search. See Section \ref{sec:fit_analysis} for model performance and fit analysis.

\subsubsection{Model Details}
Deep Learning (DL) neural architecture of QASP, called QASP-DL, contains 3 hidden layers with hidden dimension of 18 with ReLU activation \cite{relu}. Each input feature is projected on a 4-dim learnable embeddings space following tabular learning using DL \cite{gorishniy2022embeddings}. We use batch normalization on input embeddings and dropout with rate 0.1 for regularization. Each DL model is trained for 100 epochs with a constant learning rate of 5e-3 and batch size of 512, though convergence is observed much earlier. The best model is selected using held-out validation set performance. Gradient boosting decision tree (GBDT) model variant of QASP, called QASP-GBDT, is trained using 100 trees with a maximum depth of 3 and a learning rate of 0.1. Polynomial variant of QASP, called QASP-LITE, is a lightweight and easy to adopt alternative in production environment, is fitted using Lasso regression with $\alpha$=1e-5 on sklearn \texttt{PolynomialFeatures} processing with degree 3.

\textit{Latency and Throughput}\label{sec:latency} - The DL model has a modest parameter count of $1513$ trainable parameters and a total memory footprint of just 6-7 KB, making it practical for deployment even in memory-constrained environments. For DL model, mean latency on CPU with batch size 32 is 0.26 ms/batch, throughput 124K queries/sec and P99 latency 0.33 ms using 4 threads. For GBDT, mean latency on CPU with batch size 32 is 0.199 ms/batch, throughput 161K queries/sec and P99 latency 0.55 ms. For QASP-LITE, mean latency on CPU with batch size 32 is 0.06 ms/batch, throughput 524K queries/sec and P99 latency 0.09 ms. All measurement is over 100 runs with 10 warm-up iterations.
\subsubsection{Index Details}
We first build flat IVF indices with nlist as ${scalar} \times \sqrt{n}$ where $n$ is the base size and use three $scalar$ values $[0.5, 1, 4]$ to span across different index configurations per dataset. Distance metrics is either Squared Euclidean or Cosine Distance depending on the dataset and the clustering algorithm used is k-means++ \cite{kmeans++}. We also build a hierarchical index using two-level kmeans clustering
for scaling experiments to larger datasets of size ($> 10MM$). Each level of the index is clustered using fixed nlist as $\sqrt[3]{n}$ that gives equal sized partitions ideal for disk-based experiments. 

\subsection{Query Variability-Aware Evaluation}\label{sec:eval_framework}

The traditional method for evaluating vector similarity search results is to measure the deviation of observed average recall from recall target. Given observed average recall, $\bar{r} = \frac{1}{|Q|}
\sum_{q \in Q} r(q)$, where $r(q)$ is the observed recall for query $q$ and recall target $r^*$, traditional evaluation concerns only with $|\bar{r} - r^*|$. While useful as a summary statistic, it masks performance variations across queries and fails to capture user experience consistency.
To address this shortcoming, we propose a comprehensive Query Variability-aware Evaluation (QVE) framework with the following metrics to compare query variability even when $\bar{r} \approx r^*$ between search policies.

\begin{CompactItemize}
    \item \textbf{Recall Variance}: $\sigma^2 = \frac{1}{|Q|} \sum_{q \in Q} (r(q) - \bar{r})^2$ - Measures variance of the observed recall across test queries.
    
    \item \textbf{Absolute Query Deviation}: $\delta = \frac{1}{|Q|} \sum_{q \in Q} |r(q) - r^*|$ - Measures mean of absolute deviation from recall target.
    
    \item \textbf{Query Satisfaction Rate}: $S\% = 100 \times \frac{|\{q \in Q : r(q) \geq r^*\}|}{|Q|}$ - \% of queries satisfied by the search policy, i.e. achieve at least recall $r^*$. Since exact $r^*$ per query is rarely useful, we subtract a small margin (5\%) from $r^*$ to relax the criteria.
    \item \textbf{\% Data Accessed}: $A\% =$ $\frac{100}{|Q \times D|}$ $\sum_{q \in Q} |\{v \in D : d(v,q) \text{ is calculated} \}|$, \% of data accessed.
\end{CompactItemize}

\subsection{QASP versus Oracle Fixed Policy}\label{sec:main_results}
We first compare QASP to policies that search a fixed number of probes, which are industry standard where several open source libraries like Faiss \cite{faiss_guidelines}, Milvus \cite{milvus_guide}, and OpenSearch \cite{opensearch_guide} recommend tuning \texttt{nprobe} parameter for IVF for a given dataset and query workload. To simplify comparison, we obtain the best fixed policy, called Oracle Nprobe, which is the smallest \texttt{nprobe} that results in $\bar{r} \approx r^*$. The Oracle Nprobe policy represents a theoretical upper bound of what fixed policies can achieve and thus is a strong baseline to QASP. We obtain Oracle Nprobe from the same training dataset used for training QASP models. See Table \ref{table:main} contains comparison results averaged across the seven datasets. 

Both Oracle Nprobe and QASP achieve similar $\bar{r}$'s close to the target. However, QASP consistently outperforms Oracle Nprobe on QVE metrics achieving $57.7\%$ lower $\sigma^2$ and $33.7\%$ lower $\delta$, while satisfying $7.3\%$ more queries and accessing just $0.06\%$ more data for $r^* = 90\%$. As $r^*$ increases, QASP accesses much fewer data than Oracle Nprobe to achieve the same recall without lowering satisfaction rate. To further understand this behavior, we split the test dataset into two halves - \textit{50\% EASY} and \textit{50\% HARD}, using Local Relative Contrast feature. We observe that Oracle Nprobe consistently over-reads in the \textit{50\% EASY} set and under-reads in the \textit{50\% HARD} set. QASP, being query adaptive, does the opposite by reading less in the \textit{50\% EASY} set and more in the \textit{50\% HARD} set as intended. 

\subsection{Baselines}\label{sec:baselines}
We extend our comparison to proactive baselines from popular vector search libraries and prior work. Several libraries recommend tuning \texttt{nprobe} parameter for IVF \cite{milvus_guide,
faiss_guidelines, opensearch_guide} for which we use Oracle Nprobe. LanceDB \cite{lancedb_guide} suggests searching for a fixed \% of vectors for which we obtain Oracle Access \% from training dataset. SPANN
\cite{chen2021spann} is a popular clustering index with query-aware pruning that searches until cluster rank $j$ such that $Dist(q, c_j) \leq (1 + \epsilon) \times Dist(q, c_1)$. SQUASH \cite{oakley2025squash}
also involves a scalar multiple on distance to create a search mask. We obtain Oracle Distance Multiplier to represent these methods. Vexless \cite{vexless} uses a threshold on absolute distance, for which
we derive Oracle Distance. PCE-Net \cite{PceNet} is a learned baseline that predicts a single nprobe per query using neural encoders over the query vector and centroid distance distribution; we
evaluate them on the same IVF index for a fair comparison of the probing approaches.

Table \ref{tab:baseline} contains comparative results across the seven datasets. First, we observe that all the oracle baselines can be tuned to achieve $\bar{r} \approx r^*$ except Oracle Distance
\cite{vexless}. PCE-Net achieves high recall but at a steep cost: it reads 14.8\% of data—more than twice QASP-DL's 6.0\%—as its asymmetric loss incentivizes over-probing. Its
variance ($\sigma^2 = 69.8$) and deviation ($\delta = 5.6$) also exceed all QASP variants. QASP consistently outperforms all baselines on QVE metrics while achieving recall target, satisfying more queries and accessing fewer vectors. We split test queries into easy and hard. We observe that fixed policies like Oracle Nprobe and Oracle Access\% over-read by
4.3\% for easy queries and under-read by 2.3\% for hard queries. Oracle Distance Multiplier is more adaptive; reading less in easy queries and more in hard queries. However, when compared to QASP models this
policy had 176\% higher variance. PCE-Net \cite{PceNet}, which predicts per query nprobe, exhibits an inverted pattern: it over-reads on hard queries ($\bar{r} = 97.7$) and under-reads on easy queries ($\bar{r} = 93.5$). PCE-Net achieves recall target at higher variance and deviation compared to fixed policies like Oracle Nprobe and Oracle Access$\%$. Moreover, while it achieves a higher $S\%$ than QASP, it achieves so by accessing $\approx 1.5\times$ more than QASP-DL with $A\%=14.78$, highest among all methods.

\begin{table}[t]
\centering
\caption{Comparison of QASP to proactive baselines at recall target 95\%. QASP outperforms Oracle policies averaged over seven datasets. QASP-DL outperforms other variants.}\label{tab:baseline}
\resizebox{\columnwidth}{!}{%
\begin{tabular}{ll|ccccc}
\toprule
\multirow{2}{*}{{Dataset}} & \multirow{2}{*}{{Search Policy}}
& \multicolumn{5}{c}{\(\mathbf{r^* = 95\%}\)}\\
\cmidrule(lr){3-7}
& &  \(\bar{r}\) & \(\sigma^2\downarrow\) & \(\delta\downarrow\) & \(S\%\uparrow\) & \(A\%\downarrow\) \\
\midrule
AVERAGE & Oracle Nprobe \cite{milvus_guide, faiss_guidelines, opensearch_guide} & \textcolor{teal}{94.90} & 53.57 & 5.34 & 80.72 & 7.08 \\
& Oracle Access \% \cite{lancedb_guide} & \textcolor{teal}{94.84} & 53.16 & 5.30 & 80.14 & 6.90 \\
& Oracle Distance Multiplier \cite{chen2021spann, oakley2025squash} & \textcolor{teal}{94.60} & 70.72 & 5.32 & 82.83 & 9.21 \\ 
& Oracle Distance \cite{vexless} & \textcolor{red}{92.40} & 302.94 & 9.07 & 82.13 & 14.69 \\ 
& PCE-Net \cite{PceNet} & \textcolor{teal}{95.63} & 69.76 & 5.55 & \textbf{87.09} & 14.78 \\
50\% EASY & Oracle Nprobe \cite{milvus_guide, faiss_guidelines, opensearch_guide} & \textcolor{red}{98.63} & 6.35 & 4.23 & 97.37 & 7.35 \\ 
& Oracle Access \% \cite{lancedb_guide} & \textcolor{red}{98.33} & 8.67 & 4.21 & 96.14 & 6.89 \\ 
& Oracle Distance Multiplier \cite{chen2021spann, oakley2025squash} & \textcolor{red}{92.50} & 104.25 & 6.52 & 74.66 & 2.56 \\
& Oracle Distance \cite{vexless} & \textcolor{red}{98.70} & 28.50 & 4.96 & 97.11 & 17.27 \\
& PCE-Net \cite{PceNet} & \textcolor{red}{93.49} & 88.92 & 5.99 & 79.69 & 15.22 \\ 
50\% HARD & Oracle Nprobe \cite{milvus_guide, faiss_guidelines, opensearch_guide} & \textcolor{red}{91.31} & 71.26 & 6.38 & 64.83 & 6.82 \\  
& Oracle Access \% \cite{lancedb_guide} & \textcolor{red}{91.36} & 73.15 & 6.42 & 64.26 & 6.90 \\ 
& Oracle Distance Multiplier \cite{chen2021spann, oakley2025squash} & \textcolor{red}{96.70} & 27.53 & 4.20 & 91.46 & 15.53 \\ 
& Oracle Distance \cite{vexless} & \textcolor{red}{86.29} & 482.88 & 12.99 & 67.34 & 12.09 \\
& PCE-Net \cite{PceNet} & \textcolor{red}{97.69} & 50.60 & 5.11 & 94.49 & 14.33 \\
\midrule
AVERAGE & QASP-DL & \textcolor{teal}{95.12} & \textbf{25.62} & \textbf{3.80} & {85.17} & \textbf{6.02} \\ 
& QASP-GBDT & \textcolor{teal}{95.09} & 27.51 & 3.89 & 84.95 & 6.31 \\  
& QASP-LITE & \textcolor{teal}{95.43} & 34.95 & 4.16 & 86.58 & 7.33 \\
50\% EASY & QASP-DL & \textcolor{teal}{95.41} & 29.15 & 4.07 & 85.66 & 2.71 \\ 
& QASP-GBDT & \textcolor{teal}{95.51} & 28.92 & 4.08 & 86.49 & 2.68 \\ 
& QASP-LITE & \textcolor{teal}{95.00} & 44.46 & 4.68 & 83.54 & 3.17 \\
50\% HARD & QASP-DL & \textcolor{teal}{94.84} & 20.25 & 3.50 & 84.83 & 9.17 \\ 
& QASP-GBDT & \textcolor{teal}{94.67} & 25.13 & 3.71 & 83.23 & 9.90 \\ 
& QASP-LITE & \textcolor{teal}{95.89} & 20.97 & 3.57 & 89.49 & 11.46 \\
\bottomrule
\end{tabular}}
\end{table}

\subsection{Deployment Efficiency}

\begin{figure}[b]
\centering
\begin{subfigure}{0.48 \columnwidth}
    \centering
    \includegraphics[width=\linewidth]{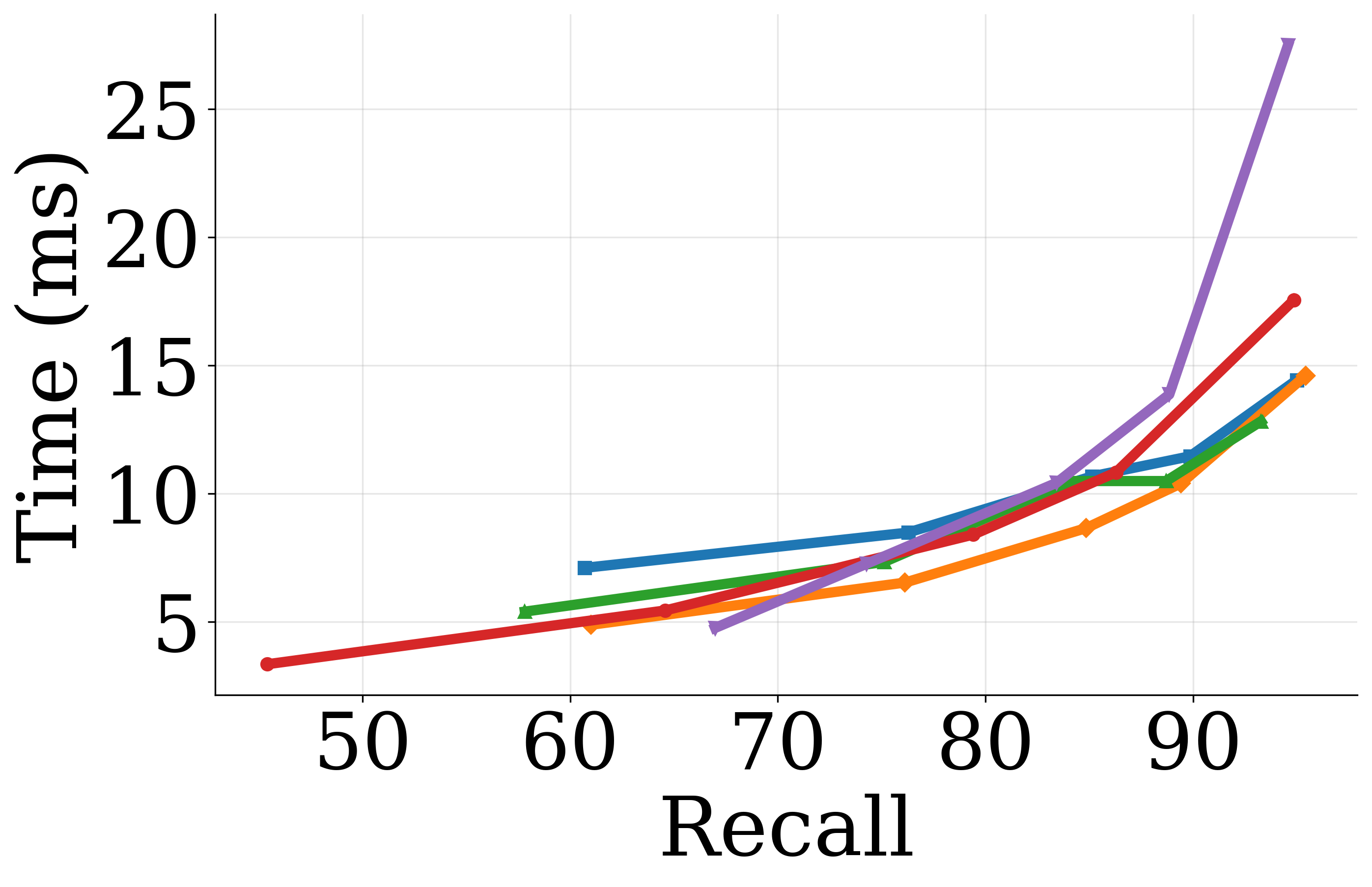}
    \caption{In-Memory}
\end{subfigure}
\begin{subfigure}{0.48 \columnwidth}
    \centering
    \includegraphics[width=\linewidth]{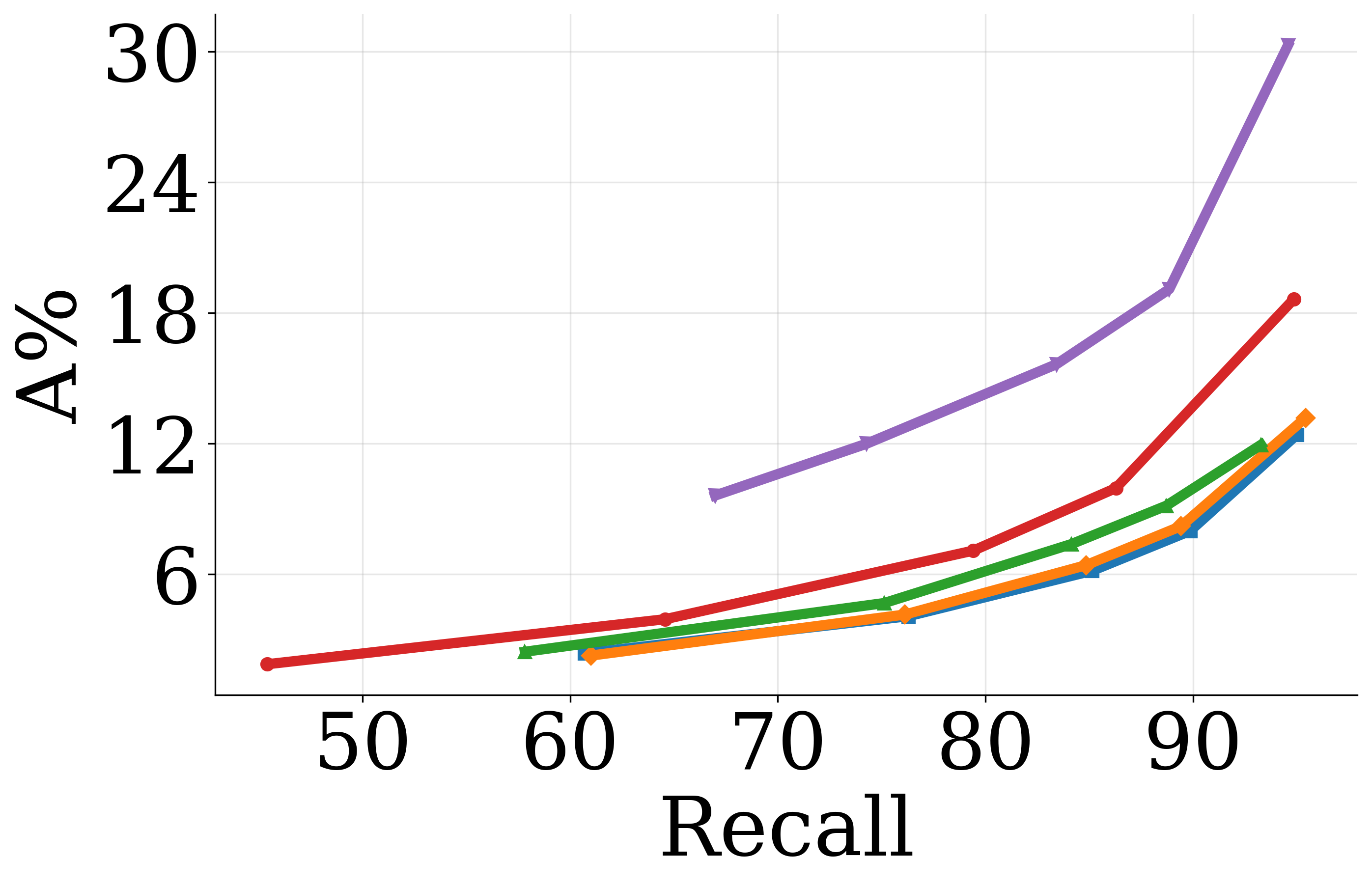}
    \caption{Disk-Based}
\end{subfigure}

\caption{Search time (ms) vs. recall (in-memory) and data accessed vs. recall (disk-based). QASP achieves recall target with less search cost than Oracle Nprobe and PCE-Net, with gains further widening at higher recall targets. Legend:
\textcolor{mplblue}{QASP-DL}, \textcolor{orange}{QASP-GBDT}, \textcolor{mplgreen}{QASP-LITE}, \textcolor{red}{Oracle Nprobe}, \textcolor{violet}{PCE-Net}}
\label{fig:search_efficiency}
\end{figure}

Figure \ref{fig:search_efficiency} evaluates QASP's deployment efficiency in both in-memory and disk-based scenarios. We also include Oracle Nprobe which yields the most competitive latency and recall variance among all baselines (though still much higher than QASP, cf. Table \ref{table:main}).
For in-memory deployment, we measure total runtime (milliseconds) required to satisfy $95\%$ of queries. QASP involves a single inference plus search time, while Oracle Nprobe only incurs search time. QASP-DL's marginal inference overhead ($\sim$1.5 ms/query) becomes proportionally smaller at higher recall targets and with larger datasets, as search time increases while inference time remains constant. We observe diminishing recall returns at higher recalls with Oracle Nprobe compared to QASP models, eventually offsetting inference cost.
QASP-DL achieves significant gains for larger datasets ($>$10M) where the fixed inference cost is offset by more efficient search, as also shown in Section~\ref{sec:scaling}.

For disk-based settings, QASP's benefits are even more pronounced as I/O costs dominate search time. The single inference becomes negligible compared to I/O costs, and QASP directly reduces disk operations. QASP variants consistently read 20-40\% less data than Oracle Nprobe across all recall levels, with this efficiency gap widening as recall targets increase (Figure~\ref{fig:search_efficiency}). This behavior is supported by the theoretical dominance condition illustrated in Figure~\ref{fig:scan_cost}.

\subsection{Domain Adaptation}\label{sec:generalization_results}
\subsubsection{Cross-Dataset Generalization}
We evaluate QASP's generalization to unseen datasets using two QVE metrics ($\bar{r}$ and $\delta$) under two transfer settings: zero-shot (direct application) and few-shot (fine-tuning with $1\%$ of target data; see Section~\ref{sec:generalization}). As shown in Table~\ref{tab:generalization_datasets}, QASP-DL often generalizes well zero-shot—e.g., GIST1M$\to$SIFT1M achieves $\bar{r}=98.60$ without tuning. Where gaps exist (e.g., SIFT1M$\to$GIST1M), few-shot fine-tuning with minimal data closes them effectively. Training on diverse sources further helps: a model trained on SIFT1M+MNIST combined outperforms individual models on GIST1M in the zero-shot setting.

\begin{table}[h]
\centering
\caption{Cross-dataset generalization of QASP for recall target 95\%. QASP-DL models generalize well under zero-shot and achieve recall close to target, while few-shot training bridges the gap on average with fully trained.
QASP models can be fine-tuned with just 1\% queries from target domain.}

\label{tab:generalization_datasets}
\resizebox{\columnwidth}{!}{%
\begin{tabular}{ll|cc|cc|cc}
\toprule
\textbf{Source} & \textbf{Target} & \multicolumn{2}{c|}{\textbf{Zero-Shot}} & \multicolumn{2}{c|}{\textbf{Few-Shot}} & \multicolumn{2}{c}{\textbf{Fully Trained}} \\
 &  & \(\bar{r}\) & \(\delta\downarrow\) & \(\bar{r}\) & \(\delta \downarrow\) & \(\bar{r}\) & \(\delta\downarrow\) \\
\midrule
GIST1M & SIFT1M  & 98.60 & 4.01 & 93.70 & 4.23 & 95.47 & 3.44\\
MNIST & GIST1M  & 95.50 & 3.97 & 92.00 & 5.09 & 92.30 & 4.56\\
SIFT1M & GIST1M  & 26.80 & 68.22 & 92.90 & 4.36 & 92.30 & 4.56\\
SIFT1M + MNIST & GIST1M  & 93.70 & 4.57 & 94.60 & 3.87 & 92.30 & 4.56\\
COCOI2I & GLOVE200  & 93.10 & 5.52 & 94.90 & 3.84 & 95.23 & 3.46 \\
GLOVE200 & COCOI2I  & 91.80 & 6.30 & 96.80 & 3.63 & 95.83 & 3.98 \\
\textbf{AVERAGE} & - & \textbf{83.25} & \textbf{15.76} & \textbf{94.15} & \textbf{4.17} & \textbf{93.91} & \textbf{4.09} \\
\bottomrule
\end{tabular}}
\end{table}

\begin{table}[b]
\centering
\caption{Cross-index generalization of QASP on QVE metrics for recall target = 95\%. QASP models trained on one index configuration (source) generalize well zero-shot to unseen configurations (target) reducing the need of fine-tuning.}
\label{tab:generalization_indices}
\resizebox{\columnwidth}{!}{
\begin{tabular}{lcc|cc|cc}
\toprule
\textbf{Dataset} & \textbf{Source Config} & \textbf{Target Config} &
\multicolumn{2}{c|}{\textbf{Zero-Shot}} &
\multicolumn{2}{c}{\textbf{Fully Trained}} \\
 &  & & \(\bar{r}\) & \(\delta\downarrow\) & \(\bar{r}\) & \(\delta\downarrow\)  \\
\midrule
SIFT1M  & IVF(L=1000) & IVF(L=4000)     & 85.00 & 10.83 & 93.50 & 4.08 \\
SIFT1M  & IVF(L=4000) & IVF(L=1000)     & 96.30 & 3.30 & 95.47 & 3.44 \\
GIST1M  & IVF(L=1000) & IVF(L=4000)     & 91.50 & 5.17 & 91.40 & 5.17 \\
GIST1M  & IVF(L=4000) & IVF(L=1000)     & 92.30 & 4.56 & 92.30 & 4.56 \\
COCOI2I  & IVF(L=336) & IVF(L=1346)     & 87.80 & 9.28 & 95.10 & 4.08 \\
COCOI2I  & IVF(L=1346) & IVF(L=336)     & 97.20 & 3.74 & 95.83 & 3.98 \\
GLOVE200  & IVF(L=1087) & IVF(L=4531)   & 94.70 & 3.79 & 94.70 & 3.79 \\
GLOVE200  & IVF(L=4531) & IVF(L=1087)   & 95.10 & 3.59 & 95.23 & 3.46 \\
\textbf{AVERAGE} & - & - & \textbf{93.84} & \textbf{4.78} & \textbf{94.29} & \textbf{4.07} \\
\bottomrule
\end{tabular}}
\vspace{-5pt}
\end{table}

\subsubsection{Cross-Index Configuration Generalization}
We investigate the robustness of QASP models to varying index configurations by testing how well they perform on the same dataset but with a different number of clusters $L$. As observed in Table \ref{tab:generalization_indices}, QASP-DL models generalize well on unseen index configurations without fine-tuning. Domain adaptation methods remain applicable where few-shot learning can bridge the gap. However, zero-shot application of QASP models to unseen indices produces competitive performance on average.

\subsection{Reactive Complement}\label{sec:experiments_reactive}
We evaluate the performance of QASP's reactive complement described in Section~\ref{sec:reactive_complement}, which guides policy behavior at search. See Algorithm \ref{alg:reactive_qasp} for the psuedocode of the reactive complement policy based on observed discovery rates. The implementation parameters include a smoothing factor $\alpha = 0.3$, $L_{\delta}=3$ and a threshold of 0.25 on $\delta_i$ to capture positive and negative rates. For baselines, we use LAET \cite{laet} which pauses search at a fixed nprobe and predicts when to stop and DARTH \cite{chatzakis2025darth} which predicts recall at regular intervals to terminate the search early.
We tune LAET's \emph{multiplier} parameter to attain recall target of 95\%. For DARTH, we set \emph{mpi} and \emph{ipi} parameters using recommended defaults from training statistics. For a fair comparison, we train DARTH models for $k=100$ and test on unseen queries and search parameters.
Table~\ref{tab:reactive} shows results when all mechanisms are applied on the same IVF index. Both QASP and QASP + Reactive consistently outperform LAET and DARTH on all three datasets with QASP + Reactive further boosting QASP performance.

\begin{algorithm}[h]
\caption{Reactive Adaptive Search for a Single Query}
\label{alg:reactive_qasp}
\begin{algorithmic}[1]
\Require Query $q$, nearest cluster ordering $\mathcal{C}_q$, planned probes $P$, recall target $r^*$
\Require Recall estimates $\{\hat{r}_i\}_{i=1}^{|\mathcal{C}_q|}$ from QASP model

\Require Constants: $\epsilon$ (margin of error), $\delta$ (deviation threshold), $\alpha$ (smoothing factor), $L_{\delta}$ (consecutive deviation limit)
\Ensure Top-$k$ results and actual probes used

\State $P_{\min} \leftarrow \max\!\left(1,\; \lfloor P(1-\epsilon)\rfloor - L_{\delta} \right)$
\State $P_{\max} \leftarrow \min\!\left(|\mathcal{C}_q|,\; \lceil P(1+\epsilon)\rceil - L_{\delta}\right)$

\State Initialize smoothed discovery rate $SDR \leftarrow 0$
\State Initialize counters $c^{-} \leftarrow 0$, $c^{+} \leftarrow 0$
\State Initialize previous neighbors set $\mathcal{R}_{\text{prev}} \leftarrow \emptyset$

\For{$p = P_{\min}$ \textbf{to} $P_{\max}$}
\State $\mathcal{R}_p \leftarrow \textsc{SearchOneCluster}(q, \mathcal{C}_q^{p}, \mathcal{R}_{\text{prev}})$

\If{$p = P_{\min}$} \Comment{Skip comparison for first probe}
    \State $\mathcal{R}_{\text{prev}} \leftarrow \mathcal{R}_p$
    \State \textbf{continue}
\EndIf

\State $\text{DR}_p \leftarrow {|\mathcal{R}_p \setminus \mathcal{R}_{\text{prev}}|}/{k}$ \Comment{True discovery rate}
\State $\widetilde{\text{DR}}_p \leftarrow \max(1e^{-3},\; \hat{r}_p - \hat{r}_{p-1})$ \Comment{Estimated discovery rate}

\State $SDR \leftarrow \alpha \cdot \text{DR}_p + (1-\alpha)\cdot SDR$
\State $\Delta_p \leftarrow {(SDR - \widetilde{\text{DR}}_p)}/{\widetilde{\text{DR}}_p}$

\If{$\Delta_p < -\delta$}
    \State $c^{-} \leftarrow c^{-} + 1$, \quad $c^{+} \leftarrow 0$ \Comment{Negative discovery}
\ElsIf{$\Delta_p > \delta$}
    \State $c^{+} \leftarrow c^{+} + 1$, \quad $c^{-} \leftarrow 0$ \Comment{Positive discovery}
\Else
    \State $c^{-}, c^{+} \leftarrow 0$
\EndIf

\If{$c^{-} \geq L_{\delta}$ \textbf{and} $\hat{r}_p \geq r*(1-\epsilon)$}
    \State \textbf{break} \Comment{Over-search detected}
\EndIf

\If{$p = P$ \textbf{and} $c^{-} = 0$ \textbf{and} $c^{+} = 0$}
    \State \textbf{break} \Comment{Consistent with estimation}
\EndIf

\State $\mathcal{R}_{\text{prev}} \leftarrow \mathcal{R}_p$
\EndFor

\State \Return $\mathcal{R}_p$
\end{algorithmic}
\end{algorithm}

\begin{table}[h]
\centering
\caption{Comparison of QASP to reactive baselines at recall target 95\%. QASP outperforms both baselines and reactive complement improves QVE metrics at higher average recall.}

\label{tab:reactive}
\resizebox{\columnwidth}{!}{
\begin{tabular}{ll|ccccc}
\toprule
 \multirow{2}{*}{{Dataset}} & \multirow{2}{*}{{Search Policy}}
& \multicolumn{5}{c}{\(\mathbf{r^* = 95\%}\)}\\
\cmidrule(lr){3-7}
& &  \(\bar{r}\) & \(\sigma^2\downarrow\) & \(\delta\downarrow\) & \(S\%\uparrow\) & \(A\%\downarrow\) \\
\midrule
 SIFT1M & LAET \cite{laet} & 95.44 & 78.74 & 5.59 & 86.2 & 4.14 \\
  & DARTH \cite{chatzakis2025darth} & 94.20 & 29.45 & 4.13 & 79.5 & 3.19 \\
  & QASP-DL & 95.47 & 19.64 & 3.44 & 87.77 & 2.76 \\
 & QASP + Reactive & 95.78 & 17.23 & 3.36 & 89.97 & 2.83 \\
 GIST1M & LAET \cite{laet} & 94.91 & 117.41 & 5.54 & 86.5 & 10.42 \\
 & DARTH \cite{chatzakis2025darth} & 91.94 & 47.14 & 5.51 & 65.7 & 9.39 \\
  & QASP-DL & 92.30 & 34.57 & 4.56 & 68.50 & 6.12 \\
 & QASP + Reactive & 94.61 & 24.85 & 3.85 & 80.6 & 7.64  \\
 DEEP1B\_10M & LAET \cite{laet} & 94.83 & 42.98 & 4.86 & 80.53 & 0.87 \\
 & DARTH \cite{chatzakis2025darth} & 95.81 & 27.15 & 4.23 & 85.8 & 1.35 \\
  & QASP-DL & 95.20 & 27.20 & 3.84 & 85.47 & 0.91 \\
 & QASP + Reactive & 95.83 & 20.91 & 3.64 & 87.97 & 0.94 \\
 \midrule
 AVERAGE & LAET \cite{laet} & \textcolor{teal}{95.06} & 79.71 & 5.33 & 84.41 & 5.23 \\
& DARTH \cite{chatzakis2025darth} & \textcolor{red}{93.98} & 34.58 & 4.62 & 77.00 & 4.64 \\
  & QASP-DL & \textcolor{teal}{94.32} & 27.14 & 3.95 & 80.58 & \textbf{3.26} \\
 & QASP + Reactive & \textcolor{teal}{95.41} & \textbf{21.00} & \textbf{3.62} & \textbf{86.18} & 3.80 \\
\bottomrule
\end{tabular}}
\end{table}

\subsection{Scaling to Hierarchical Index}\label{sec:scaling}
We evaluate hierarchical scaling (Section \ref{sec:other_indices}) by applying a QASP-DL model trained on Deep1B-10M to larger subsets of both SIFT1B and Deep1B (10M–100M). For each subset, we construct a two-level index with \texttt{nlist1} = \texttt{nlist2} = $\sqrt[3]{n}$, yielding balanced partitions suitable for disk-based search. We compare against \textit{Oracle Nprobe}, which optimizes \texttt{nprobe1} via binary search with \texttt{nprobe2=nlist2}.
QASP operates level-wise, predicting adaptive \texttt{nprobe1} from first-level features and selectively probing second-level partitions based on predicted recall contribution. As shown in Figure~\ref{fig:two_level_scalability}, this reduces data access by $\geq$80\% at 99\% recall consistently across both datasets (SIFT1B: 82--87\%, Deep1B: 80--84\%), demonstrating that QASP's efficiency gains generalize across data distributions and scale without
retraining.

\begin{figure}[t]
\centering
\includegraphics[width=\columnwidth]{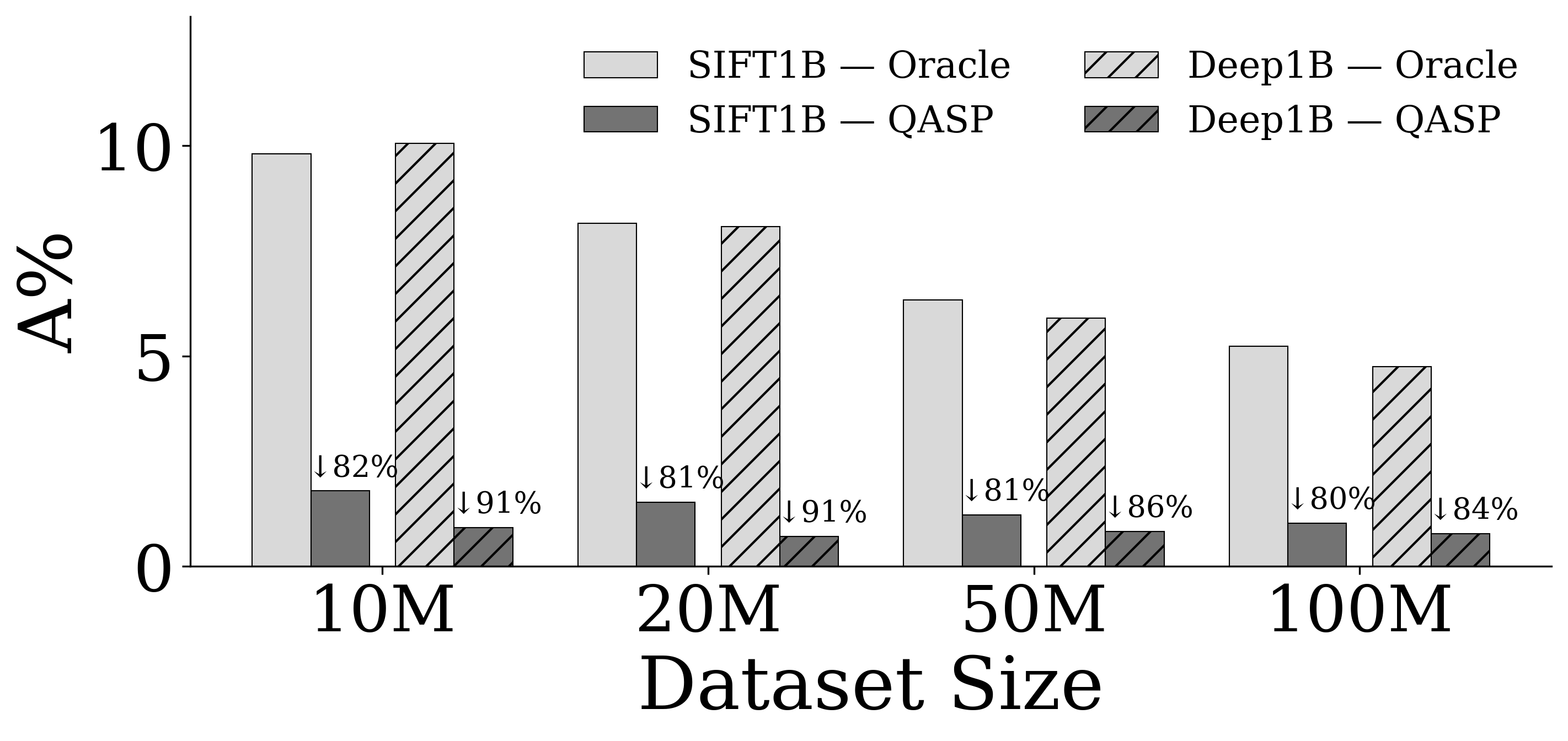}
\caption{$A\%$ for SIFT1B and Deep1B subsets (10M--100M) on a two-level hierarchical index at 99\% recall target. QASP is trained on a 10M subset and applied at inference without retraining. QASP consistently yields $\geq$80\% reduction in $A\%$.}
\label{fig:two_level_scalability}
\end{figure}

\section{Regression Fit Analysis}\label{sec:fit_analysis}

\subsection{Training Error Analysis}
\begin{figure}[h]
\centering
\begin{subfigure}{0.22\columnwidth}
    \includegraphics[width=\linewidth]{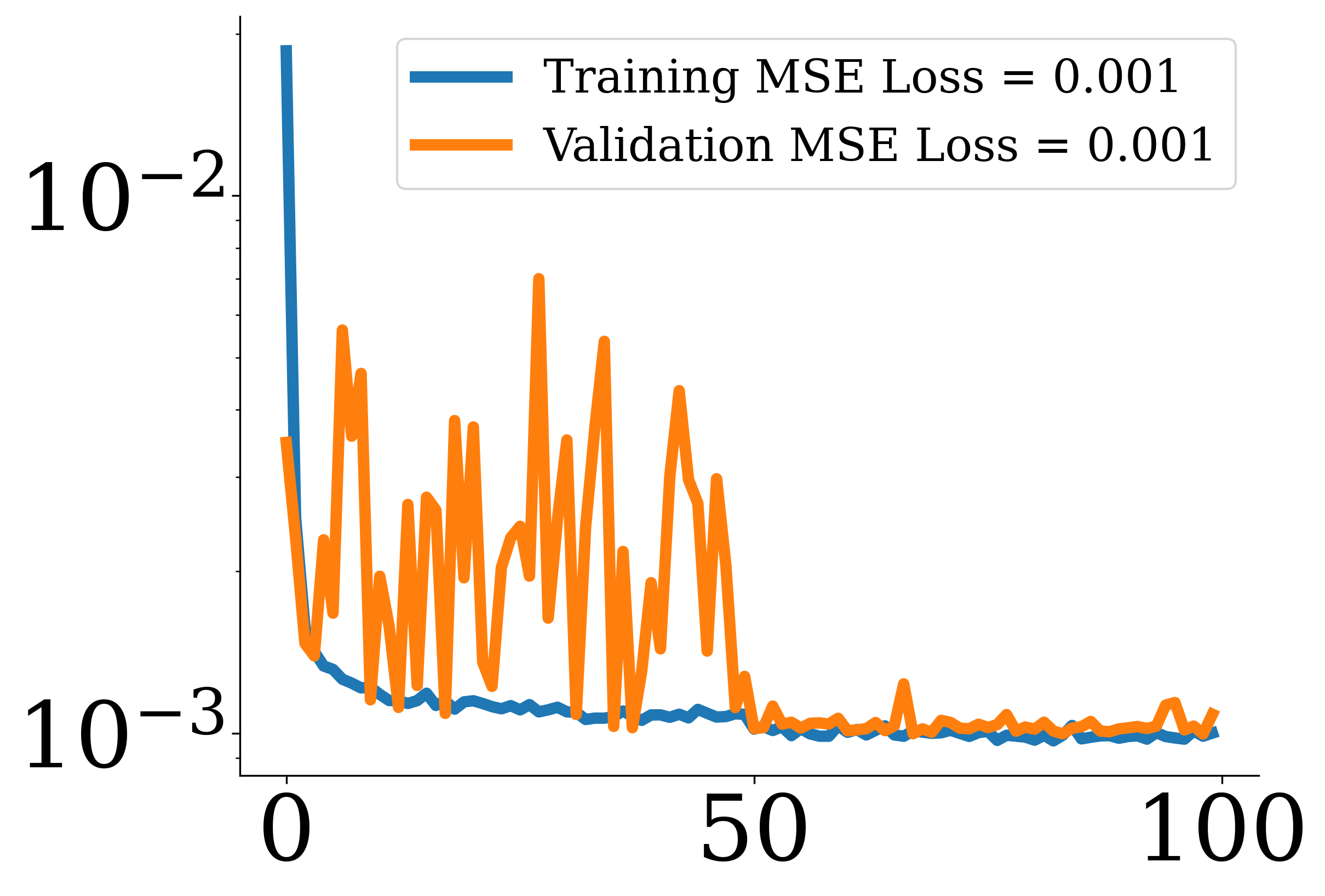}
    \caption{MNIST}
\end{subfigure}
\hfill
\begin{subfigure}{0.22\columnwidth}
    \includegraphics[width=\linewidth]{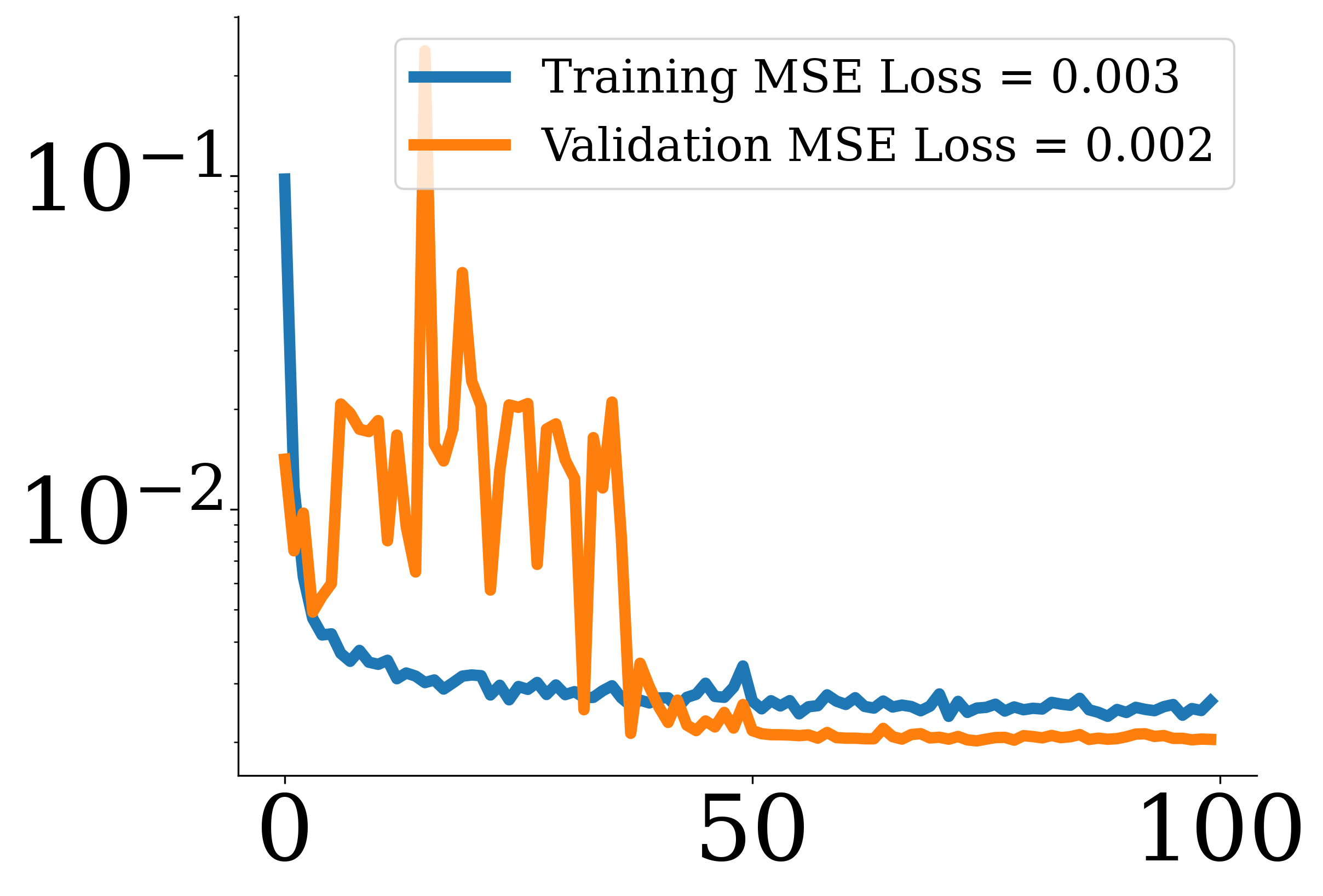}
    \caption{GIST}
\end{subfigure}
\hfill
\begin{subfigure}{0.22\columnwidth}
    \includegraphics[width=\linewidth]{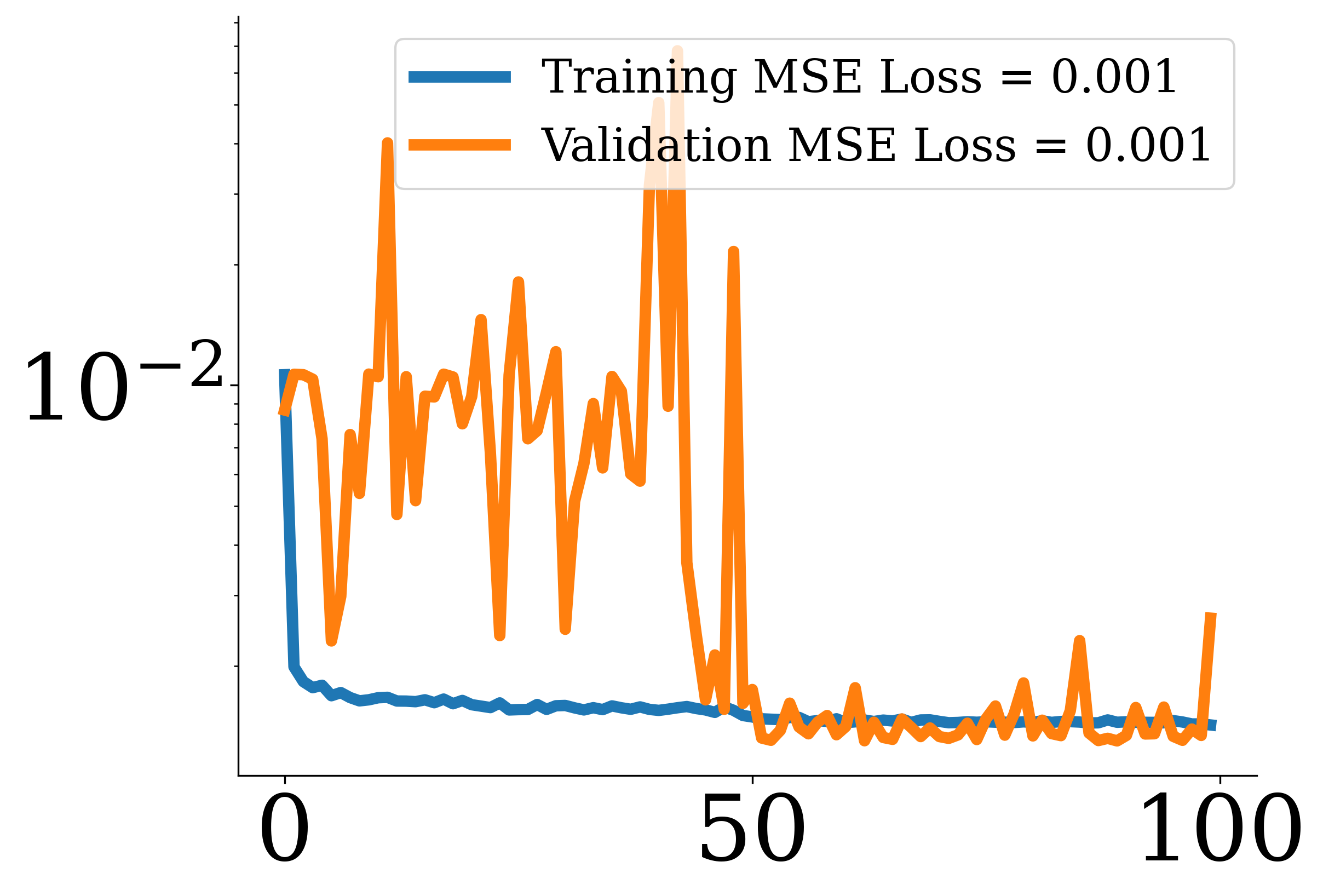}
    \caption{SIFT1M}
\end{subfigure}
\hfill
\begin{subfigure}{0.22\columnwidth}
    \includegraphics[width=\linewidth]{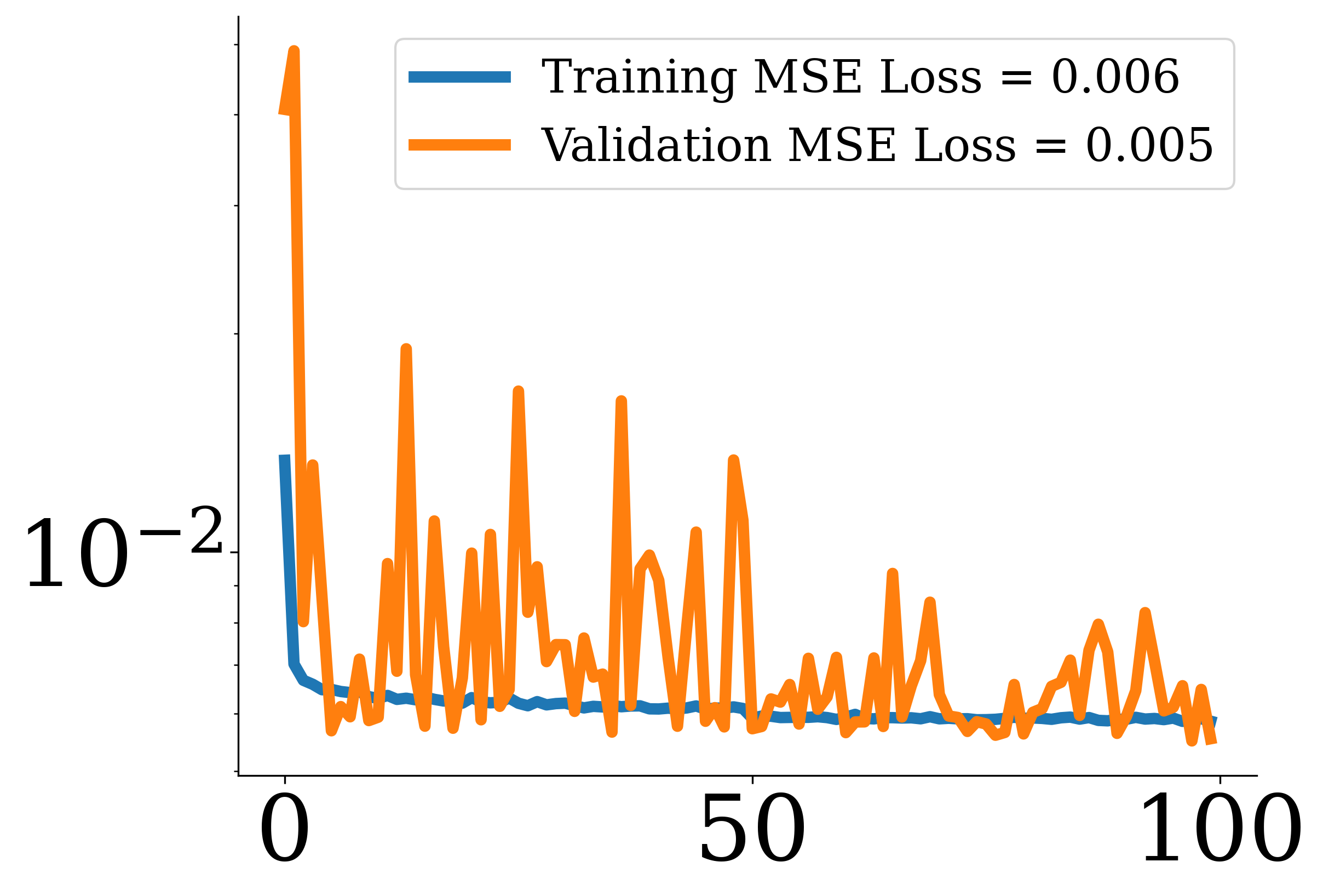}
    \caption{GLOVE-200}
\end{subfigure}
\hfill
\begin{subfigure}{0.22\columnwidth}
    \includegraphics[width=\linewidth]{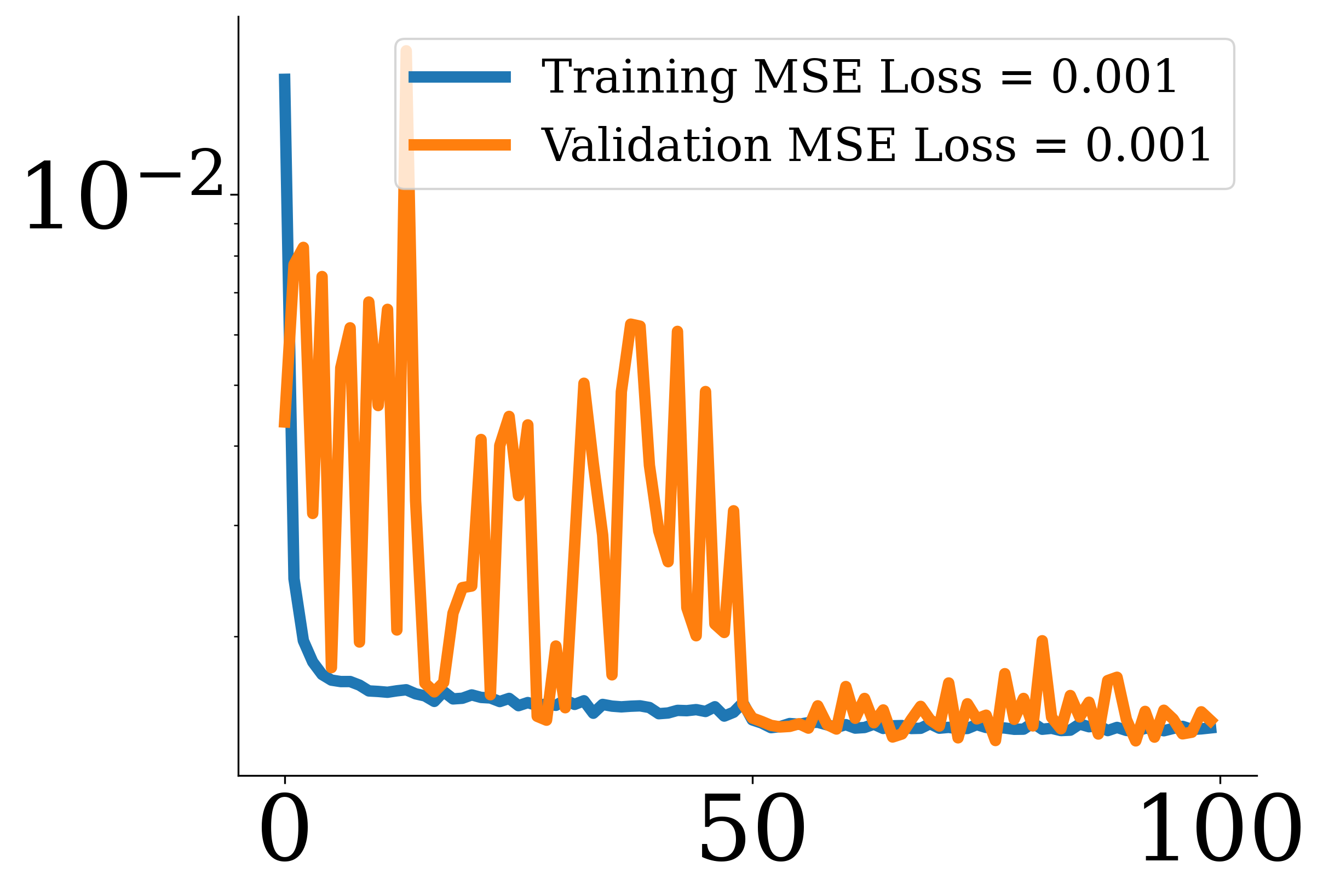}
    \caption{DEEP1B\_10M}
\end{subfigure}
\begin{subfigure}{0.22\columnwidth}
    \includegraphics[width=\linewidth]{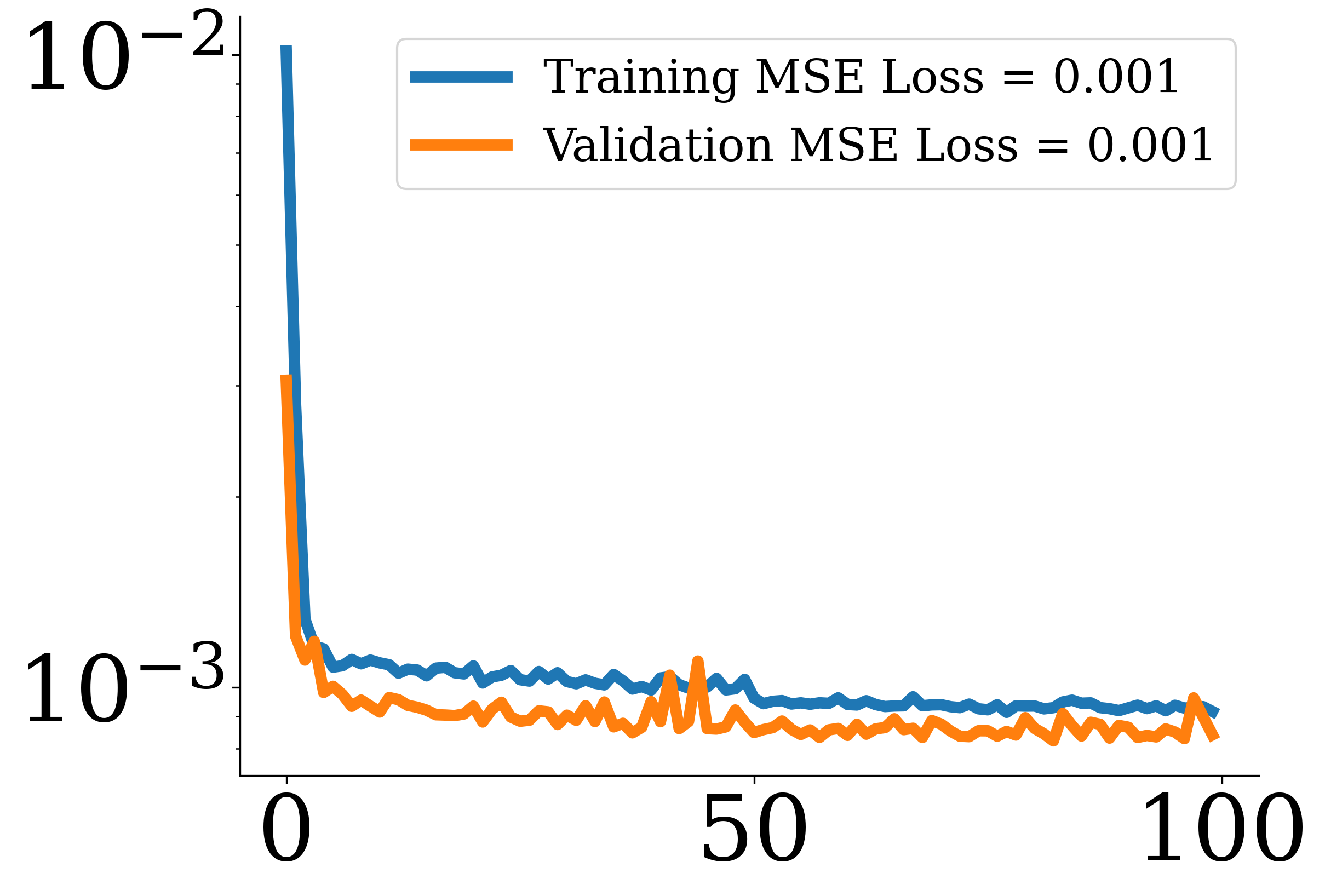}
    \caption{COCOI2I}
\end{subfigure}
\begin{subfigure}{0.22\columnwidth}
    \includegraphics[width=\linewidth]{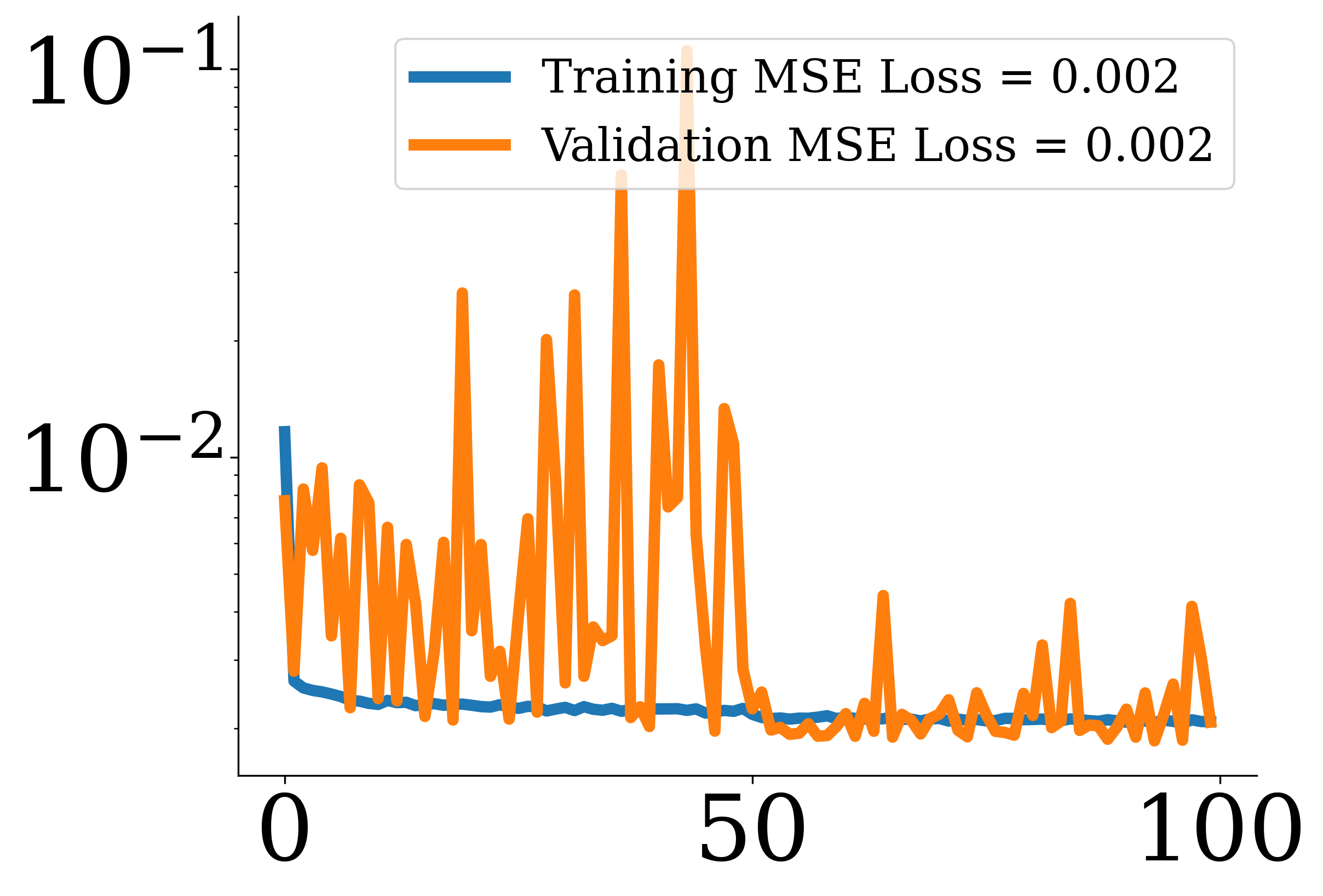}
    \caption{COCOT2I}
\end{subfigure}
\caption{Training loss (MSE) over epochs}
\label{fig:training_error_combined}
\end{figure}
Figure~\ref{fig:training_error_combined} presents training error curves for three diverse datasets. All models converge rapidly, with most error reduction occurring within the first 20--30 epochs. Training and validation curves follow similar trajectories with minimal gaps, indicating good generalization without overfitting. Dataset-specific differences in convergence speed and final error levels reflect the varying complexity of the recall prediction task. Overall, QASP trains efficiently across diverse datasets, enabling rapid adaptation to new index configurations and query distributions in practice.

\subsection{Prediction Error Analysis}
Figure~\ref{fig:prediction_error_combined} shows actual versus predicted recall for DL and GBDT models (we omit Lite for brevity). Both models achieve MSE of 0.002 and similar R$^2$ with DL slightly better. Errors are balanced around the diagonal with a slight bias toward over-prediction—preferable for meeting recall targets. Prediction accuracy remains high across the entire recall range.

\begin{figure}[h]
\centering
\begin{subfigure}{0.23\columnwidth}
    \includegraphics[width=\linewidth]{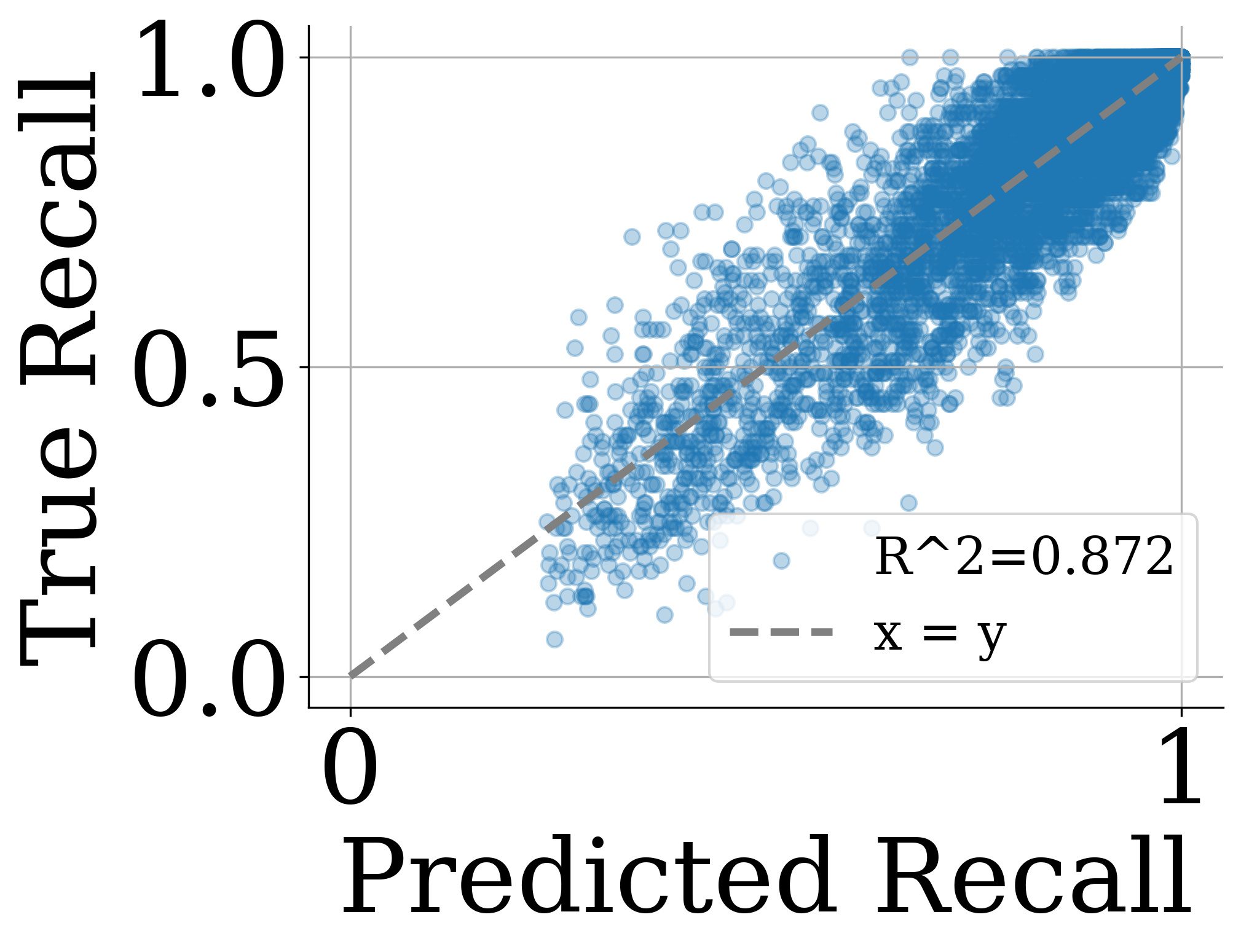}
    \caption{SIFT1M(G)}
\end{subfigure}
\begin{subfigure}{0.23\columnwidth}
    \includegraphics[width=\linewidth]{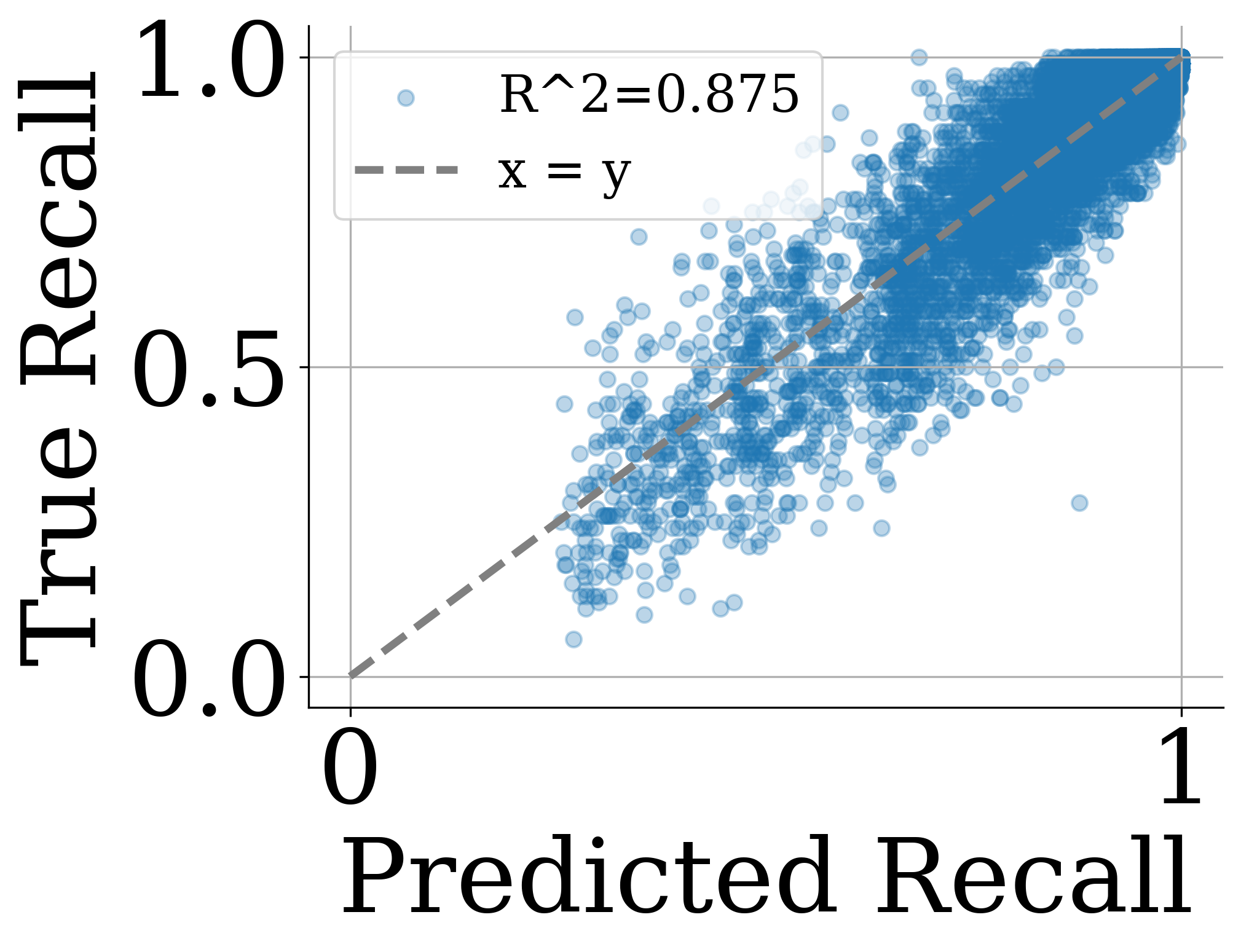}
    \caption{SIFT1M(D)}
\end{subfigure}
\begin{subfigure}{0.23\columnwidth}
    \includegraphics[width=\linewidth]{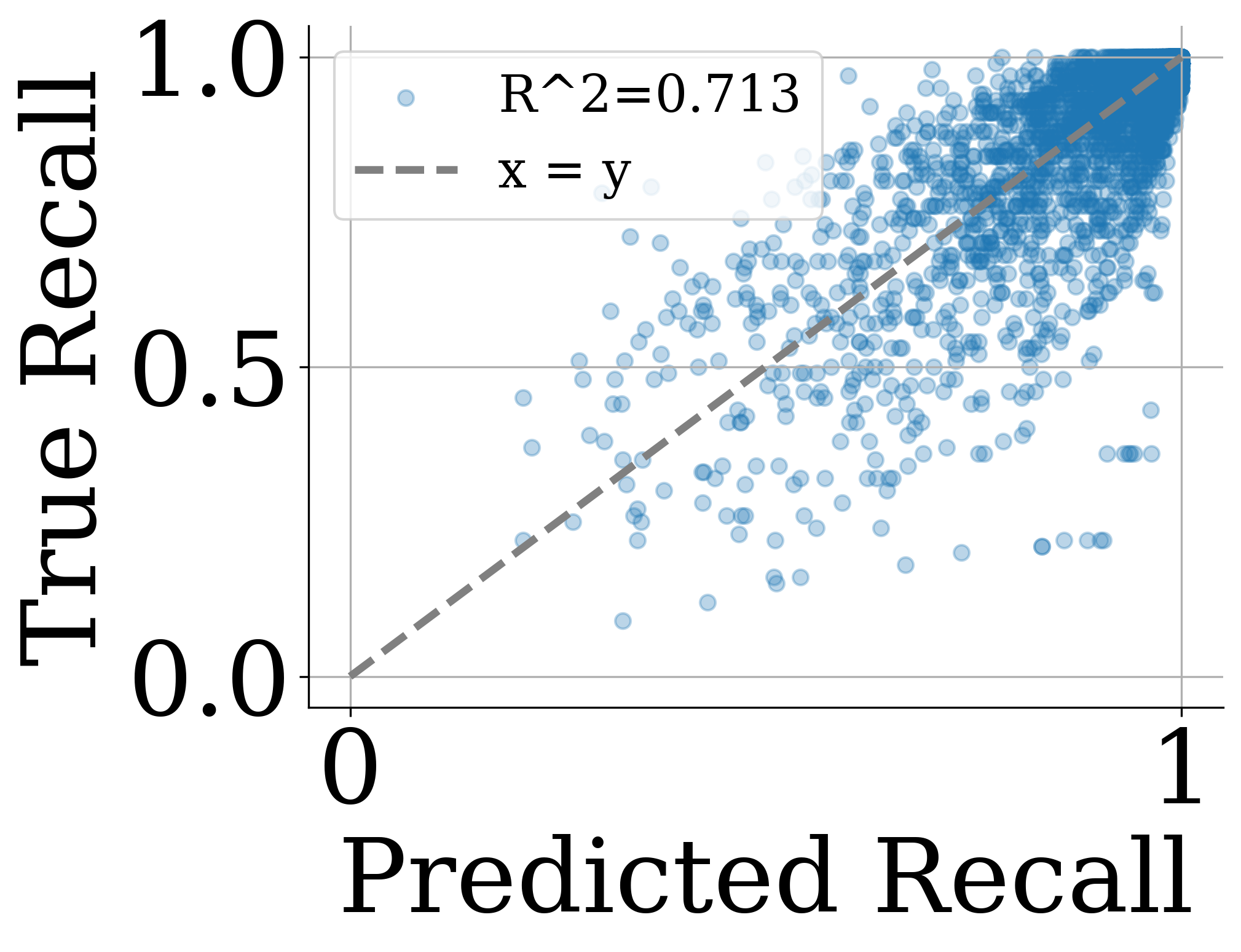}
    \caption{MNIST(G)}
\end{subfigure}
\begin{subfigure}{0.23\columnwidth}
    \includegraphics[width=\linewidth]{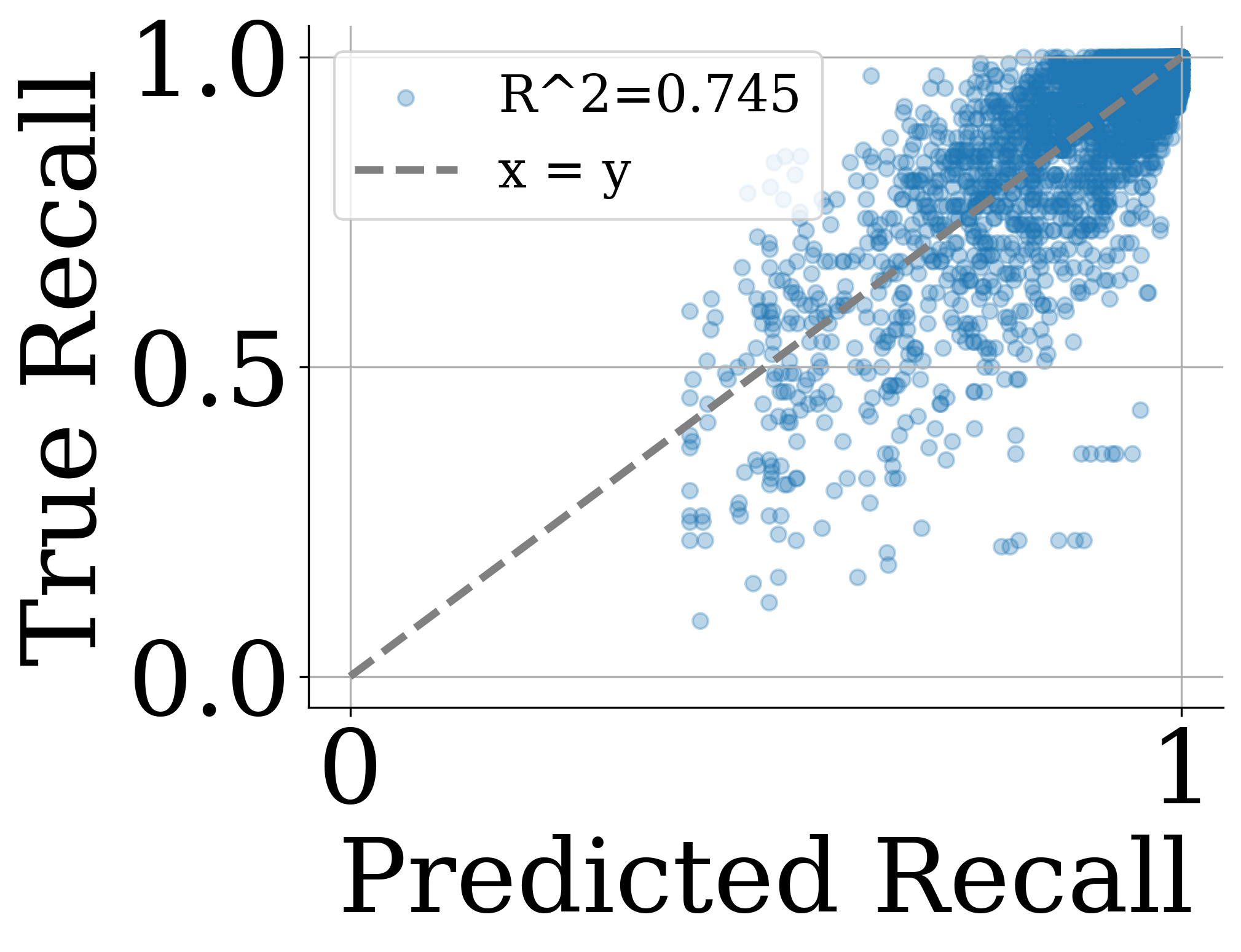}
    \caption{MNIST(D)}
\end{subfigure}
\begin{subfigure}{0.23\columnwidth}
    \includegraphics[width=\linewidth]{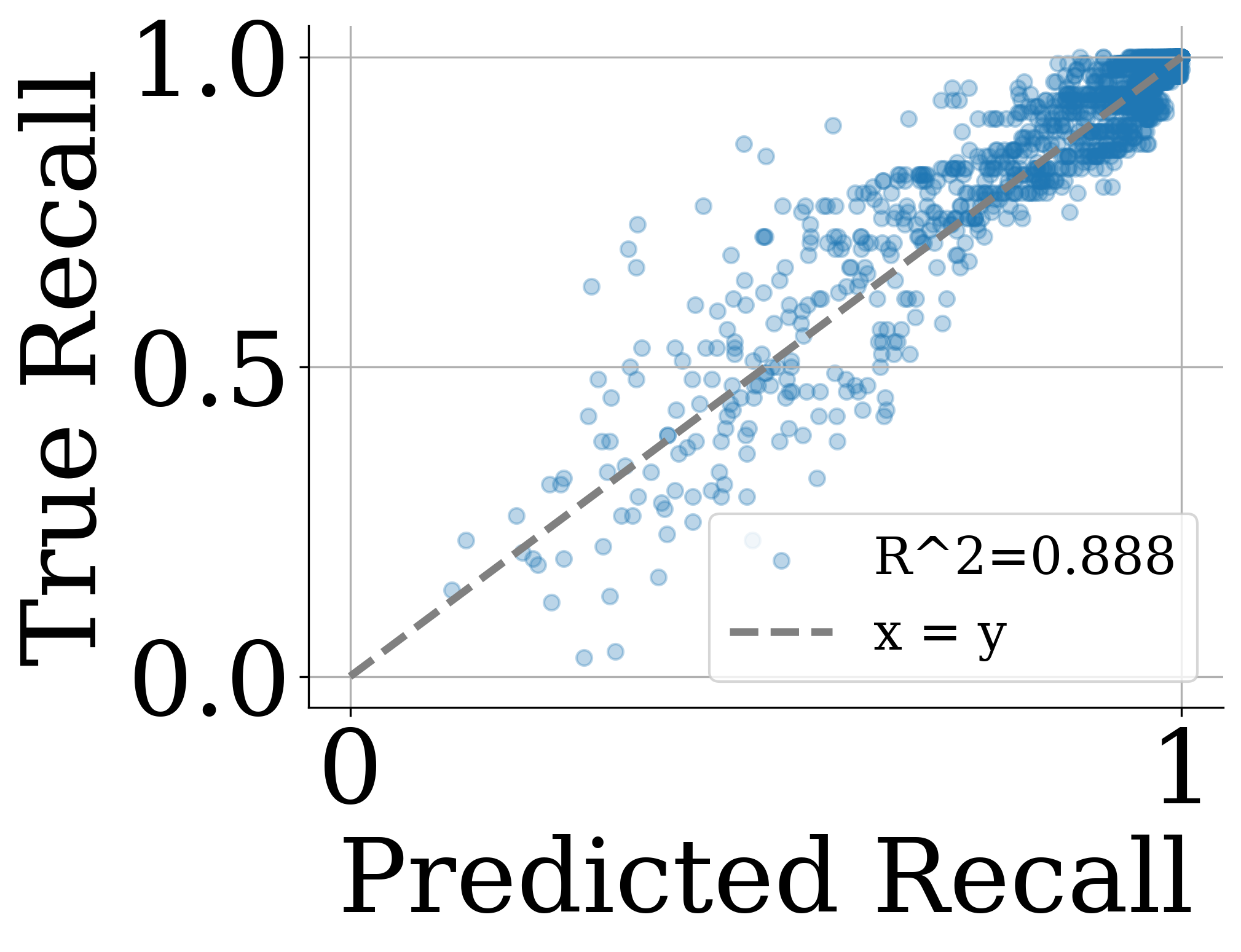}
    \caption{GIST(G)}
\end{subfigure}
\begin{subfigure}{0.23\columnwidth}
    \includegraphics[width=\linewidth]{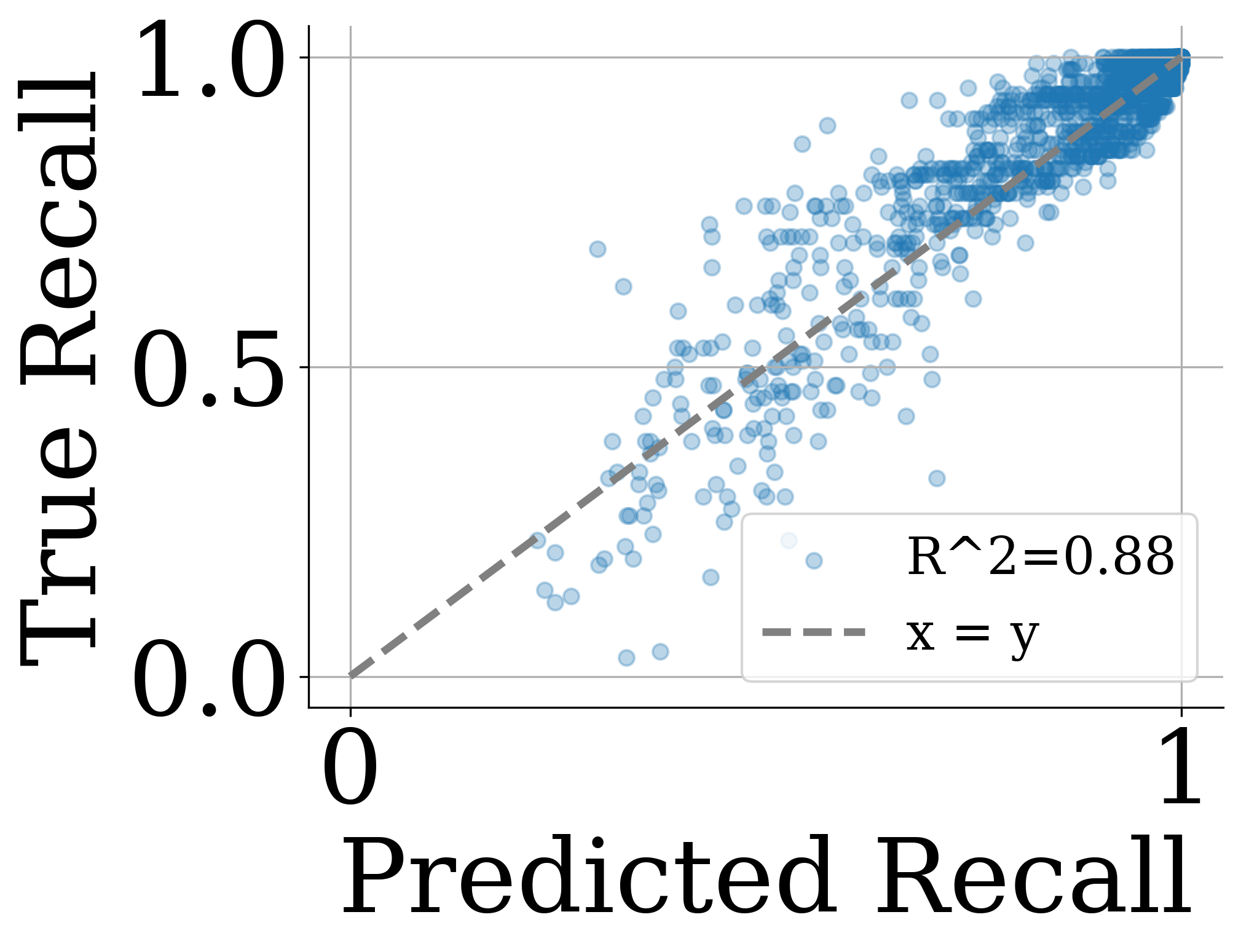}
    \caption{GIST(D)}
\end{subfigure}
\begin{subfigure}{0.23\columnwidth}
    \includegraphics[width=\linewidth]{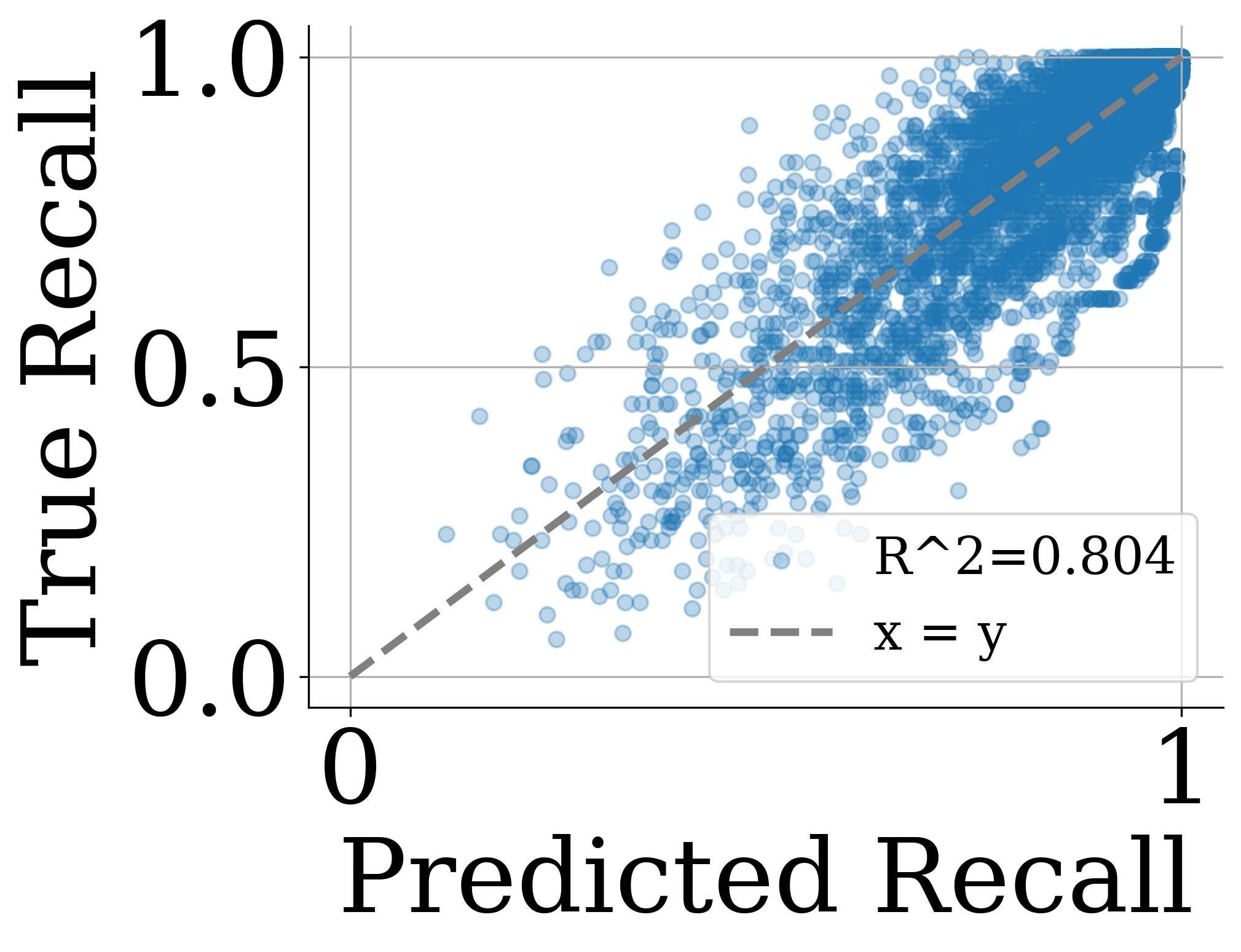}
    \caption{DEEP10M(G)}
\end{subfigure}
\begin{subfigure}{0.23\columnwidth}
    \includegraphics[width=\linewidth]{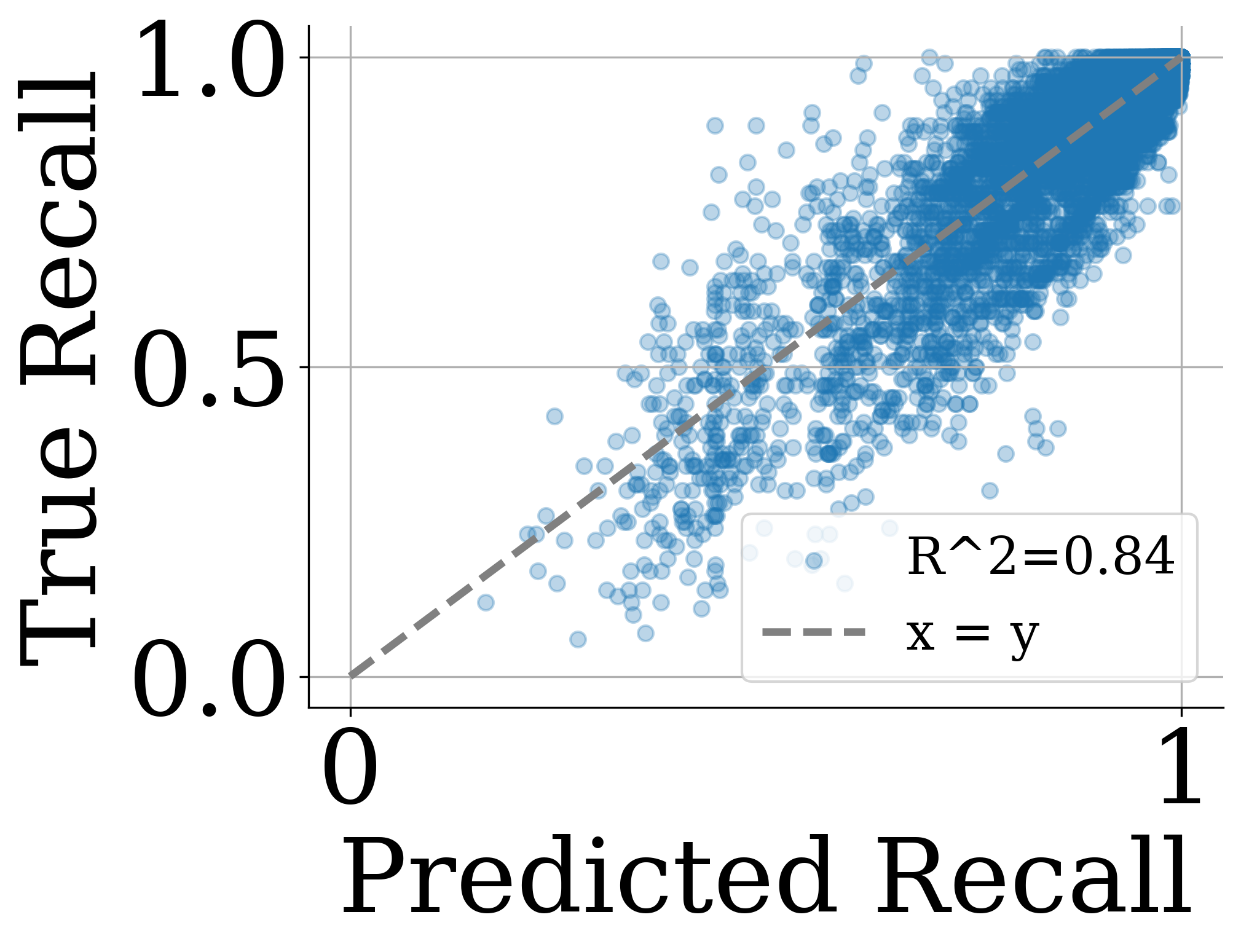}
    \caption{DEEP10M(D)}
\end{subfigure}
\caption{Prediction error analysis showing actual vs. predicted recall between GBDT (G) and DL (D) recall predictor models.}
\label{fig:prediction_error_combined}
\end{figure}

\subsection{Feature Studies} \label{sec:feature_studies}
We fix a set of nine features (Fig.\ref{fig:feature_importance}) and analyze their contributions using SHAP values\cite{shap} from the QASP-GBDT model. Cumulative cluster size and relative distance contribute most to positive recall predictions for both Euclidean and Angular datasets, though with high variance—expected since large cluster sizes and relative distances do not guarantee high recall for extremely hard queries. Feature ablation results (Table~\ref{tab:feature_ablation}) confirm that cumulative cluster size, relative distance, and local relative contrast are important across both dataset categories, while features such as jump in relative distance are more influential for angular datasets.

\begin{figure}[h]
\centering
    \includegraphics[width=\columnwidth]{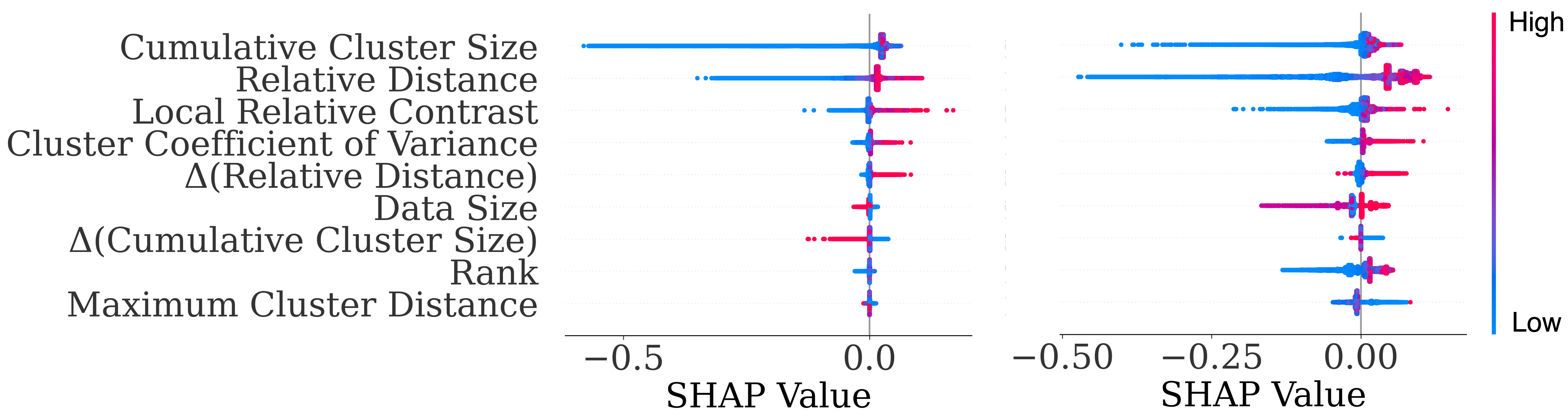}
\caption{Feature importance analysis. Left: Euclidean datasets. Right: Angular datasets.}
\label{fig:feature_importance}
\end{figure}

\begin{table}[t]
\centering
\caption{Comparison of model fit across all feature ablations for Euclidean and Angular datasets on validation set.}
\label{tab:feature_ablation}
\resizebox{\columnwidth}{!}{
\begin{tabular}{l|cc|cc}
\textbf{Feature Set} & \multicolumn{2}{c|}{\textbf{Euclidean Dataset}} & \multicolumn{2}{c}{\textbf{Angular Dataset}} \\
 & \textbf{MSE} $\downarrow$ & \textbf{R\textsuperscript{2}} $\uparrow$ & \textbf{MSE} $\downarrow$ & \textbf{R\textsuperscript{2}} $\uparrow$ \\
\hline
ALL & \textbf{0.00186} & \textbf{0.87123} & \textbf{0.00409} & \textbf{0.82741} \\
ALL \textbackslash \, Cumulative Cluster Size & 0.00193 & 0.86634 & 0.00417 & 0.82387 \\
ALL \textbackslash \, Relative Distance & 0.00246 & 0.82950 & 0.00454 & 0.80837 \\
ALL \textbackslash \, Local Relative Contrast & 0.00188 & 0.86954 & 0.00432 & 0.81743 \\
ALL \textbackslash \, Cluster Coeff. of Variance & 0.00192 & 0.86678 & 0.00413 & 0.82565 \\
ALL \textbackslash \, $\Delta$(\text{Relative Distance}) & 0.00191 & 0.86783 & 0.00419 & 0.82296 \\
ALL \textbackslash \,Data Size & 0.00188 & 0.86990 & 0.00415 & 0.82464 \\
ALL \textbackslash \, $\Delta$(\text{Cumulative Cluster Size}) & 0.00186 & 0.87096 & 0.00410 & 0.82686 \\
ALL \textbackslash \, Rank & 0.00186 & 0.87094 & 0.00416 & 0.82444 \\
ALL \textbackslash \, Maximum Cluster Distance & 0.00187 & 0.87078 & 0.00416 & 0.82407 \\
\end{tabular}
}
\end{table}

\section{Conclusion}
We introduced QASP to optimize vector search by predicting the complete recall progression curve per query via a single proactive inference, from which a search policy is derived for any recall target. QASP decouples the policy from specific targets or index configurations and enables domain adaptation with zero-shot or minimal fine-tuning. We provide theoretical guarantees, including that a finite sample suffices for convergence independent of dataset size and dimensionality, and a dominance condition where QASP's data access savings over fixed policies grow exponentially in intrinsic dimensionality. Our query variability-aware evaluation demonstrates the importance of minimizing recall variance across queries. QASP achieves significantly lower variance, lower deviation from target, and higher satisfaction rate while accessing similar or less data, with improvements most pronounced for hard queries and high recall regimes. QASP extends to hierarchical partitioning using inference-time scaling alone, achieving 99\% recall with 80\% less data access. QASP's progressive recall predictions further enable a lightweight reactive complement without additional inference.

%% The next two lines define the bibliography style to be used, and
%% the bibliography file.
\bibliographystyle{./IEEEtran}
\bibliography{references}

\section{AI Assistance Statement}
We used Claude (Anthropic, Opus 4.6) to assist with implementation of the PCE-Net baseline directly from the original paper descriptions as the original implementation was not available. Claude was also used for refining figure aesthetics and layout. All experimental design, theoretical analysis, and scientific content are solely the authors' work.

\end{document}